\newtheorem{theorem}{Theorem}
\newtheorem{lemma}[theorem]{Lemma}
\title{Online Primal Dual Meets Online Matching with Stochastic Rewards: Configuration LP to the Rescue}
\author{
    Zhiyi Huang\thanks{The University of Hong Kong. Email: zhiyi@cs.hku.hk}
    \and
    Qiankun Zhang\thanks{The University of Hong Kong. Email: qkzhang@connect.hku.hk}
}
\date{April, 2019}
\newcommand{\R}{\mathbf{R}}
\newcommand{\E}{\mathbf{E}}
\renewcommand{\Pr}{\mathbf{Pr}}
\renewcommand{\vec}[1]{\bm{#1}}
\newcommand{\alg}{\mathrm{ALG}}
\newcommand{\opt}{\mathrm{OPT}}
\begin{document}

\begin{titlepage}
\thispagestyle{empty}

\maketitle 

\begin{abstract}
    \thispagestyle{empty}
    Mehta and Panigrahi (FOCS 2012) introduce the problem of online matching with stochastic rewards, where edges are associated with success probabilities and a match succeeds with the probability of the corresponding edge.
It is one of the few online matching problems that have defied the randomized online primal dual framework by Devanur, Jain, and Kleinberg (SODA 2013) thus far.
This paper unlocks the power of randomized online primal dual in online matching with stochastic rewards by employing the configuration linear program rather than the standard matching linear program used in previous works.
Our main result is a $0.572$ competitive algorithm for the case of vanishing and unequal probabilities, improving the best previous bound of $0.534$ by Mehta, Waggoner, and Zadimoghaddam (SODA 2015) and, in fact, is even better than the best previous bound of $0.567$ by Mehta and Panigrahi (FOCS 2012) for the more restricted case of vanishing and equal probabilities.
For vanishing and equal probabilities, we get a better competitive ratio of $0.576$.
Our results further generalize to the vertex-weighted case due to the intrinsic robustness of the randomized online primal dual analysis.

\end{abstract}

\end{titlepage}

\section{Introduction}
\label{sec:introduction}

Online advertising platforms generate tens of billions of dollars of revenue per year and are the main driving application behind the research of online matching.
On such a platform, advertisers can be viewed as vertices on one side of a bipartite graph, while impressions (e.g., search queries to a search engine) can be viewed as vertices on the other side.
The former are known to the platform upfront and therefore are referred to as offline vertices, while the latter arrive online and therefore are referred to as online vertices.
On the arrival of each online vertex, the platform irrevocably matches it to an offline neighbor subject to certain constraints that vary depending on the model.

For example, the online bipartite matching model of \citet{KarpVV/STOC/1990} assumes that each online vertex can be matched to any offline vertex with whom it has an edge, which encodes, e.g., whether the impression is associated with a keyword that the advertiser is interested in, while each offline vertex can be matched at most once.
The goal is to maximize the cardinality of the matching.
The performance of an online algorithm is compared against the maximum cardinality matching in hindsight as in the standard competitive analysis, and the worst ratio with respect to (w.r.t.) an arbitrary bipartite graph is referred to as the competitive ratio of the algorithm.
\citet{KarpVV/STOC/1990} solve this model by proposing an algorithm called Ranking, which fixes a random ranking over the offline vertices at the beginning and then matches each online vertex to the unmatched neighbor with the highest rank.
They show that it is $1-\frac{1}{e} \approx 0.632$-competitive and this is the best possible.

\paragraph{Online Matching with Stochastic Rewards.}
Consider an online advertising platform on a search engine, however, the attempt to match an online vertex (i.e., a search query) and an offline vertex (i.e., an advertiser) may not succeed since the user may not click the ad.
The standard assumption in the literature is that the user will click the ad with a certain probability and this is called the click-through-rate.
If the attempt fails, the platform may try again to match the offline vertex later (the budget of the advertiser is still available), but the online vertex is gone.

To capture the above aspect of the problem, \citet{MehtaP/FOCS/2012} propose a model called online matching with stochastic rewards, where each edge is associated with a success probability.
On the arrival of an online vertex, the algorithm irrevocably matches it to an offline neighbor that has not been successfully matched yet.
Then, it succeeds with the probability associated with the edge, independent of whether previous edges are successful or not.
The objective is to maximize the number of offline vertices that are successfully matched.

\citet{MehtaP/FOCS/2012} consider two algorithms in the special case of equal probabilities, i.e., the success probabilities of edges are all equal to some $0 < p \le 1$.
The first algorithm is called Stochastic Balance, which greedily matches each online vertex to the neighbor that has the fewest match attempts in the past.
They use a factor revealing linear program (LP) to show that this algorithm is $\frac{1+e^{-2}}{2} \approx 0.567$-competitive as $p$ tends to zero, that is, the case of equal and vanishing probabilities.
The second one is a natural generalization of Ranking.
They show that it is at least $1 - \frac{2}{e} + \frac{2}{e^2} \approx 0.534$-competitive regardless of the value of $p$.%
\footnote{It in fact gets better as $p$ increases and tends to $1-\frac{1}{e} \approx 0.632$ as $p$ tends to $1$.}
\citet{MehtaP/FOCS/2012} also show an upper bounds (hardness results) of $0.588$ for Stochastic Balance, and $0.621$ for arbitrary algorithms.
In particular, the latter proves that this model is strictly harder than the online bipartite matching model of \citet{KarpVV/STOC/1990}, for which Ranking is $1-\frac{1}{e} \approx 0.632$-competitive.

Subsequently, \citet{MehtaWZ/SODA/2015} study the case of unequal and vanishing probabilities. 
That is, the success probabilities of the edges can be different but are upper bounded by some sufficiently small $0 < p \le 1$.
They propose an algorithm called Semi-Adaptive, and show that its competitive ratio tends to $0.534$ as $p$ tends to zero.

Closing the gaps between the upper and lower bounds, both for equal probabilities and for unequal probabilities, has been an interesting open problem.

\paragraph{Randomized Online Primal Dual.}
Shortly after the introduction of online bipartite matching with stochastic rewards, \citet{DevanurJK/SODA/2013} develop a unifying framework called randomized online primal dual that achieves the optimal $1-\frac{1}{e} \approx 0.632$ competitive ratio in the online bipartite matching model of \citet{KarpVV/STOC/1990} and a number of variants.
Informally speaking, it is a charging argument that splits the gain of each chosen edge between its two endpoints carefully, such that for any two neighboring vertices, the share of the gain they get in expectation is at least $1 - \frac{1}{e}$.
See Section~\ref{sec:online-primal-dual} for a formal discussion.
Given the success in the online bipartite matching model, it is natural to try applying randomized online primal dual to online matching with stochastic rewards to close the gaps between the upper and lower bounds. 
However, no competitive ratio other than the trivial $0.5$ is known using randomized online primal dual.%
\footnote{We know at least one expert in online matching who has an unsuccessful attempt in using online primal dual on online matching with stochastic rewards. The first author of this paper also tried and failed back in 2012.}
Therefore, other than being an interesting problem on its own, online matching with stochastic rewards also stands out as an important problem for understanding the power and limits of randomized online primal dual.

\subsection{Our Contributions}
\label{sec:contributions}

This paper shows how to unlock the power of randomized online primal dual in online matching with stochastic rewards, where the key is to use the configuration LP rather than the standard matching LP that was used in essentially all previous related works.
The main difference between these two LP's lies in their decision variables. 
The standard matching LP has a decision variable $x_{uv}$ for each edge $(u, v)$ that represents whether the edge is chosen by the algorithm.
In contrast, the configuration LP has for each offline vertex $u$ and each subset $S$ of its online neighbors a decision variable $x_{uS}$ that represents whether $S$ is the set of online vertices matched to $u$.

This difference in turn leads to different dual constraints, i.e., conditions that the subsequent charging arguments need to satisfy.
The charging argument using the standard matching LP has to split the gains such that for every edge $(u, v)$, the expected gain of the offline vertex $u$, scaled by the success probability $p_{uv}$ of the edge, plus the expected gain of the online vertex $v$ is at least the competitive ratio $\Gamma$ times $p_{uv}$.
In contrast, the argument using the configuration LP only needs to ensure that for each offline vertex $u$ and each subset $S$ of its online neighbors, the expected gains of $u$ and the online vertices in $S$ sum to at least $\Gamma$ times the sum of the success probabilities of the edges between $u$ and $S$.%
\footnote{For technical reasons that will become clear in the subsequent sections, we can in fact upper bound this sum of success probabilities by $1$ without loss of generality.}
The latter is a strictly weaker condition and allows for an implicit amortization among the online vertices in $S$.
See Section~\ref{sec:std-lp} for a formal discussion.

We focus on the case of vanishing probabilities and apply randomized online primal dual on the configuration LP of online matching with stochastic rewards.
We obtain better competitive ratios for the case of vanishing probabilities, for both equal and unequal probabilities.
The case of large success probabilities is left for future research.
Last but not least, our results generalize to the vertex-weighted case due to the intrinsic robustness of the randomized online primal dual analysis, while those by \citet{MehtaP/FOCS/2012} and \citet{MehtaWZ/SODA/2015} do not.
Table~\ref{tab:summary} presents a summary.

\begin{table}
    \centering
    \renewcommand{\arraystretch}{1.2}
    \begin{tabular}{|c|c|c|}
        \hline
        & Vanishing Equal Prob. & Vanishing Unequal Prob. \\
        \hline
        Best Previous Result (Unweighted) & $0.567$~\cite{MehtaP/FOCS/2012} & $0.534$~\cite{MehtaWZ/SODA/2015} \\
        \hline
        \textbf{This Paper (Vertex-weighted)} & $\bm{0.576}$ (Sec.~\ref{sec:equal-probabilities}) & $\bm{0.572}$ (Sec.~\ref{sec:unequal-probabilities}) \\
        \hline
        Hardness (Unweighted) & \multicolumn{2}{|c|}{$0.621$ ($0.588$ for Stochastic Balance)~\cite{MehtaP/FOCS/2012}} \\
        \hline
    \end{tabular}
    \caption{Summary of the results in this paper in comparison with previous ones}
    \label{tab:summary}
\end{table}

\paragraph{An Alternative Viewpoint.}
In order to present the technical contributions of this paper, we first introduce an alternative viewpoint of the problem proposed by \citet{MehtaP/FOCS/2012}, when the success probabilities tend to zero.
Instead of having for each edge an independent random variable that determines whether the edge succeeds or not, we can instead let each offline vertex $u$ independently sample a threshold $\theta_u$ from the exponential distribution with mean $1$.%
\footnote{More precisely, the thresholds are sampled from distributions that converge to the exponential distribution with mean $1$ at the limit when $p$ tends to zero. See Section~\ref{sec:reductions} for a formal discussion.}
Then, each offline vertex accumulates a load that equals the sum of the success probabilities of the edges matched to it, and succeeds at the moment the load exceeds the threshold.
Further, the expected number of successfully matched offline vertices equals the expected total loads of offline vertices.

\paragraph{Equal Probabilities.}
First consider vanishing and equal probabilities. 
We show that Stochastic Balance is at least $0.576$-competitive, improving the previous result of $0.567$ by \citet{MehtaP/FOCS/2012}. 
Further, we generalize the algorithm and its analysis to the vertex-weighted case.

\begin{theorem}
    \label{thm:equal-probabilities}
    There is a $0.576$-competitive online algorithm for (vertex-weighted) online matching with stochastic rewards with equal and vanishing probabilities.
\end{theorem}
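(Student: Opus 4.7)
The plan is to analyze Stochastic Balance (appropriately generalized to the vertex-weighted case) via the randomized online primal dual framework applied to the configuration LP, exploiting the alternative exponential-threshold viewpoint recalled in the introduction. Under that viewpoint each offline vertex $u$ independently samples $\theta_u$ from the exponential distribution with mean $1$, and the algorithm's expected objective equals the expected total ``load'' that the offline vertices accumulate up to their respective thresholds.

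First I would introduce a non-increasing gain-splitting function $g:[0,\infty)\to[0,1]$ and define the dual variables accordingly. Let $y_u$ denote the final load of $u$ (truncated at $\theta_u$), and for each online vertex $v$ matched to some $u_v$ let $y_v$ be the pre-match load of $u_v$. Set
\[
\alpha_u \;=\; w_u \int_0^{y_u} g(y)\, e^{-y}\, dy, \qquad \beta_v \;=\; w_{u_v}\bigl(1-g(y_v)\bigr)\, e^{-y_v}\cdot p ,
\]
so that in the vanishing-probability limit $\alpha_u + \sum_{v:\, u_v = u}\beta_v = w_u(1-e^{-y_u})$, which is exactly the expected contribution of $u$ to the algorithm's value. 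In expectation over $\theta_u$ this produces an unbiased charging of the algorithm's value to the dual variables.

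Next, to verify the configuration-LP dual constraint I would fix an arbitrary $u$ and a subset $S\subseteq N(u)$ of its online neighbors and show
\[
\E\bigl[\alpha_u + {\textstyle\sum_{v\in S}} \beta_v\bigr] \;\geq\; \Gamma\, w_u \min(1, p|S|),
\]
for $\Gamma = 0.576$. The structural property of Stochastic Balance that I would use is that whenever $v$ is matched to some $u_v \neq u$ the pre-match load of $u_v$ was the minimum among eligible neighbors, so $y_v$ is at most the load of $u$ at $v$'s arrival; since the load of $u$ is non-decreasing until it succeeds, this yields $y_v \le y_u$ throughout the run. A short coupling-style argument imagining the removal of $S$ then pins down a quantitative relation between $y_u$ and the multiset $\{y_v\}_{v\in S}$, which I would integrate against the distribution of $\theta_u$ to lower bound the left-hand side.

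The hardest step, and the core novelty relative to Mehta--Panigrahi, is to choose $g$ and $\Gamma$ so that the resulting worst-case inequality passes. Because the configuration LP only asks for $\min(1,p|S|)$ on the right-hand side rather than $p|S|$, the ``large $|S|$'' regime, which was the bottleneck in the matching-LP analysis, becomes slack, and one can use the freed-up slack to strengthen $g$ in the ``small load'' regime. Concretely I would set up a factor-revealing continuous program over feasible configurations $(y_u,\{y_v\})$, pass to the limit $p\to 0$ so that $\{y_v\}$ becomes a measure on $[0,\infty)$, and solve for the optimal monotone $g$ and its matching $\Gamma$. I expect the optimizer to be piecewise-defined (for instance constant below some crossover $y^\star$ and then tracking an ODE obtained from an envelope condition), with the resulting value equal to $0.576$. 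The vertex-weighted extension is then immediate because every quantity defined above scales linearly in $w_u$, which is the familiar robustness feature of randomized online primal dual.
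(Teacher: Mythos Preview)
Your dual assignment does not satisfy the primal--dual equality the framework needs. With the $e^{-y}$ factors you obtain $\alpha_u + \sum_{v:\,u_v=u}\beta_v = w_u(1-e^{-y_u})$, but in the threshold viewpoint the algorithm's per-realization contribution from $u$ is its load $w_u y_u$, not $w_u(1-e^{-y_u})$; once you have conditioned on a realization of all thresholds there is no further expectation over $\theta_u$ to take, and $y_u$ itself already depends on $\theta_u$, so the claimed ``unbiased charging'' is circular. The paper drops the $e^{-y}$ factors entirely: it sets $\alpha_u = w_u\int_0^{y_u} f(y)\,dy$ and $\beta_v = w_{u_v}\bigl(1-f(y_v)\bigr)p$ with $f$ \emph{non-decreasing} (you have the monotonicity reversed), so that the duals sum exactly to the load. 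The exponential density only enters later, when one integrates over the distribution of $\theta_u$ to lower bound $\E\bigl[\alpha_u + \sum_{v\in S}\beta_v\bigr]$.

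The more serious gap is the structural step. Your observation that $y_v \le y_u$ because Stochastic Balance picks the least-loaded neighbor is valid only while $u$ is still unsuccessful at $v$'s arrival; once $u$'s load reaches $\theta_u$, later $v\in S$ may be matched to neighbors of arbitrarily high load, so $\beta_v$ can be zero. The entire difficulty lies in the regime $\theta_u < \ell_u^\infty$, and a ``short coupling-style argument imagining the removal of $S$'' does not touch it. What the paper actually proves (Lemma~\ref{lem:equal-probabilities-structural-general}) is an alternating-path statement: comparing the run with threshold $\infty$ to the run with threshold $\theta_u$, at most $p^{-1}(\ell_u^\infty-\theta_u)$ online vertices switch from ``good'' (matched to a copy ranked above $(u,\ell_u^\infty)$) to ``bad''. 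It is this bound, amortized across all of $S$, that drives the differential inequality whose optimum is $0.576$; the $\min(1,p|S|)$ cap on the right-hand side is a minor simplification (pinning the worst case at $p_u(S)=1$), not the mechanism by which the configuration LP beats the matching LP.
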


In hindsight, we compare the structural lemmas in this paper with those used by \citet{MehtaP/FOCS/2012}.
The main difference that leads to the better competitive ratio lies in the improvement in an alternating path argument, which we briefly explain below.
Fix any offline vertex $u$ and the randomness related to offline vertices other than $u$, i.e., their thresholds.
First, consider the case when edges incident on $u$ never succeed, i.e., the threshold $\theta_u$ is sufficiently large, and suppose $\ell$ online vertices are matched to $u$ by Stochastic Balance in this case.
\citet{MehtaP/FOCS/2012} show that if the threshold $\theta_u$ was smaller and, thus, $k$-th attempt succeeded and the remaining $\ell - k$ vertices that were matched to $u$ must now be matched elsewhere or left unmatched, each of these vertices would trigger an alternating path and at most $\ell - k$ vertices would become unmatched as a result.
In contrast, our improved argument characterizes not only whether the online vertices are matched or not, but their matching qualities as well.
Here, the matching quality of an online vertex is measured by how many unsuccessful attempts has been made on the offline vertex that it matches, the smaller the better.
Concretely in the aforementioned setup, we show the followings.

\begin{itemize}
    \item \textbf{(Informal) Structural Lemma for Equal Probabilities:~} Fix any quality threshold, at most $\ell - k$ vertices change from better to worse than the threshold.
\end{itemize}

We believe this demonstrates the advantage of the randomized online primal dual framework, as it is unclear how to utilize this improved argument in the factor revealing LP approach.

\paragraph{Unequal Probabilities.}
For the more general vanishing and unequal probabilities, we introduce a randomized algorithm whose competitive ratio is at least $0.572$. 
It improves the best previous ratio of $0.534$ by \citet{MehtaWZ/SODA/2015} and, surprisingly, is even better than the best previous result of $0.567$ by \citet{MehtaP/FOCS/2012} for the more restricted case of vanishing and equal probabilities.

\begin{theorem}
    \label{thm:unequal-probabilities}
    There is a $0.572$-competitive randomized online algorithm for the problem of (vertex-weighted) online matching with stochastic rewards with vanishing and unequal probabilities. 
\end{theorem}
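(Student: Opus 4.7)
The plan is to run the randomized online primal dual framework on the configuration LP, working in the exponential-threshold viewpoint introduced earlier. Because probabilities differ across edges, pure Stochastic Balance can be fooled, so I would design a load-aware randomized algorithm that matches each arriving online vertex $v$ to an offline neighbor $u$ maximizing $w_u \cdot g(\ell_u)$, where $\ell_u$ is $u$'s current load, $g : [0, \infty) \to [0, 1]$ is a non-increasing discount function to be optimized, and independent random perturbations per offline vertex (either of the thresholds $\theta_u$ or of the score itself) supply the extra randomization. The role of $g$ is to trade off balancing against exploitation of high-weight offline vertices, and one should expect it to play a role analogous to the $1 - e^{x-1}$ weighting familiar from vertex-weighted online matching.

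Having fixed the algorithm, I would define the primal-dual split: when $v$ is matched to $u$, split the expected gain $w_u \cdot p_{uv} \cdot \Pr[\ell_u < \theta_u]$ into an offline charge $\alpha_u$ and an online charge $\beta_v$ by taking $\beta_v = w_u \cdot p_{uv} \cdot (1 - a(\ell_u))$ for a second free function $a$ and letting $\alpha_u$ collect the complementary mass, integrated against the $\mathrm{Exp}(1)$ density of $\theta_u$. By construction $\E[\sum_u \alpha_u + \sum_v \beta_v] = \E[\alg]$, so proving competitive ratio $\Gamma$ reduces to verifying the configuration-LP dual constraints: for every offline $u$ and every subset $S$ of its online neighbors with $\sum_{v \in S} p_{uv} \le 1$,
\[
    \E\bigl[\alpha_u + \textstyle\sum_{v \in S} \beta_v\bigr] \;\ge\; \Gamma \cdot w_u \cdot \textstyle\sum_{v \in S} p_{uv}.
\]

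The main technical step is to prove this family of inequalities. Fixing all randomness except $\theta_u$, I would compare the actual execution against a hypothetical one in which $u$ is absent so that vertices in $S$ cascade along alternating paths to new hosts. The structural lemma to establish is the unequal-probabilities analogue of the one sketched for the equal case: for any quality threshold, measured by the host's load at the time of matching, the total probability mass in $S$ whose quality degrades from above to below the threshold is controlled by the amount of load that $u$ itself absorbs in the actual execution. This converts losses in $\beta_v$ into gains in $\alpha_u$ and closes the inequality. Once the lemma is in hand, optimizing $g$, $a$, and $\Gamma$ becomes a variational problem balancing the regimes where $u$ succeeds early, late, or never, with $0.572$ emerging at the joint optimum.

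The hard part is exactly this structural lemma in the unequal setting. In the equal case, every displacement along an alternating path shifts a fixed unit of load, so counting vertices suffices; here a single displacement can shuttle a $p_{uv}$ of arbitrary size, and a cascade can freely mix high- and low-probability edges. The proof must therefore argue about a continuous quality potential along the alternating path, leveraging the monotonicity of $g$ and the load-aware matching rule to show that each displacement decreases quality by at most the probability mass it carries. Carrying the vertex weights $w_u$ through this argument without loosening the bound, and doing so tightly enough to yield $0.572$ rather than a weaker constant, is where the real effort lies.
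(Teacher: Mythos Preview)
Your plan has a genuine gap at its core: the alternating-path structural lemma you intend to prove for the unequal-probability setting is exactly the thing that fails. The paper devotes an entire subsection (Section~\ref{sec:unequal-probabilities-alternating-path}) to a concrete counterexample. When $v$ is displaced from $u$ and rematches to some $u'$ with $p_{u'v} = 2 p_{uv}$, the load of $u'$ jumps by $2p_{uv}$, which can force out two vertices each carrying mass $p_{uv}$; each of those can in turn rematch to neighbors where their probability doubles again, and so on. Both the number of displaced vertices and the total probability mass in play grow exponentially along the cascade, so no ``quality potential'' that is Lipschitz in the carried mass can bound the degradation. Your statement that ``each displacement decreases quality by at most the probability mass it carries'' is precisely what breaks.

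The paper sidesteps this entirely with two ideas you are missing. First, a chain of reductions (Section~\ref{sec:reductions}) shows it suffices to design a \emph{deterministic} algorithm for a \emph{fractional} matching problem with stochastic budgets; the randomized integral algorithm is then obtained by independent rounding, and concentration (Lemma~\ref{lem:maximal-bernstein}) keeps the simulated loads close to the fractional ones. Second, and more importantly, the structural lemma for the fractional algorithm (Lemma~\ref{lem:unequal-probabilities-structural}) does not track alternating paths at all. It instead proves two monotonicity invariants as $\theta_u$ decreases: the load of every other offline vertex weakly increases (Lemma~\ref{lem:load-invariant}), and every $\beta_v$ weakly decreases (Lemma~\ref{lem:unequal-probabilities-online-invariant}). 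Combining these with the identity $\sum_v \beta_v = \sum_{u'} \int_0^{\ell_{u'}} w_{u'}(1-f(z))\,dz$ gives directly that the total drop in $\sum_{v\in S}\beta_v$ is at most $\int_{\theta_u}^{\ell_u^\infty} w_u(1-f(z))\,dz$, with no cascade bookkeeping whatsoever. Note also that the score must include the edge probability, i.e., $p_{uv}\, w_u\,(1-f(\ell_u))$, not just $w_u\, g(\ell_u)$; otherwise the algorithm cannot distinguish neighbors that differ only in $p_{uv}$.
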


To understand the difficulties in solving the case of unequal probabilities, we briefly explain how the aforementioned alternating path argument breaks down completely.
Fix any offline vertex $u$ and the randomness related to offline vertices other than $u$, i.e., their thresholds. 
First, consider the set of online vertices that are matched to $u$ if it never succeeds;
let $v$ and $\tilde{v}$ be the last and second last online vertices matched to $u$.
Then, consider what would happen if $\tilde{v}$ succeeded and as a result $v$ could not be matched to $u$.
Unlike the case of equal probabilities, the changes this triggers are no longer an alternating path.
First, $v$ may switch to another offline vertex, say $u'$, with whom its success probability is twice as large as that with $u$, i.e., $p_{u'v} = 2 p_{uv}$.%
\footnote{Readers may wonder why the algorithm picks $u$ over $u'$ in the first place if the latter is more likely to succeed. This may be because $u'$ has a lot more unsuccessful attempts than $u$ has, which is an important factor in the decision makings of online algorithms including Stochastic Balance.}
If $u'$ succeeds before the change, however, having $v$ matched to $u'$ increases the load of $u'$.
As a result, the load of $u'$ may exceed the threshold earlier; 
the vertices that correspond to the last $p_{u'v} = 2 p_{uv}$ amount of load of $u'$ before the change, e.g., the last two vertices whose success probabilities with $u'$ are equal to $p_{uv}$, must be matched elsewhere.
Note that both the number of affected vertices and the sum of their success probabilities double.
As these vertices further trigger subsequent changes, the cascading effect cannot be bounded as in the aforementioned alternating path argument.

The Semi-Adaptive algorithm by \citet{MehtaWZ/SODA/2015} has two heavily interleaved components, an adaptive one and a non-adaptive one, to control the cascading effect. 
This approach, however, does not go well with the randomized online primal dual framework.

We take a completely different approach, which consists of two components.
The first one is a sequence of reductions showing that it suffices to design a deterministic algorithm for an easier fractional problem. 
In the fractional problem, the algorithm can fractionally split an online vertex among multiple offline vertices, so long as the total mass sum to at most $1$.
The increases in the loads of offline vertices are scaled accordingly.
Then, as in the aforementioned alternative viewpoint, an offline vertex succeeds at the moment that its load exceeds its threshold.
The idea is that given any deterministic algorithm for the fractional problem, referred to as the fractional algorithm, we can design a randomized algorithm for the original problem, referred to as the integral algorithm, that runs a copy of the fractional algorithm in the background, and simulates its decisions using independent rounding.
The key observation is that since the probabilities are tiny, it follows from standard concentration inequalities that with high probability the actual load of any offline vertex $u$ in the integral algorithm is close to its ``virtual load'' in the fractional algorithm in the background at all time.
We couple the thresholds in the two algorithms such that the threshold of any offline vertex $u$ in the fractional algorithm is slightly larger than that in the integral algorithm; 
so the simulation succeeds with high probability.
See Section~\ref{sec:reductions} for details of the reductions.

The second component is a competitive deterministic algorithm for the fractional problem.
The algorithm is greedy in natural that continuously assigns infinitesimal fraction of an online vertex $v$ to the offline neighbor with the highest ``score'', where the ``score'' of each offline neighbor $u$ is equal to $p_{uv} \big( 1 - f(\ell_u) \big)$ where $\ell_u$ is the current load of the offline vertex $u$, and $f(\cdot)$ is a function derived from the randomized online primal dual analysis.
If $f(\ell_u) = 1 - e^{-\ell_u}$, it becomes a fractional version of the algorithm suggested by \citet{MehtaP/FOCS/2012}.
Our analysis indicates this may not be the best option, however, and we optimize this function by solving a differential equation.

The key ingredient of our analysis is a set of invariants that are more robust than the aforementioned alternating path arguments used in previous works and in the case of equal probabilities.
Concretely, fix any offline vertex $u$, and the randomness related to the offline vertices other than $u$, i.e., their thresholds.
Then, as $u$'s threshold decreases, say, from $\theta_u$ to $\tilde{\theta}_u$, we have the following invariants about the offline and online sides respectively.

\begin{itemize}
    \item \textbf{Offline:~} The load of any offline vertex $u' \ne u$ at any given moment weakly increases.
    \item \textbf{Online:~} The dual variable of any online vertex, i.e., its share of the gain, weakly decreases.
\end{itemize}

Note that the offline invariant implies that the share that online vertices get from offline vertices other than $u$ actually weakly increases.
The only loss comes from the decrease of $u$'s threshold and, thus, its load.
Putting together with the online invariant, we get the structural lemma we need.

\begin{itemize}
    \item \textbf{(Informal) Structural Lemma for Unequal Probabilities:~} For any subset $S$ of online vertices, the expectation of the sum of their dual variables, i.e., their total share of the gain, decreases by at most the amount that online vertices, including those not in $S$, get from $u$ between load $\tilde{\theta}_u$ and $\theta_u$ before the decrease of $u$'s threshold.
\end{itemize}

This structural lemma is weaker than the one used in the case of equal probabilities.
Hence, the competitive ratio is also slightly smaller.
See Section~\ref{sec:unequal-probabilities-fractional-algorithm} for details of the fractional algorithm.

\subsection{Related Works}
\label{sec:related-works}

There is a vast literature on online matching.
Readers are referred to \citet{Mehta/FTTCS/2013} for a survey on this topic.
Below we briefly discuss those that are most relevant to this paper.

The analysis of Ranking is simplified by \citet{BirnbaumM/SIGACTNews/2008} and \citet{GoelM/SODA/2008}.
\citet{KalyanasundaramP/TCS/2000} show that a greedy algorithm, often referred to as Water-filling, achieves the optimal $1-\frac{1}{e}$ competitive ratio in online fractional matching.
\citet{MehtaSVV/FOCS/2005} study a variant called the AdWords problem and present a $1-\frac{1}{e}$ algorithm.
\citet{BuchbinderJN/ESA/2007} simplify its analysis using an online primal dual analysis.
\citet{DevanurJ/STOC/2012} further generalize this problem and give optimal online algorithms for online matching with concave returns.
\citet{AggarwalGKM/SODA/2011} investigate vertex-weighted online bipartite matching and generalize Ranking to obtain the optimal $1-\frac{1}{e}$ competitive ratio.
Most of these works, with the exception of \citet{DevanurJ/STOC/2012}, are unified under the randomized online primal dual framework by \citet{DevanurJK/SODA/2013}.

Another series of works seek to break the $1 - \frac{1}{e}$ barrier by considering random arrivals.
That is, even though the graph is arbitrary, the arrival order of the online vertices is uniformly at random.
\citet{KarandeMT/STOC/2011} and \citet{MahdianY/STOC/2011} independently study Ranking in the random arrival model and show lower bounds $0.653$ and $0.696$ respectively of the competitive ratio.
Recently, \citet{HuangTWZ/ICALP/2018} generalize the randomized online primal dual framework to handle random arrivals and give a generalization of Ranking that is $0.653$-competitive in the vertex-weighted case.
\citet{DevanurH/EC/2009} give a $1 - \epsilon$ competitive algorithm for AdWords in the random arrival model.

Much related to the random arrival model is the stochastic arrival model, where each online vertex is drawn from some known distribution independently.
It is a special case of random arrivals in the sense that any algorithm and its competitive ratio in the latter carry over to the former.
However, better results are generally possible in this model.
Building on a series of efforts by \citet{FeldmanMMM/FOCS/2009}, \citet{BahmaniK/ESA/2010}, and \citet{ManshadiGS/MOR/2012}, \citet{JailletL/MOR/2013} give the best known competitive ratio of $1 - 2 e^{-2} \approx 0.729$ for online bipartite matching with stochastic arrivals, and a slightly weaker ratio of $0.725$ for the vertex-weighted version.
\citet{HaeuplerMZ/WINE/2011} present a $0.667$-competitive algorithm for the edge-weighted problem.

Recently, there is a line of research by \citet{AshlagiBDJSS/arXiv/2018} and \citet{HuangKTWZZ/STOC/2018, HuangPTTWZ/SODA/2019} that consider a generalization of online bipartite matching called fully online matching, where the graph is not necessarily bipartite and all vertices arrive online. 
In particular, \citet{HuangKTWZZ/STOC/2018, HuangPTTWZ/SODA/2019} extend the randomized online primal dual framework to the fully online model to obtain tight competitive ratios of Ranking and Water-filling.

In concurrent and independent works, \citet{brubach2019vertex} and \citet{goyal2019online} consider variants of the online matching problem with stochastic rewards, and compare with alternative benchmarks other than the LP benchmark used in the original paper by \citet{MehtaP/FOCS/2012} and this paper.
As a result, their results are not directly comparable with ours.

\section{Preliminaries}
\label{sec:prelimaries}

\subsection{Model}
\label{sec:model}

Let $G = (U \cup V, E)$ be a bipartite graph, where $U$ and $V$ denote the left-hand-side (LHS) and the right-hand-side (RHS) of the graph respectively, and $E$ denotes the set of edges.
Each edge $(u, v)$ is associated with a success probability $0 \le p_{uv} \le 1$.
The LHS is given upfront. 
Each offline vertex $u \in U$ is associated with a nonnegative weight $w_u$.
Vertices on the RHS arrive one by one online.
On the arrival of an online vertex $v \in V$, its incident edges are released, and the algorithm irrevocably either matches it with an unmatched neighbor $u \in U$, which succeeds with probability $p_{uv}$, or leaves it unmatched. 
The goal is to maximize the expected number of successfully matched offline vertices, multiplied by their weights.

The rest of the paper uses the following nomenclature to avoid ambiguity. 
When the algorithm picks an edge $(u, v) \in E$ on the arrival of the online vertex $v$, we say that $u$ and $v$ are \emph{matched}.
Then, the match succeeds with probability $p_{uv}$ independently, and if so we say that $u$ \emph{succeeds} or is \emph{successful};
all offline vertices are \emph{unsuccessful} until the moments they succeed.

\paragraph{Alternative Viewpoint.}
We recall the alternative viewpoint of the unweighted case introduced by \citet{MehtaP/FOCS/2012}, which consists of two parts.
First, they observe that the expected number of successfully matched offline vertices equals the expected total load of offline vertices, where the load of an offline vertex $u$ is the sum of the success probabilities of its incident edges that the algorithm picks.

Second, by twisting the order of the randomness, the problem can be reinterpreted as follows.
At the beginning, each offline vertex $u$ independently draws a threshold $\theta_u$ from the exponential distribution with mean $1$.%
\footnote{More rigorously, it follows a distribution that converges to the exponential distribution with mean $1$ when $p$, the upper bound of success probabilities, tends to zero. See Section~\ref{sec:reductions} for a formal treatment of this matter.}
Then, each offline vertex $u$ succeeds at the moment its load exceeds the threshold $\theta_u$.
That is, we may interpret $\theta_u$ as a stochastic budget of $u$, whose realization is unknown to the algorithm until when the load exceeds the budget (if it happens at all).


\paragraph{Benchmark.}
Fix any instance.
The randomized matching given by any online algorithm can be viewed as a feasible solution to the following standard matching LP relaxation of the problem by letting $x_{uv}$ be the probability that edge $(u, v)$ is picked.
\begin{align}
    \textbf{StdLP:} \qquad \textrm{maximize} \quad & 
    \textstyle \sum_{(u,v) \in E} w_u \cdot p_{uv} \cdot x_{uv} \notag \\
	\textrm{subject to} \quad & 
    \textstyle \sum_{v : (u,v) \in E} p_{uv} \cdot x_{uv} \le 1 && \forall u \in U \label{eqn:stdlp-budget} \\
	& 
    \textstyle \sum_{u : (u,v) \in E} x_{uv} \le 1 && \forall v \in V \label{eqn:stdlp-capacity} \\
    & x_{uv} \ge 0 && \forall (u, v) \in E \label{eqn:stdlp-trivial}
\end{align}

Eqn.~\eqref{eqn:stdlp-budget} states that the expected load of each offline vertex $u$ is upper bounded by $1$, because it is upper bounded by the expectation of the threshold $\theta_u$, which equals $1$.
Eqn.~\eqref{eqn:stdlp-capacity} is a standard capacity constraint: each online vertex $v$ can be matched to at most one offline neighbor.
Following the vertex-weighted model of \citet{MehtaP/FOCS/2012}, we will consider the optimal objective of the above LP relaxation as the benchmark.

\subsection{Configuration LP}

Let $N_u$ denote the set of neighbors of an offline vertex $u$.
Further, for any $S \subseteq N_u$, let $p_u(S) = \min \{ \sum_{v \in S} p_{uv}, 1 \}$.
Consider the following offline-side configuration LP: 
\begin{align*}
    \textbf{ConfigLP:} \qquad \textrm{maximize} \quad & 
    \textstyle \sum_{u \in U} \sum_{S \subseteq N_u} w_u \cdot p_u(S) \cdot x_{uS} \\
    \textrm{subject to} \quad & 
    \textstyle \sum_{S \subseteq N_u} x_{uS} \le 1 & & \forall u \in U \\
    & 
    \textstyle \sum_{u \in U} \sum_{S \subseteq N_u : v \in S} x_{uS} \le 1 & & \forall v \in V \\
    & x_{uS} \ge 0 & & \forall u \in U, \forall S \subseteq N_u
\end{align*}

From an offline optimization viewpoint, as $p$, the upper bound of success probabilities, tends to zero, the configuration LP is equivalent to the standard matching LP in the following sense. 
First, the matching LP can be reinterpreted as to maximize $\sum_{u \in U} w_u \cdot \min \big\{ \sum_{v : (u, v) \in E} p_{uv} x_{uv}, 1 \big\}$ subject to only Eqn.~\eqref{eqn:stdlp-capacity} and \eqref{eqn:stdlp-trivial}.
On one hand, any feasible assignment of the configuration LP corresponds to a feasible assignment of the matching LP with $x_{uv} = \sum_{S \in N_u : v \in S} x_{uS}$.
Comparing the objectives of the LPs, the matching LP objective is weakly larger because $\min \{ z, 1 \}$ is concave.

On the other hand, given any feasible assignment of the matching LP, we can sample an integral matching via independent rounding, i.e., match each online vertex $v$ to a neighbor that is independently sampled according to $x_{uv}$'s.
Then, it gives an integral assignment of the configuration LP: $x_{uS} = 1$ for the subset $S$ of neighbors matched to $u$ in the independent rounding, and $0$ otherwise.
Further, by standard concentration inequalities (and union bound), the load of each vertex $u$ is at most its expected load, which is at most $1$ due to Eqn.~\eqref{eqn:stdlp-budget}, plus an additive error that diminishes as $p$ tends to zero.
Hence, the optimal of the configuration LP is at least that of the matching LP less a term that diminishes as $p$ tends to zero.

The corresponding dual LP of the configuration LP is the following:
\begin{align*}
    \textbf{Dual:} \qquad \textrm{minimize} \quad & 
    \textstyle \sum_{u \in U} \alpha_u + \sum_{v \in V} \beta_v \\
	\textrm{subject to} \quad & 
    \textstyle \alpha_u + \sum_{v \in S} \beta_v \ge w_u \cdot p_u(S) & & \forall u \in U, \forall S \subseteq N_u \\
	& \alpha_u, \beta_v \ge 0 & & \forall u \in U, \forall v \in V
\end{align*}

\subsection{Randomized Online Primal Dual}
\label{sec:online-primal-dual}

Next, we explain the randomized online primal dual framework of \citet{DevanurJK/SODA/2013}, instantiated with the above configuration LP.
We set the primal variables according to the matching decisions of the algorithm.
Further, we also maintain a dual assignment.
On the arrival of an online vertex $v$, there is a new dual variable $\beta_v$ and a new set of dual constraints related to $v$; 
we assign a value to $\beta_v$ and adjust the values of $\alpha_u$'s according to the matching decisions of the algorithm.
The next lemma summarizes a set of sufficient conditions for proving a competitive ratio of $\Gamma$.
\begin{lemma}
\label{lem:randomized-primal-dual}
    An algorithm is $\Gamma$-competitive if:
    \begin{enumerate}
        \item \emph{Equal primal and dual objectives:}
        \[
            \textstyle
            \sum_{u \in U} \sum_{S \subseteq N_u} p_u(S) \cdot w_u \cdot x_{uS} = \sum_{u \in U} \alpha_u + \sum_{v \in V} \beta_v 
            ~;
        \]
        \item \emph{Approximate dual feasibility:} For any $u \in U$ and any $S \subseteq N_u$, 
        \[
            \textstyle
            \E \big[ \alpha_u + \sum_{v \in S} \beta_v \big] \geq \Gamma \cdot w_u \cdot p_u(S) 
            ~.
        \]
    \end{enumerate}
\end{lemma}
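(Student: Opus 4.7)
The plan is to run the textbook online primal-dual argument at the level of expectations, using the primal-dual pair (ConfigLP, Dual) already written down. I would start by specifying how the random primal assignment is induced by the algorithm: for each realization of the online process, set $x_{uS}$ equal to the indicator that $S$ is exactly the set of online vertices the algorithm attempts to match to the offline vertex $u$. Under the alternative viewpoint recalled in Section~\ref{sec:model}, a short conditioning argument shows $\E[L_u] = \Pr[u \text{ succeeds}]$ for every $u$, where $L_u$ is the final load, so
\[
    \E[\alg] \;=\; \E \Big[ \textstyle \sum_u w_u \min\{L_u, 1\} \Big] \;=\; \E \Big[ \textstyle \sum_u \sum_S w_u \, p_u(S) \, x_{uS} \Big] \pm o(1),
\]
the additive $o(1)$ term accounting for the rare event, under vanishing $p$, that $L_u$ overshoots $1$. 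This ties the algorithm's gain to the primal objective of ConfigLP.

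The next step is the engine of the argument. Condition~1 of the hypothesis equates the random primal and dual objectives pointwise, so taking expectations,
\[
    \E \Big[ \textstyle \sum_u \sum_S w_u \, p_u(S) \, x_{uS} \Big] \;=\; \textstyle \sum_{u} \E[\alpha_u] + \sum_{v} \E[\beta_v].
\]
To use condition~2, I would consider the deterministic dual assignment $\bar\alpha_u := \E[\alpha_u]/\Gamma$ and $\bar\beta_v := \E[\beta_v]/\Gamma$. Taking expectations in condition~2 and dividing by $\Gamma$ yields $\bar\alpha_u + \sum_{v \in S} \bar\beta_v \;\geq\; w_u \, p_u(S)$ for every $u \in U$ and $S \subseteq N_u$; nonnegativity is inherited from the algorithm's assignment. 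Hence $(\bar\alpha, \bar\beta)$ is feasible for Dual.

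The final step is weak LP duality: the dual objective evaluated at $(\bar\alpha, \bar\beta)$ upper-bounds the optimum of ConfigLP, and by the equivalence between ConfigLP and StdLP established earlier in this section, the latter is within $o(1)$ of the benchmark as $p \to 0$. Chaining everything together yields
\[
    \E[\alg] \;\geq\; \Gamma \cdot \opt(\mathrm{ConfigLP}) \;\geq\; \Gamma \cdot \opt(\mathrm{StdLP}) - o(1),
\]
which is exactly the competitive guarantee in the vanishing-probabilities regime targeted by the paper.

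I do not foresee a genuine obstacle: the proof is an expectation-level replay of the standard randomized primal-dual template of \citet{DevanurJK/SODA/2013}. The only place that warrants a little care is the first step, where the $\min\{\cdot, 1\}$ truncation inside $p_u(S)$ must be reconciled with the true algorithmic gain. Since individual probabilities vanish, each load overshoots $1$ by at most $p$, and this slack is absorbed into the same $o(1)$ term already present when passing from ConfigLP to StdLP. Everything else reduces to linearity of expectation and weak duality.
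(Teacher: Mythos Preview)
Your proposal is correct and follows essentially the same route as the paper: define scaled expected duals $\bar\alpha_u = \E[\alpha_u]/\Gamma$, $\bar\beta_v = \E[\beta_v]/\Gamma$, use condition~2 for feasibility, condition~1 to equate dual and primal objectives, and finish with weak duality. The paper's proof is a terse three-line version of the same argument; your extra care about how the primal $x_{uS}$'s are induced from the algorithm and about the $o(1)$ slack from the $\min\{\cdot,1\}$ truncation and the ConfigLP/StdLP equivalence is sound but is handled elsewhere in the paper rather than inside this lemma's proof.
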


The first condition states that the primal and dual objectives are equal.
In fact, we will ensure this by having the same amount of increments in the primal and dual objectives at all time.
Hence, the dual assignment can be viewed as a scheme of splitting the gain $p_{uv} \cdot w_u$ of each edge $(u,v)$ chosen by the algorithm among the vertices. 

\begin{proof}
    Let $\tilde{\alpha}_u = \Gamma^{-1} \cdot \E \big[\alpha_u\big]$ for all $u \in U$, and $\tilde{\beta}_v = \Gamma^{-1} \cdot \E\big[\beta_v\big]$ for all $v \in V$. 
    By the second condition, $\tilde{\alpha}_u$'s and $\tilde{\beta}_v$'s are a feasible dual assignment.
    Further, the first condition implies that the corresponding dual objective equals $\Gamma^{-1}$ times the expected objective of the algorithm. 
    Finally, by weak duality of LP, the objective of any feasible dual assignment is weakly larger than the optimal primal.
    Hence, the expected objective of the algorithm is at least a $\Gamma$ fraction of the optimal. 
\end{proof}

\subsection{Maximal-Bernstein-style Inequality}

Our argument for the case of vanishing and unequal probabilities will use the following maximal-Bernstein-style theorem, which follows, e.g., as a special case of Theorem 3 from \citet{Shao/JTP/2000}.

\begin{lemma}
    \label{lem:maximal-bernstein}
    Let $X_1, X_2, \dots, X_N$ be independent random variables with zero means and $|X_i| \le a$ for all $i \in [n]$.
    Further, suppose $M \ge 0$ satisfies that $\sum_{i=1}^N \E \big[ X_i^2 \big] \le M$.
    Then, for any $t \ge 0$:
    \[
        \Pr \bigg[ \max_{1 \le n \le N} \sum_{i=1}^n X_i \ge t \bigg] \le 2 \exp \bigg( - \frac{t^2}{4at + M} \bigg)
        ~.
    \]
\end{lemma}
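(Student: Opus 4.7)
The plan is the standard exponential-moment / Doob-maximal-inequality combination, specialized to bounded summands to yield a Bernstein-type deviation bound. For a parameter $\lambda > 0$ to be chosen later, set $S_n = \sum_{i=1}^n X_i$ and $M_n = e^{\lambda S_n}$. Independence, $\E[X_i] = 0$, and Jensen's inequality give $\E[M_{n+1} \mid \mathcal{F}_n] = M_n \cdot \E[e^{\lambda X_{n+1}}] \ge M_n$, so $(M_n)$ is a nonnegative submartingale. Since $x \mapsto e^{\lambda x}$ is increasing, Doob's maximal inequality yields
\[
    \Pr\bigl[\, \max_{1 \le n \le N} S_n \ge t \,\bigr] \le e^{-\lambda t} \cdot \E[M_N] = e^{-\lambda t} \prod_{i=1}^N \E[e^{\lambda X_i}].
\]

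To bound each MGF, I would exploit $\E[X_i] = 0$ and the crude moment bound $\E[X_i^k] \le a^{k-2} \E[X_i^2]$ for $k \ge 2$, expanding the exponential to get $\E[e^{\lambda X_i}] \le \exp\!\bigl(\lambda^2 \E[X_i^2] \cdot h(\lambda a)\bigr)$ for an explicit nondecreasing function $h$ (a standard choice is $h(x) = (e^x - 1 - x)/x^2$). Multiplying over $i$ and using $\sum_i \E[X_i^2] \le M$ reduces the problem to minimizing $-\lambda t + \lambda^2 M \cdot h(\lambda a)$ in $\lambda$. Taking $\lambda = t / (M + c \, a t)$ for an appropriate constant $c$ and simplifying produces an estimate of the form $\exp\!\bigl(-t^2 / (c' a t + M)\bigr)$, which after tuning $c$ to make $c' \le 4$ matches the stated exponent.

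The main obstacle is not conceptual but cosmetic: nailing down the constants $4$ and $1$ in the denominator, and the prefactor $2$ in front of the exponential. The factor of $2$ arises from the slightly sharper maximal form used by \citet{Shao/JTP/2000}, which can be obtained by a L\'evy-type first-passage decomposition — on $\{\max_n S_n \ge t\}$ let $\tau$ be the first passage time, then separately bound the contributions from the event that the residual sum $S_N - S_\tau$ is nonnegative versus negative using a symmetrization step on the latter. Since the statement explicitly cites Theorem~3 of Shao (2000), the cleanest presentation is simply to invoke that theorem, with the above sketch serving to reassure the reader that it is the expected Bernstein-plus-Doob argument specialized to independent, bounded, zero-mean summands.
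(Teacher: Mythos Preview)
Your proposal is correct and matches the paper's treatment: the paper does not actually prove this lemma but simply states that it follows as a special case of Theorem~3 of \citet{Shao/JTP/2000}, which is exactly what you recommend in your final paragraph. Your sketch of the Doob--Bernstein argument is a nice addition, but the paper itself offers no proof beyond the citation.
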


\section{Equal Probabilities}
\label{sec:equal-probabilities}

This section considers the special case of vanishing and equal probabilities.
That is, we assume $p_{uv} = p$ for any edge $(u, v)$, and analyze the competitive ratio at the limit when $p$ tends to zero.

Concretely, recall the alternative viewpoint.
By changing the order of randomness, we can let each offline vertex $u$ independently sample a threshold $\theta_u$ that falls into interval $\big[ (i-1) p, ip \big)$ with probability $p (1 - p)^{i-1}$ for all positive integers $i$.  
Then, let $\ell_u$ denote the load of an offline vertex $u$, i.e., the sum of success probabilities of edges matched to $u$;
$u$ succeeds when its load exceeds its threshold.
Note that this distribution converges to the exponential distribution with mean $1$.
This section directly considers this distribution at the limit for ease of presentation.
See Section~\ref{sec:reductions} for a more formal treatment of this matter in the more general case of unequal probabilities.

As a warm-up, especially for readers who are unfamiliar with the randomized online primal dual analysis, we first consider the unweighted case, i.e., $w_u = 1$ for all $u \in U$. 
Section~\ref{sec:equal-probabilities-unweighted} presents an improved analysis of the Stochastic Balance algorithm by \citet{MehtaP/FOCS/2012}, which we include below for completeness.
The improved competitive ratio matches that stated in Theorem~\ref{thm:equal-probabilities}.

\begin{algorithm}
    \label{alg:equal-probabilities}
    \caption{Stochastic Balance}
    \For{each online vertex $v \in V$}{
        match $v$ to the unsuccessful neighbor with the least load, breaking ties lexicographically
    }
\end{algorithm}

Then, Section~\ref{sec:equal-probabilities-weighted} proves Theorem~\ref{thm:equal-probabilities} by generalizing the algorithm and the online primal dual analysis to the vertex-weighted case.

\subsection{Unweighted Case}
\label{sec:equal-probabilities-unweighted}

In the unweighted setting, i.e., $w_u = 1$ for any $u \in U$, the configuration LP and its dual become:
\begin{align*}
    \textbf{ConfigLP:} \qquad \textrm{maximize} \quad & 
    \textstyle \sum_{u \in U} \sum_{S \subseteq N_u} p_u(S) \cdot x_{uS} \\
    \textrm{subject to} \quad & 
    \textstyle \sum_{S \subseteq N_u} x_{uS} \le 1 & & \forall u \in U \\
    & 
    \textstyle \sum_{u \in U} \sum_{S \subseteq N_u : v \in S} x_{uS} \le 1 & & \forall v \in V \\
    & x_{uS} \ge 0 & & \forall u \in U, \forall S \subseteq N_u \\[2ex]
    \textbf{Dual:} \qquad \textrm{minimize} \quad & 
    \textstyle \sum_{u \in U} \alpha_u + \sum_{v \in V} \beta_v \\
	\textrm{subject to} \quad & 
    \textstyle \alpha_u + \sum_{v \in S} \beta_v \ge p_u(S) & & \forall u \in U, \forall S \subseteq N_u \\
	& \alpha_u, \beta_v \ge 0 & & \forall u \in U, \forall v \in V
\end{align*}


\subsubsection{Dual Assignment}
\label{sec:equal-probabilities-dual-unweighed}

We now explain how to maintain a dual assignment.
Recall that $\ell_u$ denotes the load of an offline vertex $u \in U$. 
Note that $u$'s load increases over time; let it be the current load in the following discussion. 
For some non-decreasing function $f : \R^+ \mapsto \R^+$ to be determined in the analysis, we maintain the dual assignment as follows.

\begin{enumerate}
    \item Initially, let $\alpha_u = 0$ for all $u \in U$.
    \item On the arrival of an online vertex $v \in V$ and, thus, the corresponding dual variable $\beta_v$:
    \begin{enumerate}
        \item If $v$ is matched to $u \in U$, increase $\alpha_u$ by $p f(\ell_u)$ and let $\beta_v = p \big( 1 - f(\ell_u) \big)$. 
        \item If $v$ has no unsuccessful neighbor, let $\beta_v = 0$.
    \end{enumerate}
\end{enumerate}

Let $F(\ell) = \int_0^\ell f(z) dz$ denote the integral of $f$.
Note that we have $\alpha_u = F(\ell_u)$ in the limit as $p$ tends to zero.

\subsubsection{Randomized Primal Dual Analysis}
\label{sec:equal-probabilities-primal-dual-unweighted}

Note that the dual assignment, in particular, step 2, splits the gain of $p$ from each matched edge to the dual variables of the two endpoints. 
Hence, the first condition in Lemma~\ref{lem:randomized-primal-dual} holds by definition.
The rest of this subsection shows that the second condition in the lemma holds with the desired competitive ratio $\Gamma$, i.e., for any $u$ and any $S \subseteq N_u$, over the randomness of the thresholds:
\[
    \E \bigg[\alpha_u + \sum_{v\in S} \beta_v \bigg] \geq \Gamma \cdot p_u(S)
    ~.
\]
    

Similar to the randomized online primal dual analysis of online matching problems in previous works, we will fix the randomness related to offline vertices other than $u$, i.e., their thresholds, denoted as  $\vec{\theta}_{-u}$.
We will show the inequality over the randomness of $u$'s threshold $\theta_u$ alone.

Then, first consider what happens if $\theta_u$ is sufficiently large so that vertex $u$ remains unsuccessful throughout the algorithm; 
we abuse notation and denote this case by $\theta_u = \infty$.
Let $\ell_u^\infty$ denote the load of $u$ in this case.
Next, we characterize the matching related to $u$ and online vertices in $S$ for different realization of the threshold $\theta_u$ of $u$.
\begin{itemize}
    \item \textbf{Case 1: $\theta_u \ge \ell_u^\infty$.~}
        In this case, the matching is the same as the $\theta_u = \infty$ case since $u$ has enough budget to accommodate all the load matched to it.  
    \begin{itemize}
        \item The load of $u$ equals $\ell_u^\infty$.
        \item All online vertices in $S$ are matched to offline neighbors with load at most $\ell_u^\infty$ at the time of the matches, since $u$ is available with load at most $\ell_u^\infty$.
    \end{itemize}
    \item \textbf{Case 2: $0 \le \theta_u < \ell_u^\infty$.~}
        In this case, $u$ can only accommodate the load up to its threshold. 
        Some online vertices that were matched to $u$ when $\theta_u = \infty$ need to be matched elsewhere.
    \begin{itemize}
        \item The load of $u$ equals $\theta_u$.
        \item All online vertices in $S$, except for up to $p^{-1} \big( \ell_u^\infty - \theta_u \big)$ of them, are matched to offline neighbors with load at most $\ell_u^\infty$ at the time of the matches (Lemma~\ref{lem:equal-probabilities-structural}). 
    \end{itemize}
\end{itemize}

The above characterization is identical to that in \citet{MehtaP/FOCS/2012} on the offline side, but is better on the online side. 
\citet{MehtaP/FOCS/2012} only analyze whether the online vertices are matched or not, while we further inspect their matching qualities in terms of the load of the neighbors that they match to (at the time of the matches).
The first three bullets follow directly from the definition of the algorithm.
The fourth one is the main structural lemma whose proof is deferred to Section~\ref{sec:equal-probabilities-structural}.

\paragraph{Contribution from $\alpha_u$.}
%
We now proceed to prove approximate dual feasibility using the above characterization. 
First, consider the contribution from the offline vertex $u$.

\begin{lemma}
    \label{lem:alpha-equal-prob}
    For any thresholds $\vec{\theta}_{-u}$ of offline vertices other than $u$ and the corresponding $\ell_u^\infty$:
    \[
        \textstyle
        \E _{\theta_u} \big[ \alpha_u | \vec{\theta}_{-u} \big] \geq \int_0^{\ell_u^\infty} e^{-\theta_u} f(\theta_u) d\theta_u 
        ~.
    \]
\end{lemma}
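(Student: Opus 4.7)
The plan is to compute $\E_{\theta_u}[\alpha_u \mid \vec{\theta}_{-u}]$ exactly using the case analysis for $\ell_u$ already laid out, and then reshape the resulting integral via integration by parts to match the right-hand side. In fact, I expect to get equality, not just the claimed inequality.

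First, I would recall that in the vanishing-probability limit the dual update rule gives $\alpha_u = F(\ell_u)$, where $F(\ell) = \int_0^\ell f(z)\,dz$. Combining this with the two-case characterization of $\ell_u$ stated just before the lemma ($\ell_u = \ell_u^\infty$ when $\theta_u \ge \ell_u^\infty$, and $\ell_u = \theta_u$ when $0 \le \theta_u < \ell_u^\infty$), together with the fact that $\theta_u$ is drawn from the standard exponential distribution (density $e^{-\theta_u}$ on $[0,\infty)$, tail mass $e^{-\ell_u^\infty}$ above $\ell_u^\infty$), I would write
\[
    \E_{\theta_u}\bigl[\alpha_u \mid \vec{\theta}_{-u}\bigr]
    = \int_0^{\ell_u^\infty} e^{-\theta_u}\, F(\theta_u)\, d\theta_u \;+\; e^{-\ell_u^\infty}\, F(\ell_u^\infty).
\]

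Next, I would apply integration by parts to the first integral, taking the antiderivative of $e^{-\theta_u}$ to be $-e^{-\theta_u}$ and differentiating $F$ to get $f$. This yields
\[
    \int_0^{\ell_u^\infty} e^{-\theta_u}\, F(\theta_u)\, d\theta_u
    = -e^{-\ell_u^\infty}\, F(\ell_u^\infty) + F(0) + \int_0^{\ell_u^\infty} e^{-\theta_u}\, f(\theta_u)\, d\theta_u,
\]
and since $F(0) = 0$ the boundary terms cancel the separate $e^{-\ell_u^\infty} F(\ell_u^\infty)$ contribution exactly, leaving $\int_0^{\ell_u^\infty} e^{-\theta_u} f(\theta_u)\, d\theta_u$ as claimed.

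There is no real obstacle here: the only subtle point is to be careful that $\alpha_u$ depends on the load $\ell_u$ at the \emph{end} of the algorithm, which under the alternative viewpoint is exactly $\min(\theta_u, \ell_u^\infty)$; once this is spelled out, the identity is a one-line integration by parts. The reason the lemma is stated as an inequality rather than an equality is presumably for uniformity with the weighted generalization in Section~\ref{sec:equal-probabilities-weighted}, where only the $\ge$ direction is needed in the subsequent approximate-dual-feasibility argument.
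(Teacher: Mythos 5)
Your proof is correct and follows essentially the same route as the paper's: apply the two-case characterization of $\ell_u$ to write $\E_{\theta_u}[\alpha_u \mid \vec{\theta}_{-u}]$ as $\int_0^{\ell_u^\infty} e^{-\theta_u} F(\theta_u)\,d\theta_u + e^{-\ell_u^\infty} F(\ell_u^\infty)$, then integrate by parts. Your observation that the statement actually holds with equality is also accurate; the paper writes $\geq$ in the first step, presumably for robustness (and, as you note, uniformity with the weighted variant), but the argument is otherwise identical.
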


\begin{proof} 
    By the above characterization, $u$'s load equals $\ell_u^\infty$ when $\theta_u \ge \ell_u^\infty$, and equals $\theta_u$ when $0 \le \theta_u < \ell_u^\infty$.
    Further, recall that $\alpha_u = F(\ell_u)$ and that $\theta_u$ follows the exponential distribution with mean $1$.
    We get that:
    \begin{align*}
        \E _{\theta_u} \big[ \alpha_u | \vec{\theta}_{-u} \big] 
        &
        \textstyle
        \geq \int_0^{\ell_u^{\infty}} e^{-\theta_u} F(\theta_u) d\theta_u + e^{-\ell_u^{\infty}} F(\ell_u^{\infty}) \\
        &
        \textstyle
        = \int_0^{\ell_u^\infty} e^{-\theta_u} f(\theta_u) d\theta_u 
        ~.
    \end{align*}

    Here, the equality follows by integration by parts.
\end{proof}

\paragraph{Contribution from $\beta_v$'s.}
Next, consider the contribution from the online vertices $v \in S$.
For ease of notations, we let $z^+$ denote $\max \{ z, 0 \}$.

\begin{lemma}
    \label{lem:beta-equal-prob}
    For any thresholds $\vec{\theta}_{-u}$ of offline vertices other than $u$ and the corresponding $\ell_u^\infty$:
    \[
        \textstyle
        \E_{\theta_u} \big[ \sum_{v \in S} \beta_v | \vec{\theta}_{-u} \big] 
        \geq \left( e^{-\ell_u^{\infty}} p_u(S) + \int_0^{\ell_u^{\infty}} e^{-\theta_u} \big( p_u(S) - (\ell_u^{\infty}-\theta_u) \big)^+ d\theta_u \right) \big( 1 - f(\ell_u^{\infty}) \big) 
        ~.
    \]
\end{lemma}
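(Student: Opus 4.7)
The plan is to bound $\sum_{v \in S} \beta_v$ pointwise in $\theta_u$ using the structural characterization stated just before the lemma, and then integrate against the density of $\theta_u$. By the dual assignment rule, whenever an online vertex $v \in S$ is matched to some offline neighbor whose load at the time of the match is at most $\ell_u^{\infty}$, the monotonicity of $f$ gives $\beta_v \ge p\bigl(1 - f(\ell_u^{\infty})\bigr)$; every other $v \in S$ still contributes $\beta_v \ge 0$. Call a vertex of $S$ \emph{good} if it falls into the first category.

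The structural characterization tells us exactly how many vertices of $S$ are good. When $\theta_u \ge \ell_u^{\infty}$, all $|S|$ of them are; when $0 \le \theta_u < \ell_u^{\infty}$, at least $\max\{|S| - p^{-1}(\ell_u^{\infty} - \theta_u),\, 0\}$ are. Invoking the standing WLOG assumption that the dual constraint need only be verified for subsets with $p\,|S| \le 1$, so that $p_u(S) = p\,|S|$, and multiplying the good-count by $p\bigl(1 - f(\ell_u^{\infty})\bigr)$, I obtain the pointwise bounds
\[
    \sum_{v \in S} \beta_v \ \ge\ p_u(S)\bigl(1 - f(\ell_u^{\infty})\bigr) \qquad \text{if } \theta_u \ge \ell_u^{\infty},
\]
and
\[
    \sum_{v \in S} \beta_v \ \ge\ \bigl(p_u(S) - (\ell_u^{\infty} - \theta_u)\bigr)^{+}\bigl(1 - f(\ell_u^{\infty})\bigr) \qquad \text{if } 0 \le \theta_u < \ell_u^{\infty}.
\]

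To finish, I would take expectation over $\theta_u \sim \mathrm{Exp}(1)$, whose density on $[0, \ell_u^{\infty})$ is $e^{-\theta_u}$ and whose tail mass $\Pr[\theta_u \ge \ell_u^{\infty}]$ is $e^{-\ell_u^{\infty}}$. Adding the tail contribution $e^{-\ell_u^{\infty}} p_u(S)\bigl(1 - f(\ell_u^{\infty})\bigr)$ to the integral of the Case~2 bound over $[0, \ell_u^{\infty})$ produces exactly the expression in the lemma, with the common factor $\bigl(1 - f(\ell_u^{\infty})\bigr)$ pulled out. Essentially all the substantive work is bundled into the Case~2 good-count bound $|S| - p^{-1}(\ell_u^{\infty} - \theta_u)$, which is the content of the refined structural lemma deferred to Section~\ref{sec:equal-probabilities-structural}; granting that, the present lemma is a short calculation. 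The only real bookkeeping hazard is correctly tracking the $(\cdot)^+$ truncation and remembering that it is precisely the monotonicity of $f$ that makes the per-vertex lower bound on $\beta_v$ uniform across the relevant matched neighbors (so the good-count may simply be multiplied through).
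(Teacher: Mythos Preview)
Your proposal is correct and follows essentially the same approach as the paper: split into the two cases $\theta_u \ge \ell_u^\infty$ and $\theta_u < \ell_u^\infty$, lower-bound $\sum_{v\in S}\beta_v$ pointwise using the good-vertex count from the structural characterization together with $\beta_v \ge p(1-f(\ell_u^\infty))$ for good vertices, and then integrate against the exponential density. The only minor remark is that your explicit invocation of the WLOG $p\,|S|\le 1$ is not actually needed here, since $p\,|S|\ge p_u(S)$ always holds and the inequalities go through directly; the paper simply writes $p_u(S)$ in place of $p\,|S|$ without comment.
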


\begin{proof}
    By the above characterization, all vertices $v \in S$ are matched to neighbors with load at most $\ell_u^\infty$ at the time of the matches when $\theta_u \ge \ell_u^\infty$.
    This happens with probability $e^{-\ell_u^\infty}$ and contributes: 
    \[
        \textstyle
        e^{-\ell_u^{\infty}} p_u(S) \cdot \big( 1 - f(\ell_u^{\infty}) \big)
        ~,
    \]
    to the expectation in the lemma. 

    Similarly, all vertices $v \in S$, except for $p^{-1} \big( \ell_u^\infty - \theta_u \big)$ of them, are still matched to neighbors with load at most $\ell_u^\infty$ at the time of the matches when $0 \le \theta_u < \ell_u^\infty$.
    This contributes:
    \[
        \textstyle
        \int_0^{\ell_u^{\infty}} e^{-\theta_u} \big( p_u(S) - (\ell_u^{\infty}-\theta_u) \big)^+ d\theta_u \cdot \big( 1 - f(\ell_u^{\infty}) \big)
        ~,
    \]
    to the expectation in the lemma. 
    Putting together proves the lemma
\end{proof}

\begin{proof}[Proof of Theorem~\ref{thm:equal-probabilities} (Unweighted Case)]
    By Lemma~\ref{lem:alpha-equal-prob} and Lemma~\ref{lem:beta-equal-prob}, we have:
    \[
        \textstyle
        \E \big[ \alpha_u + \sum_{v \in S} \beta_v \big] \ge
        \int_0^{\ell_u^\infty} e^{-\theta_u} f(\theta_u) d\theta_u + \left( e^{-\ell_u^{\infty}} p_u(S) + \int_0^{\ell_u^{\infty}} e^{-\theta_u} \big( p_u(S) - (\ell_u^{\infty}-\theta_u) \big)^+ d\theta_u \right) \big( 1 - f(\ell_u^{\infty}) \big) 
        ~.
    \]
    
    To show approximate dual feasibility, it suffices to find a non-decreasing function $f$ such that for any $\ell_u^{\infty} \geq 0$ and any $0 \leq p_u(S) \leq 1$:
    \begin{equation}
        \label{eqn:de-equal}
        \textstyle
        \int_0^{\ell_u^\infty} e^{-\theta_u} f(\theta_u) d\theta_u + \left( e^{-\ell_u^{\infty}} p_u(S) + \int_0^{\ell_u^{\infty}} e^{-\theta_u} \big( p_u(S) - (\ell_u^{\infty}-\theta_u) \big)^+ d\theta_u \right) \big( 1 - f(\ell_u^{\infty}) \big)  
        \geq \Gamma \cdot p_u(S)
    \end{equation}
    
    Solving for the largest competitive $\Gamma$ for which this set of differential inequalities admits a feasible solution (see Section~\ref{sec:solve-de-equal} for details), we get that for $g(x) = \frac{1}{2-x-e^{-x}}$, and $h(x) = \exp \big( \int_x^1 g(z)dz \big)$:
    \begin{equation}
        \label{eqn:f-equal}
        f(x) = \left\{
        \begin{array}{ll}
            1-\frac{1}{e} & \quad x > 1 \\
            \frac{1}{h(x)} \left( 1- \frac{1}{e} + \int_x^1 (1-e^{-y}) g(y) h(y) dy \right) & \quad 0 \leq x \leq 1
        \end{array}
        \right.
    \end{equation}
    satisfies Eqn.~\eqref{eqn:de-equal} with the competitive ratio $\Gamma = 1-f(0) \approx 0.576$.
\end{proof}

\subsubsection{Structural Lemma for Equal Probabilities}
\label{sec:equal-probabilities-structural}

To formally describe the structural lemma, let us consider an alternative viewpoint of the matching process of Stochastic Balance. 
Given a graph $G = (U \cup V, E)$, and any thresholds $\vec{\theta}$, define a graph $G(\vec{\theta}) = (U' \cup V, E')$ as follows:
\begin{itemize}
    \item For each offline vertex $u \in U$ in the original graph, let there be $p^{-1} \theta_u$ copies of $u$ in $U'$, and denote them as $(u,0), (u,p), (u,2p), \dots, (u,\theta_u-p)$.
        That is, since $u$ can accommodate $p^{-1} \theta_u$ online vertices before it succeeds, let there be this many copies in the alternative viewpoint.
    \item For any online vertex $v$ that has an edge $(u, v)$ in the original graph, let there be edges between it and all copies of $u$ in $G(\vec{\theta})$.
\end{itemize}

Then, each copy of an offline vertex can be matched only once;
each online vertex is matched to a copy with the smallest load, breaking ties lexicographically.
Further, changing the threshold of a particular offline vertex $u$ is equivalent to adding/removing copies in the alternative viewpoint.

    
Moreover, fix any offline vertex $u \in U$.
Partition the online vertices into \emph{good} and \emph{bad} vertices according to their matching qualities.
Concretely, let $V_G(\vec{\theta})$ denote the set of \emph{good} online vertices that are matched to copies of offline vertices with load less than $\ell_u^\infty$;
let $V_B(\vec{\theta})$ denote the remaining \emph{bad} online vertices.
Then, the structual lemma can be stated as follows.
    
\begin{lemma}
    \label{lem:equal-probabilities-structural}
    Suppose $\theta_u < \ell_u^{\infty}$.
    Then, the matchings w.r.t.\ $G \big( \infty,\vec{\theta}_{-u} \big)$ and $G \big( \theta_u,\vec{\theta}_{-u} \big)$ satisfy that $V_G \big( \infty, \vec{\theta}_{-u} \big)$ and $V_G \big( \theta_u, \vec{\theta}_{-u} \big)$ differ by at most $p^{-1} \big( \ell_u^{\infty}- \theta_u \big)$ vertices.
\end{lemma}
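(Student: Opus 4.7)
The plan is to establish the lemma via a coupling between the two executions of Stochastic Balance on $G(\infty, \vec{\theta}_{-u})$ and $G(\theta_u, \vec{\theta}_{-u})$. Writing $L_{u'}^{W}(t)$ for the load of offline vertex $u'$ right after the $t$-th online arrival in world $W \in \{\infty, \theta_u\}$, the central device will be an offline invariant asserting $L_{u'}^{\theta_u}(t) \ge L_{u'}^{\infty}(t)$ for every $u' \ne u$ and every $t$; once this is in hand, a short counting of ``used copies of load below $\ell_u^\infty$'' will deliver the stated bound.

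I would first localize the divergence in time. Let $t^\star$ denote the arrival index at which $u$'s load first reaches $\theta_u$ in the $\theta_u$-world; it exists because $\theta_u < \ell_u^\infty$. Up to and including $t^\star$, the copy of $u$ at any load $<\theta_u$ is available in both worlds, and by a trivial induction all other loads coincide, so each arriving vertex sees identical preference lists and makes identical choices; the two executions are literally the same through $t^\star$. Past $t^\star$, $u$ becomes permanently unavailable in the $\theta_u$-world while remaining available in the $\infty$-world, and it is only here that the executions can diverge.

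The bulk of the work is the induction proving the offline invariant for $t > t^\star$, split on which offline vertex the newly arriving $v$ matches to in each world. If $v$ matches $u$ in the $\infty$-world, then in the $\theta_u$-world $v$ either matches some $u'' \ne u$ (harmlessly raising $L_{u''}^{\theta_u}$) or is unmatched, and the invariant survives trivially. The delicate case is when $v$ matches $u_\star$ in the $\infty$-world but $u_{\star\star} \ne u, u_\star$ in the $\theta_u$-world; here I would combine the inductive inequality $L_{u_{\star\star}}^{\theta_u} \ge L_{u_{\star\star}}^{\infty}$, the integrality of loads in units of $p$, and lex tie-breaking (which forces one of $L_{u_\star}^{\infty} \le L_{u_{\star\star}}^{\infty}$ or $L_{u_{\star\star}}^{\theta_u} \le L_{u_\star}^{\theta_u}$ to be strict by a full step of $p$) to chain up to $L_{u_\star}^{\theta_u}(t) \ge L_{u_\star}^{\infty}(t) + p$, exactly the buffer needed to absorb the $\infty$-side update on $u_\star$. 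Crucially, the mirror-image pathological case --- $v$ picking $u$ in the $\theta_u$-world but some $u_\star \ne u$ in the $\infty$-world --- is impossible after $t^\star$ precisely because $u$ is then unavailable in the $\theta_u$-world, and this is the structural reason the induction closes. The hard part is bookkeeping the tie-breaking cleanly so the required strict inequalities are guaranteed.

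To finish, I translate the invariant into the bound. Two consequences follow: (a) the final load of $u$ in the $\theta_u$-world equals $\theta_u$, i.e., all $p^{-1}\theta_u$ copies of $u$ there are matched; and (b) for every online $v$, the load of its $\theta_u$-match weakly exceeds that of its $\infty$-match --- for $t_v \le t^\star$ the matchings coincide, while for $t_v > t^\star$ the $\theta_u$-match is some $u''' \ne u$ whose $\theta_u$-load dominates its $\infty$-load by the invariant, which itself dominates $v$'s $\infty$-match load by lex-minimality in the $\infty$-world. By (b), $V_G(\theta_u, \vec{\theta}_{-u}) \subseteq V_G(\infty, \vec{\theta}_{-u})$, so the symmetric difference equals a size difference. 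Writing each $|V_G|$ as $\sum_{u'}$(used copies of $u'$ with load $< \ell_u^\infty$), the $u$-contribution decreases by exactly $p^{-1}(\ell_u^\infty - \theta_u)$ while each $u' \ne u$ contribution weakly increases by the offline invariant, giving the claimed bound.
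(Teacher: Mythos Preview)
Your proposal is correct and takes a genuinely different route from the paper's own argument.

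The paper proves Lemma~\ref{lem:equal-probabilities-structural} as the special case of Lemma~\ref{lem:equal-probabilities-structural-general} by removing copies of $u$ \emph{one at a time}: comparing $G(\ell_u^\infty-(i-1)p,\vec{\theta}_{-u})$ with $G(\ell_u^\infty-ip,\vec{\theta}_{-u})$, each removal triggers a single alternating path along which the ranks of the matched copies are monotone, so at most one online vertex can cross the good/bad boundary per step. Summing over the $p^{-1}(\ell_u^\infty-\theta_u)$ steps gives the bound.

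You instead compare the two extreme executions \emph{directly}, via the load-monotonicity invariant $L_{u'}^{\theta_u}(t)\ge L_{u'}^{\infty}(t)$ for all $u'\ne u$ and all $t$. This is exactly the offline invariant that the paper later establishes (Lemma~\ref{lem:load-invariant}) for the fractional unequal-probability algorithm; you have in effect ported that tool back to the integral equal-probability setting. From it you extract the containment $V_G(\theta_u,\vec{\theta}_{-u})\subseteq V_G(\infty,\vec{\theta}_{-u})$ and then a clean per-offline-vertex count of used copies below level $\ell_u^\infty$. The alternating-path approach localizes the change and generalizes verbatim to arbitrary order-based algorithms (which the paper needs for the vertex-weighted case); your invariant approach is more uniform with Section~\ref{sec:unequal-probabilities} and yields the containment $V_G(\theta_u)\subseteq V_G(\infty)$ as a bonus, whereas the paper only asserts a symmetric-difference bound. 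One small point worth making explicit in your write-up of claim~(b): when you invoke ``lex-minimality in the $\infty$-world'' to compare the $\infty$-match with $u'''$, you are using that $u'''$ is available in the $\infty$-world, which follows from $L_{u'''}^{\infty}\le L_{u'''}^{\theta_u}<\theta_{u'''}$; and when $v$ is unmatched in the $\theta_u$-world it is automatically outside $V_G(\theta_u)$, so the containment is vacuous there.
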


In fact, we will prove a more general version of the lemma so that it can be used in the more general vertex-weighted case.

\paragraph{Order-based Algorithms.}
%
Consider any total order over the $(u, \ell)$ pairs, where $u \in U$ and $\ell$ is a nonnegative multiple of $p$, such that for any $u$, and any $\ell < \ell'$, $(u, \ell)$ ranks higher than $(u, \ell')$.
There is a corresponding \emph{order-based algorithm} that matches each online vertex to an unsuccessful offline neighbor $u$, which together with its current load, i.e., $(u, \ell_u)$, has the highest rank.
In particular, Stochastic Balance is the special case when the total ordering is deduced from the natural ordering of load $\ell_u$, breaking ties in lexicographical ordering of $u$.

\paragraph{Order-based Partition of Good and Bad Vertices.}
Consider any order-based algorithm. 
Fix any offline vertex $u \in U$ and the corresponding $\ell_u^{\infty}$, i.e., the load of $u$ when $\theta_u = \infty$.
Define the partition of \emph{good} and \emph{bad} online vertices by letting $V_G(\vec{\theta})$ be the set of good online vertices that match to an offline copy $(u', \ell_{u'})$ which ranks higher than $(u, \ell_u^{\infty})$ in the order of matching process, and $V_B(\vec{\theta})$ be the remaining bad online vertices.



\begin{figure}
    \centering
    \includegraphics[width=.85\textwidth]{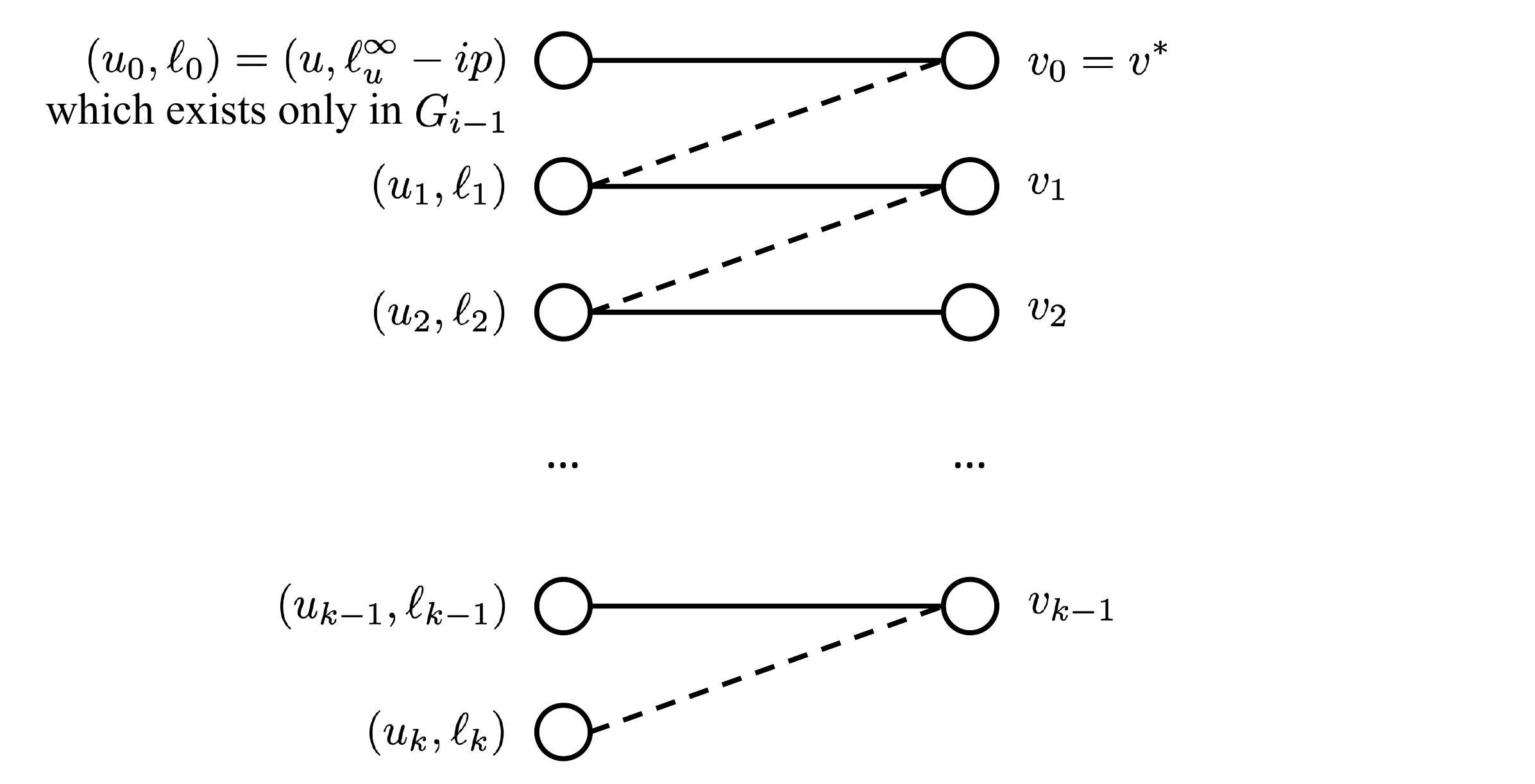}
    \caption{An illustrative picture of an alternating path from comparing $G_i$ and $G_{i-1}$. Solid edges are those in the matching in $G_{i-1}$, while dashed edges are those in the matching in $G_i$. In this example, the path ends with a copy of an offline vertex changes from unmatched to matched.}
    \label{fig:alternating-path}
\end{figure}

\begin{lemma}
    \label{lem:equal-probabilities-structural-general}
    Consider any order-based algorithm and any order-based partition of good and bad vertices that corresponds to $u$ and $\ell_u^\infty$. 
    Suppose $\theta_u < \ell_u^{\infty}$.
    Then, the matchings w.r.t.\ $G \big( \infty,\vec{\theta}_{-u} \big)$ and $G \big( \theta_u,\vec{\theta}_{-u} \big)$ satisfy that $V_G \big( \infty, \vec{\theta}_{-u} \big)$ and $V_G \big( \theta_u, \vec{\theta}_{-u} \big)$ differ by at most $p^{-1} \big( \ell_u^{\infty}- \theta_u \big)$ vertices.
\end{lemma}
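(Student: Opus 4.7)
The plan is to pass from $G\big(\infty,\vec{\theta}_{-u}\big)$ to $G\big(\theta_u,\vec{\theta}_{-u}\big)$ by removing the copies of $u$ one at a time and to bound, by one, the change in $V_G$ per removal. Observe that in the initial matching on $G(\infty,\vec{\theta}_{-u})$ the only copies of $u$ that are actually matched are $(u,0),(u,p),\dots,(u,\ell_u^{\infty}-p)$, and removing an unmatched copy does not change the algorithm's output. Hence it suffices to analyze $p^{-1}\big(\ell_u^{\infty}-\theta_u\big)$ removals, namely those of the copies $(u,\theta_u),(u,\theta_u+p),\dots,(u,\ell_u^{\infty}-p)$, taken one at a time in any convenient order. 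Let $G^{(0)}=G(\infty,\vec{\theta}_{-u}),G^{(1)},\dots,G^{(k)}=G(\theta_u,\vec{\theta}_{-u})$ denote the resulting sequence, with matchings $M^{(0)},\dots,M^{(k)}$.

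Now fix a single step $G^{(i-1)}\to G^{(i)}$, where we delete one matched copy $c^{*}$ of $u$. Crucially, $c^{*}$ has load strictly less than $\ell_u^{\infty}$, so it ranks strictly higher than $(u,\ell_u^{\infty})$ in the global order. I would then set up the standard \emph{displacement chain} for order-based algorithms by induction on arrival order: the unique vertex $v_1$ that was matched to $c^{*}$ in $M^{(i-1)}$ now selects its next-best available copy $c_1$; if $c_1$ was already matched to some $v_2$ (necessarily arriving after $v_1$) in $M^{(i-1)}$, then $v_2$ in turn selects $c_2$, and so on, producing a sequence $v_1,v_2,\dots,v_m$ and copies $c^{*}=c_0,c_1,\dots,c_m$. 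A direct comparison of the available sets at each arrival time confirms that (i) no vertex outside the chain has its match altered, (ii) each $v_j$ moves from $c_{j-1}$ in $M^{(i-1)}$ to $c_j$ in $M^{(i)}$, and (iii) the ranks $\mathrm{rank}(c_0)>\mathrm{rank}(c_1)>\dots>\mathrm{rank}(c_m)$ are strictly decreasing. The chain terminates either when $c_m$ was unmatched in $M^{(i-1)}$, or when $v_m$ has no available neighbor in $G^{(i)}$, in which case I treat being unmatched as having rank $-\infty$.

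With the chain in hand, the change to $V_G$ is immediate. Since a vertex is good iff its match ranks strictly higher than $(u,\ell_u^{\infty})$ and $v_j$'s match changes from $c_{j-1}$ to $c_j$ with $\mathrm{rank}(c_j)<\mathrm{rank}(c_{j-1})$, no chain vertex can switch from bad to good; and because the ranks are monotonically decreasing, the threshold $(u,\ell_u^{\infty})$ is crossed from above at most once along $c_0,c_1,\dots,c_m$. Consequently at most one vertex in the chain flips from good to bad, so
\[
\big|V_G\big(G^{(i-1)}\big)\triangle V_G\big(G^{(i)}\big)\big|\le 1.
\]
Vertices outside the chain retain their status, so the bound also holds unconditionally. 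Summing via the triangle inequality for the symmetric difference over the $k=p^{-1}(\ell_u^{\infty}-\theta_u)$ steps gives
\[
\big|V_G\big(\infty,\vec{\theta}_{-u}\big)\triangle V_G\big(\theta_u,\vec{\theta}_{-u}\big)\big|\le \sum_{i=1}^{k}\big|V_G(G^{(i-1)})\triangle V_G(G^{(i)})\big|\le p^{-1}\big(\ell_u^{\infty}-\theta_u\big),
\]
which is the claim.

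The step I expect to require the most care is the inductive verification of the displacement-chain structure for an arbitrary global ranking: one must track precisely how the sets of available copies at the arrival times of $v_j$ and of off-chain vertices between $v_{j-1}$ and $v_j$ differ between the two instances, and confirm that those differences are confined to $\{c_{j-1}\}$ becoming unavailable in $M^{(i)}$, so that (a) off-chain vertices are genuinely unaffected and (b) $v_j$ necessarily picks the next-best copy after $c_{j-1}$, giving the strict rank drop that powers the single-crossing argument.
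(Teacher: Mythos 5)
Your proposal is correct and takes essentially the same route as the paper: remove the copies of $u$ one at a time, observe that a single removal triggers an alternating/displacement path along which the ranks of the involved offline copies strictly decrease, and conclude that at most one online vertex per step can cross the rank threshold $(u,\ell_u^\infty)$ from good to bad, then sum. Your write-up is if anything slightly more careful than the paper's (explicitly handling unmatched copies as no-ops and verifying that off-chain vertices are unaffected), but there is no substantive difference in method.
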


\begin{proof}
    Note that the matchings of $G \big(\ell_u^\infty, \vec{\theta}_{-u} \big)$ and $G \big(\infty, \vec{\theta}_{-u} \big)$ are the same by the definition of $\ell_u^\infty$. 
    It suffices to show that for any positive integer $i$, graph $V_G \big( \ell_u^\infty - ip, \vec{\theta}_{-u} \big)$ and $V_G \big( \ell_u^\infty - (i-1)p , \vec{\theta}_{-u} \big)$ differ by at most one vertex. 
    Then, combining the cases of $1 \le i \le p^{-1} \big( \ell_u^\infty - \theta_u \big)$ proves the lemma.

    Next, let $G_i = G \big( \ell_u^\infty - ip, \vec{\theta}_{-u} \big)$ and $G_{i-1} = G \big( \ell_u^\infty - (i-1)p, \vec{\theta}_{-u} \big)$, and $V_i = V_G \big( \ell_u^\infty - ip, \vec{\theta}_{-u} \big)$ and $V_{i-1} = V_G \big( \ell_u^\infty - (i-1)p, \vec{\theta}_{-u} \big)$ for ease of notations.
    Note that $G_i$ and $G_{i-1}$ differ only in that $G_{i-1}$ has one more copy of vertex $u$, i.e., $(u, \ell_u^\infty - ip)$.
    Let $v^*$ be the vertex matched to this extra copy of $u$ in $G_{i-1}$.

    Before the arrival of $v^*$, the matchings on both graphs are identical.
    After the arrival of $v^*$, it triggers an alternating path.
    In $G_{i-1}$, $v^*$ is matched to $(u, \ell_u^\infty - ip)$;
    for notation consistency in the rest of the argument, let $u_0 = u$, $\ell_0 = \ell_u^\infty - ip$, and $v_0 = v^*$ and, thus, $v_0$ is matched to $(u_0, \ell_0)$.
    In $G_i$, however, $(u_0, \ell_0)$ does not exist and therefore $v_0 = v^*$ is matched to a copy of some other vertex, say, $(u_1, \ell_1)$.
    Subsequently, the vertex that is matched to $(u_1, \ell_1)$ in $G_{i-1}$, denoted as $v_1$ is matched to some other copy, say, $(u_2, \ell_2)$ in $G_i$.
    In general, for any $j$, the vertex $v_j$ that is matched to $(u_j, \ell_j)$ in $G_{i-1}$ is matched to some other copy $(u_{j+1}, \ell_{j+1})$ in $G_{i-1}$, except for at the end of the alternative path.
    There are two ways this alternating path ends.
    Either that for some integer $k$, the vertex $v_k$ that is matched to $(u_k, \ell_k)$ in $G_{i-1}$ is now unmatched in $G_i$, i.e., the alternating path ends with a vertex on the online side that changes from matched to unmatched, or that for some integer $k$, $(u_k, \ell_k)$ is unmatched in $G_{i-1}$, i.e., the alternating path ends with a copy of an offline vertex changes from unmatched to matched.
    See Figure~\ref{fig:alternating-path} for an illustrative picture.

    Further, note that the online vertices prefer $(u_j, \ell_j)$ over $(u_{j+1}, \ell_{j+1})$ for any $j$, in other words, $(u_j, \ell_j)$ ranks higher than $(u_{j+1}, \ell_{j+1})$ in the total order.
    Recall that $(u_k, \ell_k)$ is the last LHS vertex on the alternating path.
    We get that in the sequence of copies $(u_j, \ell_j)$ for $j \le k$, their ranks are non-increasing w.r.t.\ the order of matching process. 
    Thus, there is at most one index $j$ such that $(u_j, \ell_j)$ ranks higher than $(u, \ell_u^\infty)$ and $(u_{j+1}, \ell_{j+1})$ ranks lower or equal to $(u, \ell_u^\infty)$.
    In other words, there is at most one online vertex that changes from good to bad as a result of the alternating path.
    
    In the unweighted case, i.e., Lemma~\ref{lem:equal-probabilities-structural}, we get $\ell_0 \le \ell_1 \le \dots \le \ell_k$ in the sequence of copies.
    Therefore, there is at most one index $j$ such that $\ell_j < \ell_u^\infty$ and $\ell_{j+1} \ge \ell_u^\infty$. 
\end{proof}

\subsubsection{Solving the Differential Inequality}
\label{sec:solve-de-equal}

This section shows how to solve the differential inequality in Eqn.~\eqref{eqn:de-equal} to get the solution $f(x)$ in Eqn.~\eqref{eqn:f-equal} the corresponding competitive ratio.
For ease of notations, we write $\ell_u^\infty$ as $\ell$, $p_u(S)$ as $p$, and $\theta_u$ as $\theta$. 
Then, Eqn.~\eqref{eqn:de-equal} becomes:
%
\[
    \textstyle
    \int_0^{\ell} e^{-\theta} f(\theta) d\theta + \left( e^{-\ell} p + \int_0^{\ell} e^{-\theta} \big( p - (\ell-\theta) \big)^+ d\theta \right) \big( 1 - f(\ell) \big) 
    \geq \Gamma \cdot p
    ~.
\]

First, note that the $z^+$ operation effectively restricts the second integration in the above equation to be from $\max \big\{0, \ell - p\big\}$ to $\ell$. 
This allows us to simplify the above equation by taking away the $z^+$ operation.
Concretely, it suffices to ensure that for any $0 \le p < \ell$:
\begin{equation}
    \label{eqn:de-equal-p<l}
    \textstyle
    \int_0^{\ell} e^{-\theta} f(\theta) d\theta 
    + \left( 1-f(\ell) \right) (e^{-\ell+p}-e^{-\ell})
    \geq \Gamma \cdot p
    ~,
\end{equation}
and for any $0 \le \ell \le p$:
\begin{equation}
    \label{eqn:de-equal-p>l}
    \textstyle
    \int_0^{\ell} e^{-\theta} f(\theta) d\theta 
    + \left( 1-f(\ell) \right) (1-\ell+p-e^{-\ell})
    \geq \Gamma \cdot p
    ~.
\end{equation}

\paragraph{Left Boundary Condition.}
We start with the left boundary condition of $f$ at zero.

\begin{lemma}
    \label{lem:equal-probabilities-de-left-boundary}
    For the solution $f$ to the differential inequalities, we have:
    \[
        f(0) \le 1 - \Gamma
        ~.
    \]
\end{lemma}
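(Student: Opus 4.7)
The plan is to obtain the bound by evaluating the relevant differential inequality at its most extreme boundary point, namely $\ell = 0$. Since Eqn.~\eqref{eqn:de-equal-p>l} must hold for every $0 \le \ell \le p$, in particular it must hold at $\ell = 0$ for any $p > 0$.

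First I would substitute $\ell = 0$ into Eqn.~\eqref{eqn:de-equal-p>l}. The integral $\int_0^0 e^{-\theta} f(\theta)\, d\theta$ vanishes, and the second term simplifies because $1 - \ell + p - e^{-\ell} = 1 + p - 1 = p$. This collapses the inequality to
\[
    \bigl(1 - f(0)\bigr) \cdot p \;\ge\; \Gamma \cdot p.
\]
Dividing both sides by $p > 0$ immediately yields $1 - f(0) \ge \Gamma$, i.e., $f(0) \le 1 - \Gamma$, which is exactly the claim.

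There is essentially no obstacle here: the lemma is a boundary condition that falls out by plugging in a single degenerate value of $\ell$. The only subtlety worth flagging is that this degenerate inequality is only available through Eqn.~\eqref{eqn:de-equal-p>l} (the regime $\ell \le p$) rather than Eqn.~\eqref{eqn:de-equal-p<l}, since the latter requires $p < \ell$ and therefore cannot be evaluated at $\ell = 0$ with $p > 0$. Thus the lemma is genuinely extracting content from the case where the subset $S$ of online neighbors contributes more probability mass than the load $u$ has accumulated, which is intuitively the hardest configuration for the dual to cover and hence pins down the initial value of $f$.
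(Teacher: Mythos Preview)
Your proof is correct and follows essentially the same approach as the paper: the paper's one-line proof also substitutes $\ell = 0$ into Eqn.~\eqref{eqn:de-equal-p>l} (specifically with $p = 1$) to obtain $(1 - f(0)) \ge \Gamma$. Your version works for any $p > 0$ before dividing through, which is equivalent, and your observation that only Eqn.~\eqref{eqn:de-equal-p>l} (not Eqn.~\eqref{eqn:de-equal-p<l}) can be evaluated at $\ell = 0$ is correct.
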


\begin{proof}
    It follows by letting $\ell = 0$ and, e.g., $p = 1$, in Eqn.~\eqref{eqn:de-equal-p>l}.
\end{proof}

In the following discussion, we focus on the functions such that the left boundary condition holds with equality, i.e., $f(0) = 1 - \Gamma$.

\paragraph{Unit Mass of Online Neighbors.}
Then, we show that the worst-case scenarios are when there is a unit mass of online neighbors, i.e., when $p = 1$.

\begin{lemma}
    If $f$ is a non-decreasing function such that $f(0) = 1 - \Gamma$ and Eqn.~\eqref{eqn:de-equal-p<l} holds for any $\ell > p = 1$, while Eqn.~\eqref{eqn:de-equal-p>l} holds for any $\ell \le p = 1$.
    Then, $f$ satisfies the equations for any $0 \le p \le 1$.
\end{lemma}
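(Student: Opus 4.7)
The plan is to fix $\ell \ge 0$ and view the expression $\mathrm{LHS} - \Gamma p$ as a piecewise function of $p$ on $[0,1]$: the formula from Eqn.~\eqref{eqn:de-equal-p<l} on $[0,1] \cap [0,\ell]$ and the formula from Eqn.~\eqref{eqn:de-equal-p>l} on $[0,1] \cap [\ell,\infty)$. I would then argue that this function of $p$ is non-increasing on $[0,1]$, so its minimum equals the value at $p=1$, which is non-negative by hypothesis.

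First I would check that the two formulas agree at $p=\ell$ (they both reduce to $\int_0^\ell e^{-\theta} f(\theta)\, d\theta + (1-f(\ell))(1-e^{-\ell}) - \Gamma \ell$), which gives continuity of the piecewise function. Differentiating in $p$, I get $(1-f(\ell)) e^{p-\ell} - \Gamma$ for $p < \ell$ and $(1-f(\ell)) - \Gamma$ for $p > \ell$. The crucial step invokes the left boundary condition together with monotonicity of $f$: since $f(0) = 1 - \Gamma$ and $f$ is non-decreasing, $f(\ell) \ge 1 - \Gamma$, so $1 - f(\ell) \le \Gamma$. In the regime $p \ge \ell$ this immediately yields a non-positive derivative, while in the regime $p \le \ell$ the factor $e^{p-\ell} \le 1$ means the derivative is at most $(1-f(\ell)) - \Gamma \le 0$ there as well.

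Combining, the piecewise function is continuous and non-increasing on the whole interval $[0,1]$, regardless of whether $\ell \le 1$ or $\ell > 1$ (in the latter case only Eqn.~\eqref{eqn:de-equal-p<l} is active on $[0,1]$, and the same derivative bound applies). Hence its minimum on $[0,1]$ is attained at $p=1$, where the inequality holds by assumption, and we conclude that both Eqn.~\eqref{eqn:de-equal-p<l} and Eqn.~\eqref{eqn:de-equal-p>l} hold for every $0 \le p \le 1$. I do not anticipate a real obstacle here; the argument is a one-variable monotonicity check, and the only subtlety is ensuring the two pieces join correctly so that non-increasingness on each piece implies non-increasingness globally.
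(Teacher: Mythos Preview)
Your proposal is correct and follows essentially the same approach as the paper: both arguments fix $\ell$, differentiate in $p$, and use $f(0)=1-\Gamma$ together with the monotonicity of $f$ to bound $1-f(\ell)\le\Gamma$, yielding that $\mathrm{LHS}-\Gamma p$ is non-increasing so that the case $p=1$ is the binding one. Your version is in fact slightly more careful than the paper's in that you explicitly verify continuity of the two pieces at $p=\ell$, which is needed when $\ell\le 1$ to pass from the hypothesis on Eqn.~\eqref{eqn:de-equal-p>l} at $p=1$ back through the junction to conclude Eqn.~\eqref{eqn:de-equal-p<l} for $p<\ell$; the paper leaves this implicit.
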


\begin{proof}
    Consider both sides of Eqn.~\eqref{eqn:de-equal-p<l} and Eqn.~\eqref{eqn:de-equal-p>l} as functions of $p$. 
    It suffices to show that the derivative of the RHS is weakly larger than that of the LHS for any $0 \le p \le 1$.
    The derivative of the RHS is $\Gamma$, while that of the LHS is at most:
    \begin{align*}
        e^{-\ell+p} \big( 1 - f(\ell) \big) 
        \le & 1 - f(\ell) 
        && \text{($p \le l$)} \\[1ex]
        \le & 1 - f(0) 
        && \text{($f$ is non-decreasing)} \\[1ex]
        = & \Gamma
        && \text{(left boundary condition of $f$)}
        ~.
    \end{align*}
    for Eqn.~\eqref{eqn:de-equal-p<l}, and at most:
    \[
        1 - f(\ell)
        \leq 1 - f(0)
        \leq \Gamma
    \]
    for Eqn.~\eqref{eqn:de-equal-p>l}. 
    So the lemma follows.
\end{proof}
    
    Then it suffices to solve the following simplified differential inequalities subject to the boundary condition that $f(0) = 1 - \Gamma$.
    For any $\ell > 1$, we need:
    \begin{equation}
        \label{eqn:de-equal-l>1}
        \textstyle
        \int_0^{\ell} e^{-\theta} f(\theta) d\theta 
        + \left( 1-f(\ell) \right) (e^{-\ell+1}-e^{-\ell})
        \geq \Gamma 
        ~,
    \end{equation}
    while for any $0 \le \ell \le 1$, we need:
    \begin{equation}
        \label{eqn:de-equal-l<1}
        \textstyle
        \int_0^{\ell} e^{-\theta} f(\theta) d\theta 
        + \left( 1-f(\ell) \right) (2-\ell-e^{-\ell})
        \geq \Gamma 
    \end{equation}
    
    \paragraph{Right Boundary Condition.~}
To further simplification, we impose a right boundary condition that it suffices to ensure that $f(\ell) = 1 - \frac{1}{e}$ for $\ell \ge 1$.

\begin{lemma}
    \label{lem:equal-probabilities-de-right-boundary}
    Suppose $f : [0, 1] \mapsto \R^+$ is a non-decreasing function such that $f(0) = 1 - \Gamma$, $f(1) = 1 - \frac{1}{e}$, and Eqn.~\eqref{eqn:de-equal-l<1} holds for any $0 \le \ell \le 1$.
    Then, the extension of $f$ such that $f(\ell) = f(1) = 1 - \frac{1}{e}$ for all $\ell > 1$ satisfies Eqn.~\eqref{eqn:de-equal} for all $\ell \ge 0$.
\end{lemma}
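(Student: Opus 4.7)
The plan is a direct substitution and reduction: I will plug the proposed extension of $f$ (namely $f(\ell)=1-\tfrac{1}{e}$ for all $\ell>1$) into Eqn.~\eqref{eqn:de-equal-l>1} and show that the resulting expression is independent of $\ell$ and matches the $\ell=1$ instance of Eqn.~\eqref{eqn:de-equal-l<1}, so that the hypothesis $\ge \Gamma$ transfers automatically.

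Concretely, for any fixed $\ell>1$ I would split the integral term as
\[
    \int_0^\ell e^{-\theta} f(\theta)\, d\theta
    = \int_0^1 e^{-\theta} f(\theta)\, d\theta + \left(1-\tfrac{1}{e}\right)\int_1^\ell e^{-\theta}\, d\theta
    = \int_0^1 e^{-\theta} f(\theta)\, d\theta + \left(1-\tfrac{1}{e}\right)\left(e^{-1} - e^{-\ell}\right),
\]
using that $f(\theta)=1-1/e$ on $[1,\ell]$. The second term on the LHS of Eqn.~\eqref{eqn:de-equal-l>1} simplifies to
\[
    \left(1-f(\ell)\right)\left(e^{-\ell+1} - e^{-\ell}\right)
    = \tfrac{1}{e}\left(e^{-\ell+1} - e^{-\ell}\right)
    = e^{-\ell}\left(1-\tfrac{1}{e}\right).
\]
Adding the two contributions, the $e^{-\ell}$ terms cancel, and the LHS of Eqn.~\eqref{eqn:de-equal-l>1} collapses to $\int_0^1 e^{-\theta} f(\theta)\, d\theta + (1-1/e)\,e^{-1}$, which is indeed independent of $\ell$.

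It remains to identify this constant with the LHS of Eqn.~\eqref{eqn:de-equal-l<1} evaluated at $\ell = 1$. Using $f(1)=1-1/e$, that LHS is
\[
    \int_0^1 e^{-\theta} f(\theta)\, d\theta + \left(1-f(1)\right)\left(2 - 1 - e^{-1}\right)
    = \int_0^1 e^{-\theta} f(\theta)\, d\theta + \tfrac{1}{e}\left(1 - \tfrac{1}{e}\right),
\]
and $\tfrac{1}{e}(1-1/e) = (1-1/e)\,e^{-1}$, matching exactly. Since by hypothesis Eqn.~\eqref{eqn:de-equal-l<1} holds at $\ell=1$, the quantity is $\ge \Gamma$, and therefore Eqn.~\eqref{eqn:de-equal-l>1} holds for every $\ell>1$.

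There is no serious obstacle here; the only thing to verify carefully is the cancellation of the $e^{-\ell}$ terms and the algebraic identity that $(1-f(1))(1-e^{-1}) = (1-1/e)\,e^{-1}$, which is exactly what the choice of the right boundary value $f(1)=1-1/e$ is engineered to produce. So the value $1-1/e$ is not arbitrary: it is precisely the level at which extending $f$ as a constant past $\ell=1$ makes the $\ell>1$ inequality equivalent (not merely implied) to the endpoint of the $\ell\le 1$ inequality, which explains why this is the right place to splice the two regimes together.
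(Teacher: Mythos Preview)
Your proof is correct and essentially the same as the paper's: both show that for the constant extension $f(\ell)=1-\tfrac{1}{e}$ on $(1,\infty)$ the left-hand side of Eqn.~\eqref{eqn:de-equal-l>1} is independent of $\ell$ and coincides with the $\ell=1$ instance of Eqn.~\eqref{eqn:de-equal-l<1}, which is $\ge \Gamma$ by hypothesis. The paper just packages the same algebra as a single difference computation that collapses to $(1-e^{-\ell+1})\big(f(1)-(1-e^{-1})\big)=0$.
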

\begin{proof}
    By the assumption of $f$, it suffices to show that the difference between the RHS of Eqn.~\eqref{eqn:de-equal-l>1} when with any $\ell > 1$ and that when $\ell = 1$ is non-negative.
    The difference is (recalling that $f(\ell) =1-1/e$ for all $\ell \geq 1$):
    \begin{align*}
    &  \textstyle
    ( \int_0^{\ell} e^{-\theta}f(\theta)d\theta + (1-f(\ell))(e^{-\ell+1}-e^{-\ell}) ) - ( \int_0^{1} e^{-\theta}f(\theta)d\theta + (1-f(1))(1-e^{-1}) ) \\
    & \textstyle \qquad
    = \int_1^\ell e^{-\theta} f(1) d\theta + \big( 1 - f(1) \big) \big( ( e^{-\ell+1} - e^{-\ell} ) - (1 - e^{-1}) \big) \\
    & \textstyle \qquad
    = \big(1 - e^{-\ell+1} \big) \big( f(1) - (1 - e^{-1}) \big) \\
    & \textstyle \qquad
    = 0
        ~.
    \end{align*}
\end{proof}

As a result, it suffices to find a non-decreasing function $f$ subject to the boundary conditions $f(0) =1-\Gamma$ and $f(1)=1-1/e$, such that Eqn.\eqref{eqn:de-equal-l<1} holds for any $0 \leq \ell \leq 1$.
Solving this differential inequality for $f$ gives,
\begin{align*}
    \textstyle
    f(x) =\frac{1}{h(x)} ( 1- \frac{1}{e} + \int_x^1 (1-e^{-y}) g(y) h(y) dy )
    ~,
\end{align*}
where $g(x) = \frac{1}{2-x-e^{-x}}$, and $h(x) = \exp \big( \int_x^1 g(z)dz \big)$. 

To conclude, we find the $f(x)$ shown in Eqn.\eqref{eqn:f-equal}. Thus $\Gamma = 1-f(0) \approx 0.576$.

\subsection{Generalization to the Vertex-weighted Case}
\label{sec:equal-probabilities-weighted}

This subsection generalizes Stochastic Balance to the vertex-weighted case, and completes the proof of Theorem~\ref{thm:equal-probabilities}.
The Algorithm~\ref{alg:equal-probabilities-weighted} below for vertex-weighted case is driven by the randomized online primal dual framework.

\begin{algorithm}
    \caption{A deterministic algorithm for the vertex-weighted case}
    \label{alg:equal-probabilities-weighted}
    \For{each online vertex $v \in V$}{
        match $v$ to the unsuccessful neighbor with the largest $w_u \cdot p \cdot (1-f(\ell_u))$, where $f$ is defined in Eqn.~\eqref{eqn:f-equal}, breaking ties lexicographically.
    }
\end{algorithm}

In particular, recall that gain sharing rule in the unweighted case in Section~\ref{sec:equal-probabilities-dual-unweighed}.
When the algorithm picks an edge $(u, v)$, the gain of $p$ from the edge is split among $u$ and $v$ according to the current load $\ell_u$ of the offline vertex $u$ and some non-decreasing function $f$:
the dual variable of $u$ increases by $p f(\ell_u)$;
the dual variable of $v$ is set to be $p \big( 1 - f(\ell_u) \big)$.
Stochastic Balance can be viewed as matching the online vertex $v$ to the offline neighbor that gives $v$ the largest share of the gain due to $f$ is non-decreasing.
Now consider the vertex-weighted case.
Following the same reason, suppose $u$ and $v$ are matched, the gain $w_u p$ on the edge is split between $u$ and $v$ such that the dual variable of $u$ increases $w_u p f(\ell_u)$ and the dual variable of $v$ is set to $w_u p \big( 1 - f(\ell_u) \big)$.
The algorithm matches each online vertex $v$ to the offline neighbor that offers the largest share of gain to $v$, i.e., the one with the largest value of $w_u p \big( 1 - f(\ell_u) \big)$.


Note that the Algorithm~\ref{alg:equal-probabilities-weighted} is an order-based algorithm as defined in Section~\ref{sec:equal-probabilities-structural}. 
Consider the viewpoint of matching process and $G(\vec{\theta})$ introduced in Section~\ref{sec:equal-probabilities-structural}, the rank of each copy $(u, \ell_u)$ is deduced from the value of $w_up(1-f(\ell_u))$, that is, Algorithm~\ref{alg:equal-probabilities-weighted} matches an offline vertex $v$ to a copy $(u, \ell_u)$ with the highest rank, i.e., with the greatest $w_up(1-f(\ell_u))$. 
Recall that we partition the online vertices to good and bad by whether the rank of the offline copy they match is higher than $(u, \ell_u^{\infty})$.
In the vertex-weighted case, good vertices are matched to an offline copy $(u', \ell_{u'})$, where $w_{u'}p(1-f(\ell_{u'})) > w_up(1-f(\ell_u^{\infty}))$, and bad vertices are the remaining ones.

\subsubsection{Dual Assignment}
\label{sec:equal-probabilities-weighted-dual}

The dual assignment procedure follows the idea from the unweighted case mentioned above, except for sharing a gain $w_u p$ instead of $p$ in unweighted case.
Detailed procedure is as follows:

\begin{enumerate}
    \item Initially, let $\alpha_u = 0$ for all $u \in U$.
    \item On the arrival of an online vertex $v \in V$ and, thus, the corresponding dual variable $\beta_v$:
    \begin{enumerate}
        \item If $v$ is matched to $u \in U$, increase $\alpha_u$ by $w_u p f(\ell_u)$ and let $\beta_v = w_u p \big( 1 - f(\ell_u) \big)$. 
        \item If $v$ has no unsuccessful neighbor, let $\beta_v = 0$.
    \end{enumerate}
\end{enumerate}

Let $F(\ell) = \int_0^\ell f(z) dz$ denote the integration of $f$.
Note that we have $\alpha_u = w_u F(\ell_u)$ in the limit as $p$ tends to zero.

\subsubsection{Randomized Primal Dual Analysis}

The dual assignment similarly splits the gain of $w_up$ from each matched edge to the dual variables of the two endpoints. 
Hence, the first condition in Lemma~\ref{lem:randomized-primal-dual} holds by definition.
The rest of this subsection shows that the second condition in the lemma holds with the desired competitive ratio $\Gamma$, i.e., for any $u$ and any $S \subseteq N_u$, over the randomness of the thresholds:
\[
    \E \bigg[\alpha_u + \sum_{v\in S} \beta_v \bigg] \geq \Gamma \cdot w_u \cdot p_u(S)
    ~.
\]
    


%
Similar to the unweighted case, we fix $\vec{\theta}_{-u}$, and characterize the matching related to $u$ and online vertices in $S$ for different realization of the threshold $\theta_u$ of $u$. Recall that $\ell_u^\infty$ denotes the load of $u$ when $u$ remains unsuccessful at the end of the algorithm.

\begin{itemize}
    \item \textbf{Case 1: $\theta_u \ge \ell_u^\infty$.~}
        In this case, the matching is the same as the $\theta_u = \infty$ case since $u$ has enough budget to accommodate all the load matched to it.  
    \begin{itemize}
        \item The load of $u$ equals $\ell_u^\infty$.
        \item All online vertices in $S$ are matched to offline copies with the value at least $w_u p (1-f(\ell_u^\infty))$ at the time of the matches from the viewpoint in Section~\ref{sec:equal-probabilities-structural}, since $u$ is available with load at most $\ell_u^\infty$, i.e., the copy $(u, \ell_u^\infty)$ is available.
    \end{itemize}
    \item \textbf{Case 2: $0 \le \theta_u < \ell_u^\infty$.~}
        In this case, $u$ can only accommodate the load up to its threshold. 
        Some online vertices that were matched to $u$ when $\theta_u = \infty$ need to be matched elsewhere.
    \begin{itemize}
        \item The load of $u$ equals $\theta_u$.
        \item All online vertices in $S$, except for up to $p^{-1} \big( \ell_u^\infty - \theta_u \big)$ of them, are matched to offline copies with the value at least $w_u p (1-f(\ell_u^\infty))$ at the time of the matches (Lemma~\ref{lem:equal-probabilities-structural-general}). 
    \end{itemize}
\end{itemize}

Then we get the gain of $\alpha_u$ and $\beta_v$ in a similar way as Section~\ref{sec:equal-probabilities-primal-dual-unweighted}.

\paragraph{Contribution from $\alpha_u$.}
%
We now proceed to prove approximate dual feasibility using the above characterization. 
First, consider the contribution from the offline vertex $u$.

\begin{lemma}
    \label{lem:alpha-equal-prob-weighted}
    For any thresholds $\vec{\theta}_{-u}$ of offline vertices other than $u$ and the corresponding $\ell_u^\infty$:
    \[
        \textstyle
        \E _{\theta_u} \big[ \alpha_u | \vec{\theta}_{-u} \big] \geq \int_0^{\ell_u^\infty} e^{-\theta_u} w_u f(\theta_u) d\theta_u 
        ~.
    \]
\end{lemma}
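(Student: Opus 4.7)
The plan is to mirror the argument of Lemma~\ref{lem:alpha-equal-prob} essentially verbatim, exploiting the observation from Section~\ref{sec:equal-probabilities-weighted-dual} that in the vertex-weighted dual assignment each increment $p f(\ell_u)$ is simply scaled by $w_u$, so in the vanishing-probability limit $\alpha_u = w_u F(\ell_u)$ where $F(\ell) = \int_0^\ell f(z)\,dz$. Since the matching process of Algorithm~\ref{alg:equal-probabilities-weighted} does not depend on $\theta_u$ until $u$'s load reaches $\theta_u$, the same case analysis on $\theta_u$ that was used in the unweighted case carries over unchanged.

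The key steps, in order, are the following. First, I would fix $\vec{\theta}_{-u}$ and the associated $\ell_u^\infty$, and record that from the characterization bullet ``Load of $u$'' the realized load satisfies $\ell_u = \ell_u^\infty$ when $\theta_u \ge \ell_u^\infty$, and $\ell_u = \theta_u$ when $0 \le \theta_u < \ell_u^\infty$. Second, using $\alpha_u = w_u F(\ell_u)$ and the fact that $\theta_u$ is exponential with mean $1$, I would split the expectation as
\[
    \E_{\theta_u}\bigl[\alpha_u \mid \vec{\theta}_{-u}\bigr]
    \;\ge\; w_u \int_0^{\ell_u^\infty} e^{-\theta_u} F(\theta_u)\,d\theta_u \;+\; w_u\, e^{-\ell_u^\infty} F(\ell_u^\infty).
\]
Third, I would apply integration by parts to the first integral, noting that $F'(\theta_u) = f(\theta_u)$ and $F(0) = 0$, to obtain
\[
    \int_0^{\ell_u^\infty} e^{-\theta_u} F(\theta_u)\,d\theta_u
    \;=\; -e^{-\ell_u^\infty} F(\ell_u^\infty) + \int_0^{\ell_u^\infty} e^{-\theta_u} f(\theta_u)\,d\theta_u,
\]
so that the boundary term cancels with the second summand above and yields $w_u \int_0^{\ell_u^\infty} e^{-\theta_u} f(\theta_u)\,d\theta_u$, as claimed.

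There is no real obstacle here: once one observes that the only change from the unweighted setting is the uniform scaling of the dual increments by $w_u$ and that the structural characterization of the load of $u$ is the same, the identity follows by the identical integration-by-parts computation. The only sanity check worth spelling out is that the inequality sign (rather than equality) in the second step is innocuous; it comes from the fact that we lower-bound $\alpha_u$ by its value at load $\ell_u$ (ignoring any further increments that would only happen if more online vertices arrived after $u$ succeeds, which do not exist since $u$ is frozen once its load reaches $\theta_u$), so the argument goes through without modification.
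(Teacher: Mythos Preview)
Your proposal is correct and follows essentially the same approach as the paper's own proof: use the case analysis on $\theta_u$ to identify the load, plug in $\alpha_u = w_u F(\ell_u)$ with the exponential density, and simplify via integration by parts. The paper's proof is a near-verbatim copy of the unweighted Lemma~\ref{lem:alpha-equal-prob} with the extra $w_u$ factor, exactly as you describe; your write-up even spells out the integration-by-parts step more explicitly than the paper does.
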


\begin{proof} 
    By the above characterization, $u$'s load equals $\ell_u^\infty$ when $\theta_u \ge \ell_u^\infty$, and equals $\theta_u$ when $0 \le \theta_u < \ell_u^\infty$.
    Further, recall that $\alpha_u = w_u F(\ell_u)$ and that $\theta_u$ follows the exponential distribution with mean $1$.
    We get that:
    \begin{align*}
        \E _{\theta_u} \big[ \alpha_u | \vec{\theta}_{-u} \big] 
        &
        \textstyle
        \geq \int_0^{\ell_u^{\infty}} e^{-\theta_u} w_u F(\theta_u) d\theta_u + e^{-\ell_u^{\infty}} w_u F(\ell_u^{\infty}) \\
        &
        \textstyle
        = \int_0^{\ell_u^\infty} e^{-\theta_u} w_u f(\theta_u) d\theta_u 
        ~.
    \end{align*}

    Here, the equality follows by integration by parts.
\end{proof}

\paragraph{Contribution from $\beta_v$'s.}
Next, consider the contribution from the online vertices $v \in S$.
For ease of notations, we let $z^+$ denote $\max \{ z, 0 \}$.

\begin{lemma}
    \label{lem:beta-equal-prob-weighted}
    For any thresholds $\vec{\theta}_{-u}$ of offline vertices other than $u$ and the corresponding $\ell_u^\infty$:
    \[
        \textstyle
        \E_{\theta_u} \big[ \sum_{v \in S} \beta_v | \vec{\theta}_{-u} \big] 
        \geq w_u \left( e^{-\ell_u^{\infty}}  p_u(S) + \int_0^{\ell_u^{\infty}} e^{-\theta_u}  \big( p_u(S) - (\ell_u^{\infty}-\theta_u) \big)^+ d\theta_u \right) \big( 1 - f(\ell_u^{\infty}) \big) 
        ~.
    \]
\end{lemma}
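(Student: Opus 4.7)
The plan is to follow the template of the unweighted Lemma~\ref{lem:beta-equal-prob}, with the key modification that the dual variable $\beta_v$ of a good online vertex now inherits a factor of $w_u$. First, I would invoke the characterization immediately preceding the lemma: a good online vertex in $S$ matches to an offline copy whose rank is at least that of $(u, \ell_u^\infty)$, i.e., to some $(u', \ell_{u'})$ with $w_{u'} p (1 - f(\ell_{u'})) \geq w_u p (1 - f(\ell_u^\infty))$. Since the dual assignment in Section~\ref{sec:equal-probabilities-weighted-dual} sets $\beta_v = w_{u'} p (1 - f(\ell_{u'}))$ whenever $v$ is matched to $u'$, every good $v \in S$ contributes at least $w_u p (1 - f(\ell_u^\infty))$ to $\sum_{v \in S} \beta_v$.

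Next, I would split on the value of $\theta_u$. When $\theta_u \geq \ell_u^\infty$, which happens with probability $e^{-\ell_u^\infty}$, every $v \in S$ is good by the first bullet of the characterization, so $\sum_{v \in S} \beta_v \geq |S| \cdot w_u p (1 - f(\ell_u^\infty)) \geq w_u p_u(S) (1 - f(\ell_u^\infty))$, where the last inequality uses $|S| p \geq p_u(S)$. When $0 \leq \theta_u < \ell_u^\infty$, Lemma~\ref{lem:equal-probabilities-structural-general}, applied to the order-based partition induced by $w_{u'}(1 - f(\ell_{u'}))$, says that at most $p^{-1}(\ell_u^\infty - \theta_u)$ vertices of $S$ can fail to be good. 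Hence at least $\bigl(|S| - p^{-1}(\ell_u^\infty - \theta_u)\bigr)^+$ vertices in $S$ are good, which yields $\sum_{v \in S} \beta_v \geq w_u (1 - f(\ell_u^\infty)) \bigl(p_u(S) - (\ell_u^\infty - \theta_u)\bigr)^+$ after using $|S| p \geq p_u(S)$ again.

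Finally, I would take expectation over $\theta_u$, which is distributed as an exponential with mean $1$. Integrating the case-2 bound against density $e^{-\theta_u}$ on $[0, \ell_u^\infty)$ and adding the case-1 contribution $e^{-\ell_u^\infty} \cdot w_u p_u(S)(1 - f(\ell_u^\infty))$, then factoring out the common $w_u (1 - f(\ell_u^\infty))$, recovers exactly the right-hand side of the lemma. The argument is essentially parallel to the unweighted proof; the only conceptual point to verify is that the order-based partition used by Algorithm~\ref{alg:equal-probabilities-weighted} falls under the hypothesis of the general structural lemma, which was stated precisely to accommodate this generalization. Once that is noted, no new calculation beyond the unweighted case is required, and the main obstacle (getting a tight count of good vertices in the weighted regime) is already absorbed into Lemma~\ref{lem:equal-probabilities-structural-general}.
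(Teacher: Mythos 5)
Your proposal is correct and follows essentially the same route as the paper: invoke the preceding case-by-case characterization, which in turn rests on Lemma~\ref{lem:equal-probabilities-structural-general} applied to the order-based partition induced by $w_{u'}p\bigl(1-f(\ell_{u'})\bigr)$, then integrate against the exponential density. The paper keeps the conversion from a count of good vertices to the $p_u(S)$-based bound implicit, whereas you spell out the intermediate steps (each good vertex contributes at least $w_u p(1-f(\ell_u^\infty))$, and $|S|\,p \ge p_u(S)$), but these are the same calculation.
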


\begin{proof}
    By the above characterization, all vertices $v \in S$ are matched to an offline copy with a value at least $w_u p (1-f(\ell_u^\infty))$ at the time of the matches when $\theta_u \ge \ell_u^\infty$.
    This happens with probability $e^{-\ell_u^\infty}$ and contributes: 
    \[
        \textstyle
        e^{-\ell_u^{\infty}} p_u(S) \cdot w_u \big( 1 - f(\ell_u^{\infty}) \big)
        ~,
    \]
    to the expectation in the lemma. 

    Similarly, all vertices $v \in S$, except for $p^{-1} \big( \ell_u^\infty - \theta_u \big)$ of them, are still matched to offline copies with the value at least $w_u p (1-f(\ell_u^\infty))$ at the time of the matches when $0 \le \theta_u < \ell_u^\infty$.
    This contributes:
    \[
        \textstyle
        \int_0^{\ell_u^{\infty}} e^{-\theta_u} \big( p_u(S) - (\ell_u^{\infty}-\theta_u) \big)^+ d\theta_u \cdot w_u \big( 1 - f(\ell_u^{\infty}) \big)
        ~,
    \]
    to the expectation in the lemma. 
    Putting together proves the lemma
\end{proof}

\begin{proof}[Proof of Theorem~\ref{thm:equal-probabilities} (Vertex-weighted Case)]
    By Lemma~\ref{lem:alpha-equal-prob-weighted} and Lemma~\ref{lem:beta-equal-prob-weighted}, we have:
    \begin{align*}
        & \textstyle
        \E \big[ \alpha_u + \sum_{v \in S} \beta_v \big] \\
        & \textstyle\quad
        \ge
        w_u \bigg( \int_0^{\ell_u^\infty} e^{-\theta_u} f(\theta_u) d\theta_u + \left( e^{-\ell_u^{\infty}} p_u(S) + \int_0^{\ell_u^{\infty}} e^{-\theta_u} \big( p_u(S) - (\ell_u^{\infty}-\theta_u) \big)^+ d\theta_u \right) \big( 1 - f(\ell_u^{\infty}) \big) \bigg)
        ~.
    \end{align*}
    Thus, the dual feasibility derives the identical differential inequality to the unweighted case in Section~\ref{sec:equal-probabilities-primal-dual-unweighted}.
\end{proof}

\section{Unequal Probabilities}
\label{sec:unequal-probabilities}

This section presents the main result of the paper, namely, a $0.572$-competitive algorithm for the more general case of vanishing and unequal probabilities (Theorem~\ref{thm:unequal-probabilities}).
It improves the best previous result of $0.534$ by \citet{MehtaWZ/SODA/2015} in this setting and, surprisingly, is better even than the best previous bound of $0.567$ by \citet{MehtaP/FOCS/2012} in the more restricted setting of vanishing and equal probabilities.

We will consider the following sequence of related problems that are easier to design an algorithm for.
All of them consider a bipartite graph together with the success probabilities of the edges as defined in Section~\ref{sec:model}.
Further, we assume that the success probabilities are upper bounded by some sufficiently small parameter $p > 0$.
We shall establish the competitive ratio with inequalities of the form:
\begin{equation}
    \label{eqn:unequal-competitive}
    \alg \ge \Gamma \cdot \opt - o(1)
    ~,
\end{equation}
where $\Gamma$ is the competitive ratio, and $o(1)$ is a term that tends to $0$ as $p$ tends to $0$.

\begin{itemize}
    \item \textbf{Online Matching with Stochastic Rewards.~}
    We start by recalling the original problem. 
    On the arrival of each online vertex, the algorithm matches it to an offline neighbor which is currently unsuccessful. 
    Then, it succeeds with the probability associated with the edge. 
    \item \textbf{Online (Integral) Matching with Stochastic Budgets.~}
    At the beginning, each offline vertex $u$ independently draws a budget $\theta_u$ from the exponential distribution with mean $1$. 
    On the arrival of an online vertex $v$, the algorithm may match it to any offline neighbor and collect a gain which equals the success probability of the edge.
    However, the gain due to each offline vertex $u$ is capped by its budget $\theta_u$, i.e., it is equal to the sum of success probabilities of the edges matched to $u$, denoted as its load, or $\theta_u$, whichever is smaller.
    Further, $\theta_u$ is unknown to the algorithm until the moment that the total gain due to $u$ exceeds the budget.
    \item \textbf{Online (Fractional) Matching with Stochastic Budgets.~}
    This is the fractional version of the previous problem.
    The only difference is that on the arrival of each online vertex, we may fractionally matched it to multiple offline vertices, provided the total mass does not exceed $1$.
    The increases in the loads of offline vertices are adjusted accordingly, i.e., if a $0.5$ fraction of an online vertex $v$ is matched to $u$, $u$'s load increases by $0.5 p_{uv}$.
\end{itemize}

The reduction from the original problem to the second one formalizes the alternative viewpoint proposed by \citet{MehtaP/FOCS/2012} in the setting of vanishing and unequal probabilities.
The reduction from the second one to the third further simplifies the problem by allowing fractional matching.
Section~\ref{sec:reductions} presents the reductions and shows that it suffices to design a deterministic fractional algorithm for the last problem.

Then, we introduce such a deterministic algorithm in Section~\ref{sec:unequal-probabilities-fractional-algorithm}.
Finally, we explain in Section~\ref{sec:unequal-probabilities-alternating-path} why the alternating path argument used in the case of equal probabilities fails to be generalized.

\subsection{Reductions}
\label{sec:reductions}

In this subsection, we introduce the reductions between the problems and show that it suffices to design a competitive online algorithm for the third problem, i.e., online (fractional) matching with stochastic budgets.

\begin{lemma}
    \label{lem:budget-to-reward}
    Given any online algorithm $A$ that is $\Gamma$-competitive for online (integral) matching with stochastic budgets in the sense of Eqn.~\eqref{eqn:unequal-competitive}, there is an online algorithm $A'$ that is $\Gamma$-competitive for online matching with stochastic rewards in the sense of Eqn.~\eqref{eqn:unequal-competitive}.
\end{lemma}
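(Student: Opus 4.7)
The plan is to construct $A'$ by running $A$ as a black-box subroutine on the rewards instance, translating feedback through the alternative viewpoint of the rewards problem. Recall from the preliminaries that for each offline vertex $u$, one may equivalently sample a threshold $\theta_u$ from the distribution $D_p$ that places mass $p(1-p)^{i-1}$ on the interval $[(i-1)p, ip)$, accumulate the load $\ell_u$ as the sum of $p_{uv}$ over edges matched to $u$, and declare $u$ successful the moment $\ell_u$ crosses $\theta_u$; under this viewpoint the expected number of successful offline vertices equals $\E[\sum_u \ell_u]$.

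The construction is as follows. Couple the randomness via inverse CDFs: let $\{U_u\}_{u \in U}$ be i.i.d.\ uniform on $[0,1]$, set the rewards threshold to $\theta_u = F_{D_p}^{-1}(U_u)$, and feed $A$ the exponential budget $\theta_u^A = -\ln(1-U_u)$. A direct computation shows $\theta_u \le \theta_u^A \le \theta_u + p + O(p^2)$. On the arrival of each online $v$, $A'$ queries $A$ for its match decision, attempts that match in the rewards problem, and reports ``budget exceeded'' to $A$ precisely when the load of some vertex $u$ inside $A$'s internal view crosses $\theta_u^A$. Without loss of generality $A$ ceases to match to a vertex once its budget is exceeded, since one may always modify $A$ to do so without decreasing its gain.

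Under this coupling, $A$'s decisions on its simulated budget instance and $A'$'s actions on the rewards instance agree on every online arrival except when the load of some $u$ lies in the narrow window $(\theta_u, \theta_u^A]$: there $u$ has already succeeded in the rewards view, but $A$ still believes its budget is active. The window has width at most $p + O(p^2)$, so the per-vertex load discrepancy $|\ell_u^A - \ell_u|$ is $O(p)$, and the aggregate discrepancy vanishes as $p \to 0$. The alternative viewpoint then equates the expected value of $A'$ on the rewards instance with $\E[\sum_u \ell_u] = \E[\sum_u \min(\ell_u^A, \theta_u^A)] + o(1)$, and the latter is exactly $A$'s expected gain on its budget instance. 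Combining with the hypothesis $\E[\sum_u \min(\ell_u^A, \theta_u^A)] \ge \Gamma \cdot \opt - o(1)$, and observing that $\opt$ is the same standard matching LP value in the two problems, completes the reduction.

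The main obstacle is bridging the unit mismatch between the two objectives: budget gain is a real-valued $\min(\ell_u, \theta_u)$ per offline vertex, whereas the rewards value is a $0/1$ indicator of success per vertex. The alternative viewpoint resolves this via the identity relating expected successes to expected total load, but one must carefully account for (i) the $O(p)$ discrepancy between $D_p$ and the exponential distribution under the coupling, and (ii) the overshoot of the load past the threshold at the final match, both of which contribute only $o(1)$ in aggregate thanks to the assumption $p \to 0$.
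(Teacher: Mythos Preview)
Your coupling hinges on a threshold distribution $D_p$ that is only defined for \emph{equal} probabilities, while this lemma (and all of Section~\ref{sec:unequal-probabilities}) concerns the general unequal case. When the edges incident to $u$ carry distinct probabilities $p_{uv_1}, p_{uv_2}, \ldots$, the law of ``the load at which $u$ first succeeds'' places mass $p_{uv_k}\prod_{j<k}(1-p_{uv_j})$ on the interval $\big[\sum_{j<k} p_{uv_j},\, \sum_{j\le k} p_{uv_j}\big)$; this law depends on the \emph{sequence} of edges matched to $u$, which is determined by $A$'s adaptive decisions, which in turn depend on the thresholds of all offline vertices. There is therefore no fixed CDF $F_{D_p}$ to invert before the run, so the pre-sampled coupling $\theta_u = F_{D_p}^{-1}(U_u)$ and the window bound $\theta_u \le \theta_u^A \le \theta_u + p + O(p^2)$ are not well-defined here. (Even in the equal case the window is $p + O(p\,\theta_u)$ rather than $p + O(p^2)$, though that is minor.) A second, smaller gap: in the window $(\theta_u,\theta_u^A]$ the vertex $u$ has already succeeded in the rewards instance, so $A'$ cannot ``attempt that match'' as you write; you need to say what $A'$ actually does with that online vertex and argue the loss is controlled.

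The paper sidesteps both issues by running the simulation in the opposite direction. $A'$ plays the rewards instance with the genuine independent Bernoulli coins and mirrors $A$'s choices; when $u$ succeeds in $A'$ on an edge taking its load from $\ell_u$ to $\ell_u + p_{uv}$, $A'$ \emph{then} draws $A$'s budget from the exponential distribution conditioned on lying in $[\ell_u,\ell_u+p_{uv})$ and feeds it to $A$. Under this construction $A$ and $A'$ make identical matching decisions at every step, and the remaining work is to show that the budget $A$ receives---whose law now depends on the realized matching sequence---is close in total variation to a true exponential. The paper proves an $O(\sqrt{p})$ per-vertex total-variation bound and combines the offline vertices via a hybrid argument. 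If you try to repair your approach by letting the threshold law depend on the realized sequence, you are essentially forced into this same adaptive construction.
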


\begin{lemma}
    \label{lem:fractional-to-integral}
    Given any deterministic online algorithm $A$ that is $\Gamma$-competitive for online (fractional) matching with stochastic budgets in the sense of Eqn.~\eqref{eqn:unequal-competitive}, there is a randomized online algorithm $A'$ that is $\Gamma$-competitive for online (integral) matching with stochastic budgets in the sense of Eqn.~\eqref{eqn:unequal-competitive}.
\end{lemma}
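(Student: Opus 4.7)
The plan is to have $A'$ internally simulate $A$ on the same online stream and independently round $A$'s fractional split to a single integer match. Fix a shift parameter $\varepsilon = \varepsilon(p)$ with $\varepsilon \to 0$ and $\varepsilon^2/p \to \infty$ as $p \to 0$ (for instance $\varepsilon = p^{1/3}$), and couple the two threshold vectors so that the fractional threshold $\tilde{\theta}_u$ used inside the simulation equals the integral threshold $\theta_u^i$ plus $\varepsilon$. Since $A'$ only learns $\theta_u^i$ at the moment the integral load $\ell_u^i$ first reaches $\theta_u^i$, the simulation of $A$ is carried out lazily: before that moment $A$ treats $u$ as still available; the instant $\theta_u^i$ is revealed, $A'$ inspects the deterministic fractional trajectory $\ell_u^f$ and declares ``$u$ has succeeded fractionally'' at the first time it crossed $\tilde{\theta}_u$. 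Upon the arrival of each online $v$, $A$ returns a split $(x_{uv})_u$ with $\sum_u x_{uv} \le 1$ and $A'$ matches $v$ to $u$ with probability $x_{uv}$, independently across $v$ (with the residual probability $1-\sum_u x_{uv}$ leaving $v$ unmatched).

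\textbf{Concentration and reward transfer.} Conditioning on $\tilde{\theta}$ and on the input makes $A$'s trajectory $\ell_u^f(\cdot)$ and its splits $x_{uv}$ deterministic, so for each offline $u$ the discrepancy $\ell_u^i(t)-\ell_u^f(t)=\sum_{v\text{ arrived by }t} p_{uv}(\mathbf{1}[v\mapsto u]-x_{uv})$ is a sum of independent mean-zero random variables bounded in absolute value by $p$, with variance sum at most $p\cdot\ell_u^f(\infty)\le p(\tilde{\theta}_u+p)$. Applying Lemma~\ref{lem:maximal-bernstein} in each direction at level $\varepsilon/2$ yields
\[
    \Pr\!\big[\sup_t |\ell_u^i(t)-\ell_u^f(t)|\ge \varepsilon/2 \mid \tilde{\theta}\big] \le 4\exp\!\big(-\Omega(\varepsilon^2/p)\big)=o(1).
\]
On this good event I check that, for every $u$, $\min(\ell_u^i(\infty),\theta_u^i)\ge \min(\ell_u^f(\infty),\tilde{\theta}_u)-\varepsilon$: if $\ell_u^f(\infty)\ge \tilde{\theta}_u$, concentration forces $\ell_u^i(\infty)\ge \tilde{\theta}_u-\varepsilon/2\ge \theta_u^i$, so the integral collects the full $\theta_u^i=\tilde{\theta}_u-\varepsilon$; otherwise the two loads differ by at most $\varepsilon/2$ and a short case split (depending on whether the integral has itself saturated) gives the same bound. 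Summing with the weights $w_u$ and absorbing the bad-event contribution into a term of order $(\sum_u w_u)\exp(-\Omega(\varepsilon^2/p))=o(1)$ yields $\E[A'(\theta^i)]\ge \E[A(\tilde{\theta})]-o(1)$.

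\textbf{From $\tilde{\theta}$ back to $\exp(1)$, and main obstacle.} The final step is that $\tilde{\theta}_u\sim\exp(1)+\varepsilon$ rather than $\exp(1)$ as required by the hypothesis on $A$. Couple $\tilde{\theta}_u=\theta_u^f+\varepsilon$ with $\theta_u^f\sim\exp(1)$ and compare the two runs of $A$ under the same input: the only difference is that each $u$ is announced ``successful'' $\varepsilon$ later (measured in fractional load) under $\tilde{\theta}$ than under $\theta^f$, so at most $\varepsilon$ units of matching mass per offline vertex are reallocated, bounding $|\E[A(\tilde{\theta})]-\E[A(\theta^f)]|$ by $O(\varepsilon)\sum_u w_u$. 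Combined with the hypothesis $\E[A(\theta^f)]\ge \Gamma\cdot\opt-o(1)$ and a choice of $\varepsilon$ small enough (yet with $\varepsilon^2/p\to\infty$), this completes the chain $\E[A'(\theta^i)]\ge\Gamma\cdot\opt-o(1)$. The hard part is precisely this distribution-shift step: because $A$ is adaptive, its reward is not an obviously monotone function of the threshold vector, and a pointwise comparison between $A(\tilde{\theta})$ and $A(\theta^f)$ has to be obtained by the above carefully designed coupling rather than by monotonicity. The concentration step itself is routine once Lemma~\ref{lem:maximal-bernstein} is in hand, with the one subtlety that the supremum over time is essential, since $A$ reacts to the stochastic budget being exceeded at a random moment.
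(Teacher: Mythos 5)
Your proposal follows the same architecture as the paper's proof: run the fractional algorithm $A$ in the background, independently round its splits, shift the budget by an additive offset so that the integral run can reveal the fractional budget in time, and control the discrepancy between the fractional and integral loads via a maximal-Bernstein inequality. The reward-transfer step ($\min(\ell_u^i,\theta_u^i)\ge\min(\ell_u^f,\tilde\theta_u)-\varepsilon$ on the good event) is correct. Two ingredients, however, are incomplete.

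First, the concentration bound as stated is not valid: plugging $t=\varepsilon/2$, $a=p$, $M\le p(\tilde\theta_u+p)$ into Lemma~\ref{lem:maximal-bernstein} gives an exponent of order $\varepsilon^2/(p\varepsilon+p\tilde\theta_u)$, not $\varepsilon^2/p$. Since $\tilde\theta_u$ is exponentially distributed and hence unbounded, your bound $4\exp(-\Omega(\varepsilon^2/p))$ fails conditionally on large $\tilde\theta_u$, and relatedly, ``absorbing the bad-event contribution into a term of order $(\sum_u w_u)\exp(-\Omega(\varepsilon^2/p))$'' is not justified because the reward on the bad event is not bounded by $\sum_u w_u$ — it scales with $\theta_{\max}$, which has an exponential tail. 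The paper handles both issues by first splitting off the event $\theta_{\max}>\bar\theta$ with $\bar\theta=\Theta(\log n/\sqrt[3]{p})$, bounding that tail's expected contribution directly, and then applying Bernstein only on the truncated body; this truncation is also why the shift is taken to be $\delta=O(\sqrt[3]{p}\log n)$ rather than $p^{1/3}$.

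Second, and more fundamentally, the distribution-shift step is a genuine gap. You assert that coupling $\tilde\theta_u=\theta_u^f+\varepsilon$ and comparing the two runs of $A$ yields $|\E[A(\tilde\theta)]-\E[A(\theta^f)]|\le O(\varepsilon)\sum_u w_u$, on the grounds that ``at most $\varepsilon$ units of matching mass per offline vertex are reallocated.'' That claim does not hold for an arbitrary deterministic $A$: when a vertex $u$ becomes unavailable $\varepsilon$ units of load earlier, the mass rerouted from $u$ can push other vertices past their budgets, which reroutes more mass, and so on — exactly the cascading effect the paper itself illustrates in Section~\ref{sec:unequal-probabilities-alternating-path}. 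For the paper's \emph{specific} fractional greedy, such control is available via Lemma~\ref{lem:load-invariant} and Lemma~\ref{lem:unequal-probabilities-structural}, but the reduction in Lemma~\ref{lem:fractional-to-integral} is stated for an arbitrary $\Gamma$-competitive deterministic $A$, for which no such monotonicity is available. The paper sidesteps the coupling entirely and instead bounds the per-coordinate total variation distance between $\exp(1)$ and $\delta+\exp(1)$ by $O(\delta)$, takes a union bound over the $n$ coordinates, and caps the reward on the mismatched mass by $O(\log(1/\delta))\sum_u w_u$ after a second truncation; this gives the crude but robust bound $O(\delta n\log(1/\delta)\sum_u w_u)=o(1)$, which suffices and, crucially, requires no structural assumptions on $A$.
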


\subsubsection{Proof of Lemma~\ref{lem:budget-to-reward}}

We will construct algorithm $A'$ as follows, using algorithm $A$ as a blackbox.

\begin{enumerate}
    \item On the arrival of each online vertex $v$, algorithm $A'$ matches it to the same offline vertex that algorithm $A$ chooses.
    \item Suppose an offline vertex $u$ succeeds after an edge $(u, v)$ is matched to it in algorithm $A'$.
    Further suppose that before this edge $u$'s load is $\ell_u$.
    Then, draw budget $\theta_u$ from the exponential distribution with mean $1$ conditioned on the value falls into the interval $[\ell_u, \ell_u+p_{uv})$.
    Use it in algorithm $A$.
\end{enumerate}

\begin{lemma}
    [Lemma 2 of \citet{MehtaP/FOCS/2012}]
    \label{lem:success-to-load}
    Given any algorithm for online matching with stochastic rewards, and any offline vertex $u$, the probability that $u$ is successfully matched is equal to the expectation of its load, i.e., the sum of success probabilities of the edges that are matched to $u$ by the algorithm.
\end{lemma}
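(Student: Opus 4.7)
The plan is to leverage the fact that in the original problem an offline vertex can succeed at most once, so that its indicator of being successful coincides with a count of successful matches that can be expanded by linearity. Concretely, fix the offline vertex $u$ and let $L_u = \sum_{e \text{ matched to } u} p_e$ denote its load and $X_u$ denote the number of edges matched to $u$ that actually succeed. Since the algorithm stops matching to $u$ the moment it becomes successful, $X_u \in \{0,1\}$ almost surely, and hence $\E[X_u] = \Pr[u \text{ is successful}]$. The entire task then reduces to showing $\E[X_u] = \E[L_u]$.

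For the second step, I would write $X_u = \sum_{e \ni u} M_e \cdot R_e$, where $M_e$ is the indicator that the algorithm matches $e$ and $R_e$ is the independent $\mathrm{Bernoulli}(p_e)$ coin deciding success. The crucial observation is that $M_e$ depends only on the graph structure, the arrival sequence, and the coin flips $R_{e'}$ for edges $e'$ arriving strictly before $e$; in particular $M_e$ is independent of $R_e$. Consequently $\E[M_e R_e] = p_e \cdot \Pr[M_e]$, and summing over $e$ gives
\[
    \E[X_u] \;=\; \sum_{e \ni u} p_e \cdot \Pr[M_e] \;=\; \E\Bigl[\sum_{e \ni u} p_e \cdot M_e\Bigr] \;=\; \E[L_u],
\]
which combined with $\E[X_u] = \Pr[u \text{ is successful}]$ yields the lemma.

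The only point that needs some care, and which I would consider the main (though small) obstacle, is justifying the independence between $M_e$ and $R_e$. This is immediate once one is explicit that the randomness in online matching with stochastic rewards consists of one independent coin per edge, revealed when the match is attempted, and that the algorithm is only adaptive to previously revealed coins. I would therefore state this independence explicitly at the outset so that the linearity-of-expectation step in the display above is unambiguous, and the rest of the argument is a one-line calculation.
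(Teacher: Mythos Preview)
Your proof is correct. The paper does not actually prove this lemma; it is quoted verbatim from \citet{MehtaP/FOCS/2012} as a known fact, so there is no in-paper argument to compare against. Your argument---observing that $X_u\in\{0,1\}$ because matching to $u$ halts upon success, and then using that the match indicator $M_e$ is measurable with respect to randomness revealed strictly before the coin $R_e$ is flipped---is exactly the standard proof (and is essentially the one in the cited source). One cosmetic point: you should also allow $M_e$ to depend on the algorithm's own internal random bits, which are likewise independent of $R_e$; this changes nothing in the calculation but makes the independence claim airtight for randomized algorithms.
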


\begin{lemma}
    For any given bipartite graph $G = (L, R, E)$ and any success probabilities associated with the edges, the expectations of the weighted sum of loads of offline vertices given by $A'$ and $A$ differ by at most $O\big(\sqrt{p} n \sum_{u \in L} w_u \big)$.
\end{lemma}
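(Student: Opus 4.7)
The plan is to couple the randomness driving the stochastic rewards process (for $A'$) with the randomness driving the stochastic budgets process (for $A$) so that the two algorithms see nearly identical observation streams, and then to bound the residual load discrepancy via the Maximal-Bernstein inequality (Lemma~\ref{lem:maximal-bernstein}).

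First I would set up a joint coupling. For each offline vertex $u$, independently sample $\theta_u$ from the exponential distribution with mean~$1$, and independently for each edge $(u,v)$ sample a uniform variable $U_{uv} \in [0,1]$. Declare the match $(u,v)$ successful in $A'$ iff $U_{uv} < p_{uv}$, and use $\theta_u$ as $A$'s budget. Under this coupling both algorithms operate on correctly distributed inputs. Next, introduce an auxiliary process $\tilde{A}'$ that shadows $A$'s decisions exactly but declares $u$ ``successful'' precisely at the moment the cumulative load on $u$ first crosses $\theta_u$. By construction $A$ and $\tilde{A}'$ produce identical matching sequences and identical loads at every offline vertex, so the task reduces to bounding the expected $\ell_1$ discrepancy between the load sequences of $A'$ and $\tilde{A}'$.

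Fix any offline vertex $u$, and let $v_1, v_2, \ldots$ denote the sequence of online vertices routed to $u$ by the common decisions (under the coupled randomness), with cumulative load $\ell_i = \sum_{j \le i} p_{u v_j}$. The first success in $A'$ lands at index $I = \min\{i : U_{u v_i} < p_{u v_i}\}$, while in $\tilde{A}'$ it lands at $\tilde{I} = \min\{i : \ell_i > \theta_u\}$. Define $X_i = \mathbf{1}[U_{u v_i} < p_{u v_i}] - \mathbf{1}[\ell_{i-1} < \theta_u \le \ell_i]$, centered by its conditional mean, whose magnitude is $O(p_{u v_i}^2)$ from the Taylor expansion $1-e^{-p_{uv_i}} = p_{uv_i} - O(p_{uv_i}^2)$. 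Each $X_i$ satisfies $|X_i| \le 1$ and $\mathbf{E}[X_i^2] = O(p_{u v_i})$, so the total variance along the trajectory is $O(\sum_i p_{u v_i}) = O(1)$. Invoking Lemma~\ref{lem:maximal-bernstein} with $a = 1$ and $M = O(1)$ shows the maximal partial sum stays at $O(\sqrt{p \log(1/p)})$ with overwhelming probability, which in turn forces $|\ell_I - \ell_{\tilde I}| = O(\sqrt{p})$ in expectation for vertex $u$.

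Finally I would aggregate the per-vertex discrepancy across the LHS, weighted by $w_u$. A single divergence at one offline vertex can propagate through $A$'s subsequent matching decisions, perturbing the matches of up to $n$ later online vertices; chaining the Maximal-Bernstein bound along the online arrival sequence, the cumulative weighted-load discrepancy inflates by at most a factor of $n$, yielding the claimed $O\big(\sqrt{p} \, n \sum_{u \in L} w_u\big)$ bound. The main obstacle I anticipate is precisely this cascade step: a mismatch at an early offline vertex changes $A$'s state and hence the future matching trajectory, so one must carefully bound the propagation to avoid an exponential blow-up. The Maximal-Bernstein inequality is what delivers the $\sqrt{p}$ rate rather than the weaker $O(p)$ per-match bound, and this tighter rate is exactly what makes the final error term vanish as $p \to 0$ for any fixed graph.
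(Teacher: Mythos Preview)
Your proposal has a genuine gap at the cascade step, and the tool you reach for (Maximal-Bernstein) is not what resolves this lemma. The central difficulty you correctly identify --- that a single early divergence between $A$ and $A'$ perturbs all subsequent matching decisions --- is not handled by ``chaining'' a concentration bound along the arrival sequence. Under your coupling, once the two processes disagree on whether some offline vertex has succeeded, the sequences $v_1, v_2, \ldots$ of online vertices routed to a given $u$ are no longer the same in $A'$ and $\tilde{A}'$, so your $X_i$'s are not defined on a common trajectory in the way you need. Your final paragraph gestures at multiplying by $n$ to absorb the propagation but supplies no mechanism; without additional structure the cascade could amplify exponentially rather than linearly.

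The paper sidesteps the cascade entirely via a hybrid argument. Set $A_0 = A$ (all budgets exponential) and $A_n = A'$ (all budgets induced by the rewards process), where $A_i$ uses rewards-induced budgets for the first $i$ offline vertices and exponential budgets for the rest. Consecutive hybrids $A_{i-1}$ and $A_i$ differ only in the budget distribution of the single offline vertex $i$. Conditioning on $\theta_i = \infty$, the two make identical decisions and route the same sequence $v_1, v_2, \ldots$ to $i$; one then compares directly the probability that $i$'s budget lands in the interval $[\sum_{j<k} p_{iv_j}, \sum_{j \le k} p_{iv_j})$, namely $\exp(-\sum_{j<k} p_{iv_j})(1-\exp(-p_{iv_k}))$ under the exponential law versus $\prod_{j<k}(1-p_{iv_j}) \cdot p_{iv_k}$ under the rewards law. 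Elementary estimates (splitting at cumulative load $1/\sqrt{p}$) show these agree up to total variation $O(\sqrt{p})$, giving $O(\sqrt{p} \sum_{u} w_u)$ per hybrid step; the factor $n$ arises from summing the $n$ hybrid steps. No concentration inequality is needed --- the Maximal-Bernstein lemma appears elsewhere in the paper (the fractional-to-integral reduction), not here.
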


\begin{proof}
    Let us assume $L = \{1, 2, \dots, n\}$ for simplicity of expositions.
    We proceed with a hybrid argument. 
    Let $A_0$ be algorithm $A$, where the budgets are drawn independently from the exponential distribution with mean $1$.
    Let $A_i$ be an algorithm obtained by realizing the budgets of the first $i$ offline vertices in the way specified in algorithm $A'$.
    Then, $A_n$ is algorithm $A'$.
    It suffices to show that, for any $1 \le u \le n$, the expectations of the sum of weighted loads of offline vertices given by $A_{i-1}$ and $A_i$ differ by at most $O\big(\sqrt{p} \sum_{u \in L} w_u \big)$.
    To do so, it suffices to show that the realizations of $i$'s budget in the two algorithms have a total variation distance at most $O\big(\sqrt{p}\big)$.
    
    Let us first assume the budget of $i$, i.e., the offline vertex that the two algorithms differ in, has a load of $+\infty$.
    In this case, both algorithms realize the same matching.
    Suppose $v_1, v_2, \dots, v_N$ are the online vertices matched to $u$ by the algorithms, with success probabilities $p_{uv_1}, p_{uv_2}, \dots, p_{uv_N}$.
    Then, in the actual realization of the budget in algorithm $A_{i-1}$, for any $1 \le k \le N$, the probability that the algorithm realizes a budget between $\sum_{j=1}^{k-1} p_{uv_j}$ and $\sum_{j=1}^k p_{uv_j}$ is:
    \[
        \int_{\sum_{j=1}^{k-1} p_{uv_j}}^{\sum_{j=1}^k p_{uv_j}} \exp\big(-z\big) dz 
        = \exp\bigg(\sum_{j=1}^{k-1} p_{uv_j}\bigg) \bigg( 1 - \exp\big(-p_{uv_k}\big) \bigg)
        ~.
    \]
    
    On the other hand, in the actual realization of the budget in algorithm $A_i$, for any $1 \le k \le N$, the probability that the algorithm realizes a budget between $\sum_{j=1}^{k-1} p_{uv_j}$ and $\sum_{j=1}^k p_{uv_j}$ is:
    \[
        \prod_{j=1}^{k-1} \big(1 - p_{uv_j} \big) p_{uv_k}
        ~.
    \]
    
    First, note that:
    \[
        p_{uv_k} \ge 1 - \exp\big(-p_{uv_k}\big)
        ~,
    \]
    and
    \[
        1 - \exp \big(-p_{uv_k}\big) \ge 1 - \big( 1 - p_{uv_k} + p_{uv_k}^2 \big) = p_{uv_k} \big( 1 - p_{uv_k} \big) \ge p_{uv_k} (1 - p)
        ~.
    \]
    
    Further, we also have:
    \[
        \exp\bigg(\sum_{j=1}^{k-1} p_{uv_j}\bigg) \ge \prod_{j=1}^{k-1} \big( 1 - p_{uv_j} \big)
        ~,
    \]
    and
    \begin{align*}
        \exp\bigg(\sum_{j=1}^{k-1} p_{uv_j}\bigg)
        &
        \le \prod_{j=1}^{k-1} \big( 1 - p_{uv_j} + p_{uv_j}^2 \big) \\
        &
        \le \prod_{j=1}^{k-1} \big( 1 - p_{uv_j} \big) \big( 1 + 2 p_{uv_j}^2 \big) \\
        & 
        \le \prod_{j=1}^{k-1} \big( 1 - p_{uv_j} \big) \exp \big( 2 p_{uv_j}^2 \big) \\
        & 
        = \bigg( \prod_{j=1}^{k-1} \big( 1 - p_{uv_j} \big) \bigg) \exp \bigg( 2 \sum_{j=1}^{k-1} p_{uv_j}^2 \bigg) \\
        &
        \le \bigg( \prod_{j=1}^{k-1} \big( 1 - p_{uv_j} \big) \bigg) \exp \bigg( 2p \sum_{j=1}^{k-1} p_{uv_j} \bigg)
    \end{align*}
    
    If $\sum_{j=1}^{k-1} p_{uv_j}$ is at most $\frac{1}{\sqrt{p}}$, we have (by $\exp(x) \le 1 + 2x$ for sufficiently small $x > 0$):
    \[
        \exp \bigg( 2p \sum_{j=1}^{k-1} p_{uv_j} \bigg) \le \exp \big( 2 \sqrt{p} \big) \le 1 + 4 \sqrt{p}
        ~.
    \]
    
    Putting together the above inequalities, we get that:
    \[
        \big( 1 + 4 \sqrt{p} \big) \prod_{j=1}^{k-1} \big(1 - p_{uv_j} \big) p_{uv_k}\ge \int_{\sum_{j=1}^{k-1} p_{uv_j}}^{\sum_{j=1}^k p_{uv_j}} \exp\big(-z\big) dz \ge \big(1 - p\big) \prod_{j=1}^{k-1} \big(1 - p_{uv_j} \big) p_{uv_k}
    \]
    
    If $\sum_{j=1}^{k-1} p_{uv_j}$ is larger than $\frac{1}{\sqrt{p}}$, on the other hand, the probability of it happening is small in the first place.
    The total probability mass of such events in algorithm $A_{i-1}$ is at most:
    \[
        \int_{\frac{1}{\sqrt{p}}}^{+\infty} \exp\big(-z\big) dz = \exp \bigg( - \frac{1}{\sqrt{p}} \bigg) \le \sqrt{p}
        ~.
    \]
    
    Combining the two cases shows that the total variation distance between the realizations of $i$'s budget in $A_i$ and $A_{i-1}$ is at most $O\big(\sqrt{p}\big)$.
\end{proof}

\subsubsection{Proof of Lemma~\ref{lem:fractional-to-integral}}

For better exposition of the reduction, we introduce another intermediate problem as follows:
\begin{itemize}
    \item \textbf{Online (Fractional) Matching with $\delta$-Enhanced Stochastic Budgets:~} 
    In this problem, the budget $\theta_u$ of each offline vertex $u$ is equal to $\delta$ plus a random variable independently drawn from the exponential distribution with mean $1$.
\end{itemize}

In particular, we will let $\delta = O \big( \sqrt[3]{p} \log n \big)$ for technical reasons in the analysis.
We start by proving a reduction to this intermediate problem. 

\begin{lemma}
    \label{lem:enhanded-fractional-to-integral}
    Given any deterministic online algorithm $A$ that is $\Gamma$-competitive for online (fractional) matching with $\delta$-enhanced stochastic budgets in the sense of Eqn.~\eqref{eqn:unequal-competitive}, for $\delta = O \big( \sqrt[3]{p} \log n \big)$, there is a randomized online algorithm $A'$ that is $\Gamma$-competitive for online (integral) matching with stochastic budgets in the sense of Eqn.~\eqref{eqn:unequal-competitive}.
\end{lemma}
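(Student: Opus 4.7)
The plan is to run $A$ in the background as a ``virtual'' fractional algorithm whose decisions $A'$ follows by independent rounding, exploiting the $\delta$ budget slack to absorb the concentration error from rounding. Concretely, for every offline $u$ I would draw a single exponential $\xi_u$ of mean~$1$, give $A$ the enhanced budget $\xi_u+\delta$, and use $\xi_u$ as the integral budget of $A'$. When an online vertex $v$ arrives, $A$ produces fractional weights $\{x_{uv}\}_{u\in N_v}$ with $\sum_u x_{uv}\le 1$; $A'$ then independently samples at most one neighbour $u^\ast$, taking $u^\ast = u$ with probability $x_{uv}$ and leaving $v$ unmatched with residual probability $1-\sum_u x_{uv}$, and it actually matches $(u^\ast,v)$ only if $u^\ast$ is still unsuccessful in $A'$ (i.e.\ its integral load is strictly less than $\xi_{u^\ast}$). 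Crucially, $A$ is oblivious to $A'$'s outcomes, so the whole fractional schedule is a deterministic function of $\{\xi_u\}_u$, which keeps the samplings across online vertices mutually independent.

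Second, I would run the concentration argument through Lemma~\ref{lem:maximal-bernstein}. Fix $u$, let $L_u^{(i)}$ denote $A$'s virtual load after step $i$, and let $Y_{u,j}$ be the indicator that $A'$ sampled $u$ at step $j$; set $\tilde\ell_u^{(i)}=\sum_{j\le i} Y_{u,j}\,p_{uv_j}$, the would-be integral load of $A'$ if it ignored its own budget. The bits $Y_{u,j}$ are independent Bernoullis with means $x_{uv_j}$, so the increments $p_{uv_j}(Y_{u,j}-x_{uv_j})$ are independent, mean zero, bounded in absolute value by $p$, and their second moments sum to at most $p\sum_j x_{uv_j}\,p_{uv_j}\le p(\xi_u+\delta+p)$. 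Lemma~\ref{lem:maximal-bernstein} with $t=\delta$ therefore bounds $\Pr\bigl[\max_i|\tilde\ell_u^{(i)}-L_u^{(i)}|\ge \delta\bigr]$ by $2\exp\bigl(-\Omega(\delta^2/(p(\xi_u+\delta)))\bigr)$. Combining with the tail $\Pr[\xi_u\ge 2\log n]\le n^{-2}$ and union-bounding over $u$, the choice $\delta=\Theta(\sqrt[3]{p}\log n)$ makes the bad event ``some $u$ ever sees $|\tilde\ell_u-L_u|\ge \delta$'' occur with probability $n^{-\Omega(p^{-1/3})}=o(1)$ as $p\to 0$.

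Third, I would argue that on the good event the gain lost at every $u$ is at most $\delta+O(p)$, which totals $o(1)$. On the good event, $|\tilde\ell_u-L_u|<\delta$ holds throughout, and the actual integral load $\ell_u$ agrees with $\tilde\ell_u$ until $\ell_u$ first crosses $\xi_u$; thereafter $\ell_u\in[\xi_u-p,\xi_u]$ while $A$ can continue to grow $L_u$ only up to its cap $\xi_u+\delta$. Combining these facts yields $L_u^{\text{final}}-\ell_u^{\text{final}}\le \delta+O(p)$ in every case, so the expected weighted loss on the good event is at most $(\delta+O(p))\sum_u w_u=o(1)$ as $p\to 0$ for any fixed instance. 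Adding the trivially bounded contribution from the bad event and noting that replacing the integral budgets by their $\delta$-enhanced counterparts changes $\opt$ itself by only $O(n\delta)=o(1)$, we would conclude $\E[A']\ge \Gamma\cdot \opt-o(1)$ in the sense of Eqn.~\eqref{eqn:unequal-competitive}. The main obstacle I anticipate is the bookkeeping of the three coupled processes $L_u$, $\tilde\ell_u$ and $\ell_u$ simultaneously --- in particular, verifying that conditioning on $A$'s (deterministic) trajectory leaves the sampling bits $Y_{u,j}$ independent in the precise sense required by Lemma~\ref{lem:maximal-bernstein}.
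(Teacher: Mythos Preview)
Your plan is essentially the paper's own proof: couple the budgets by feeding $A$ the enhanced budget $\xi_u+\delta$ while $A'$ runs with $\xi_u$, round $A$'s fractional output independently, and control the gap between the virtual load $L_u$ and the rounded load via the maximal-Bernstein inequality (Lemma~\ref{lem:maximal-bernstein}). Your worry about independence is a non-issue: since $A$ is deterministic and sees only the $\xi_u$'s, its entire schedule $\{x_{uv}\}$ is fixed once the $\xi_u$'s are fixed, so the rounding bits $Y_{u,j}$ are genuinely independent conditionally on $\{\xi_u\}$. The three-way bookkeeping of $L_u,\tilde\ell_u,\ell_u$ works exactly as you outline.

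There is one quantitative slip. Your truncation level $2\log n$ is independent of $p$, so the tail event $\{\xi_u\ge 2\log n\}$ contributes $n^{-2}$ per vertex and $\Theta(n^{-1})$ after the union bound. This term does \emph{not} vanish as $p\to 0$ (the instance, hence $n$, is fixed in the sense of Eqn.~\eqref{eqn:unequal-competitive}), so your claimed bound $n^{-\Omega(p^{-1/3})}$ on the bad-event probability is wrong as written, and the ``trivially bounded'' loss on that event is actually a positive constant independent of $p$. The paper fixes this by truncating instead at $\bar\theta=\Theta\big(\log n/\sqrt[3]{p}\big)$: this is still small enough that the Bernstein variance bound $\sum_v\E[X_v^2]\le 2p\bar\theta$ gives deviation $O\big(\sqrt{p\bar\theta\log n}\big)=O\big(\sqrt[3]{p}\log n\big)\le\delta$, while now the tail probability $ne^{-\bar\theta}$ and the associated expected loss $n\bar\theta e^{-\bar\theta}\sum_u w_u$ both go to zero with $p$. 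Replace $2\log n$ by a $p$-dependent threshold of this order and your argument goes through verbatim.
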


\begin{proof}
    We will construct algorithm $A'$ as follows:
    \begin{enumerate}
        \item On the arrival of each online vertex $v$:
        \begin{enumerate}
            \item Let $x^A_{uv}$'s, $u \in U$, be the fractional allocation given by algorithm $A$. 
            \item $A'$ matches $v$ to a neighbor with randomness such that it is matched to vertex $u$ with probability $x^A_{uv}$.
        \end{enumerate}
        \item If an offline vertex $u$'s load $\ell_u^{A'}$ exceeds its budget and, thus, the budget $\theta_u^{A'}$ is observed in algorithm $A'$, let its budget in $A$ be $\theta_u^A = \delta + \theta_u^{A'}$.
        \item If at any point there is an offline vertex $u$ whose load $\ell_u^{A'}$ in algorithm $A'$ is less than its load $\ell_u^A$ in algorithm $A$ by more than $\delta$, we say that algorithm $A'$ has failed (to simulate algorithm $A$) since it could be that the budget $\theta_u^A$ is already smaller than the load when it is realized as in the previous step. 
    \end{enumerate}
    
    The weighted sum of loads of algorithm $A'$ is equal to the load of algorithm $A$, modulo a difference of at most $\delta w_u$ per offline vertex $u \in L$ due to the discrepancy in its budget.
    In total, this may result in at most $O(\delta \sum_{u \in L} w_u)$ which is diminishing as $p$ tends to $0$.
    Hence, it remains to bound the losses due to the cases when algorithm $A'$ fails to simulate algorithm $A$.
    
    We first bound the losses when the maximum budget among offline vertices is large. 
    Concretely, let $\theta_{\max} = \max_{u \in U} \theta^{A'}_u$ denote the realized maximum budget in algorithm $A'$.
    Let $F_{\max}$ denote the cdf of this random variable.
    Let $\bar{\theta} = \Theta \big( \log n / \sqrt[3]{p} \big) \gg 1$ be a sufficiently large threshold (since $p$ is sufficiently small).
    We first consider the cases when $\theta_{\max} > \bar{\theta}$.
    Even if the algorithm $A'$ always fails (to simulate algorithm $A$) in these cases, the losses can be bounded.
    First, it is bounded by the total gain of algorithm $A$ in these cases, which is upper bounded by the weighted sum of budgets of offline vertices in algorithm $A$, which in turn is upper bounded by $\sum_{u \in L} w_u$, times the maximum budget $\theta_{\max}$ plus $\delta$ as defined in the above simulation.
    The latter is bounded by:
    \[
        \int_{\bar{\theta}}^{+\infty} \big( \delta + \theta_{\max} \big)
        dF_{\max}(\theta_{\max})
        =
        \big( 1 - F_{\max}(\bar{\theta}) \big) \big( \delta + \bar{\theta} \big) + \int_{\bar{\theta}}^{+\infty} \big( 1 - F_{\max}(\theta_{\max}) \big) d \theta_{\max}
        ~.
    \]
    
    Since each $\theta_u$ is independently drawn from the exponential distribution with mean $1$, for any $\theta_{\max} \ge 0$, the probability that it is larger than $\theta_{\max}$ is $\exp \big( - \theta_{\max} \big)$.
    By union bound, we have:
    \[
        1 - F_{\max}(\theta_{\max}) \le n \exp \big( - \theta_{\max} \big)
        ~.
    \]
    
    Hence, the above losses in the case when $\theta_{\max}$ is large is bounded by $\sum_{u \in L} w_u$ times at most:
    \begin{align*}
        n \big( \delta + \bar{\theta} \big) \exp \big( - \bar{\theta} \big) + \int_{\bar{\theta}}^{+\infty} \exp \big( - \theta_{\max} \big) n d \theta_{\max}
        & 
        =
        n \big( 1 + \delta + \bar{\theta} \big) \exp \big( - \bar{\theta} \big) \\
        &
        < 2 n \bar{\theta} \exp \big( - \bar{\theta} \big) \\[1ex]
        &
        < \sqrt[3]{p} n 
        ~.
    \end{align*}
    
    Next, we account for the losses when the realized budgets are at most $\bar{\theta}$.
    Fix any realization of such budgets $\theta_u^{A'}$'s in algorithm $A'$ and, thus, the budgets $\theta_u^A$'s in algorithm $A$ as well as the fractional matching chosen by $A$. 
    The remaining randomness comes from the matching decisions of algorithm $A'$, which are an independent rounding of the fractional matching by algorithm $A$.
    Fix any offline vertex $u$, we claim that, with high probability, $u$'s load in algorithm $A'$ is at least that in $A$ less $\delta$ at all time.
    This follows from a maximal-Bernstein-style theorem which we state as Lemma~\ref{lem:maximal-bernstein}. For any online vertex $v$, let $X_v = \big( x^A_{uv} - x^{A'}_{uv} \big) \cdot p_{uv}$.
    Then, $X_v$'s are independent random variables with zero means and $|X_v| \le p$ for all $v \in V$.
    Further, we have:
    \begin{align*}
        \sum_{v \in V} \E \big[ X_v^2 \big] 
        & 
        = \sum_{v \in V} \left( \E \big[ \big(x^{A'}_{uv}\big)^2 \big] - \big( x^A_{uv} \big)^2 \right) \cdot p_{uv}^2 \\
        &
        \le \sum_{v \in V} \E \big[ \big(x^{A'}_{uv}\big)^2 \big] \cdot p_{uv}^2 \\
        &
        \le p \sum_{v \in V} \E \big[ x^{A'}_{uv} \cdot p_{uv} \big] \\
        &
        = p \sum_{v \in V} x^{A}_{uv} \cdot p_{uv} \\
        &
        \le p \theta^A_u \\[2ex]
        & 
        \le p \big( \delta + \bar{\theta} \big) \\[2ex]
        & 
        \le 2 p \bar{\theta}
        ~.
    \end{align*}
    
    We w.l.o.g.\ assume that the online vertices are labeled by the order of their arrival. 
    By Lemma~\ref{lem:maximal-bernstein}, we get that with high probability:
    \[
        \max_{1 \le n \le N} \sum_{i=1}^n X_i < O \left( \sqrt{p \bar{\theta} \log n} \right) = O \big( \sqrt[3]{p} \log n \big) \le \delta
        ~.
    \]

    By the definition of $X_v$'s, it means that the load of $u$ in algorithm $A$ is at most that in algorithm $A'$ plus $\delta$.
    Further, by union bound, this holds for all offline vertices with high probability.
    Finally, the losses when it fails to happen is at most the total gain of algorithm $A$, which is upper bounded by the sum of the budgets of offline vertices, which is at most $\bar{\theta} \sum_{u \in L} w_u = O \big( \sum_{u \in L} w_u \log n / \sqrt[3]{p} \big)$.
    
    Putting together the two cases finishes the proof of the lemma.
\end{proof}

Finally, we explain the connection between the intermediate problem and the original fractional problem, i.e., online (fraction) matching with stochastic budgets.

\begin{lemma}
    For any algorithm $A$, the expectation of the sum of loads of offline vertices when it is run on online (fraction) matching with stochastic budgets is at least what it gets on online (fraction) matching with $\delta$-enhanced stochastic budgets, less $O \big( \delta n \log (1/\delta) \sum_{u \in L} w_u \big)$.
\end{lemma}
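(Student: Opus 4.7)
The plan is to establish the bound by a coupling argument paired with a pointwise comparison and a hybrid argument in the spirit of the proof of Lemma~\ref{lem:budget-to-reward}. For each offline vertex $u$, draw a single random variable $\xi_u$ from the exponential distribution with mean $1$, and set the budget in the standard process to $\theta_u^{\mathrm{std}} = \xi_u$ and in the $\delta$-enhanced process to $\theta_u^{\mathrm{enh}} = \xi_u + \delta$. This is a valid coupling since $\xi_u + \delta$ has the correct distribution by definition of the enhanced problem.

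Under this coupling, the first observation is a purely algebraic one: for any real $a \ge 0$ and any $\xi \ge 0$,
\[
\min(a, \xi + \delta) \le \min(a, \xi) + \delta .
\]
Summing over offline vertices with weights $w_u$, the enhanced gain of $A$ satisfies
\[
G^{\mathrm{enh}} \;=\; \sum_{u} w_u \min\!\big(\ell_u^{\mathrm{enh}}, \xi_u + \delta\big) \;\le\; \sum_{u} w_u \min\!\big(\ell_u^{\mathrm{enh}}, \xi_u\big) \;+\; \delta \sum_u w_u .
\]
Hence the task reduces to comparing $\sum_u w_u \min(\ell_u^{\mathrm{enh}}, \xi_u)$, which is the gain that $A$ would collect if we took its \emph{enhanced-execution} matching decisions but evaluated them against the \emph{standard} budgets $\xi_u$, against $G^{\mathrm{std}} = \sum_u w_u \min(\ell_u^{\mathrm{std}}, \xi_u)$, where in $G^{\mathrm{std}}$ the algorithm's matching decisions themselves change because vertices in $U$ succeed at different times.

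To compare these, I would run a hybrid argument: order the offline vertices $u_1, \dots, u_n$, and let $A_i$ be the run of $A$ with budgets $\theta_{u_j} = \xi_{u_j} + \delta$ for $j \le i$ and $\theta_{u_j} = \xi_{u_j}$ for $j > i$. Then $A_0$ is the standard run, $A_n$ is the enhanced run, and it suffices to bound $\E[G(A_i)] - \E[G(A_{i-1})] = O(\delta \log(1/\delta) \sum_u w_u)$ for each $i$. Fixing all randomness other than $\xi_{u_i}$, the two hybrids are identical up to the moment when $u_i$'s load crosses $\xi_{u_i}$, and from that moment $A_{i-1}$ has $u_i$ become successful while $A_i$ keeps it available for $\delta$ more load. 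The direct gain difference at $u_i$ is at most $w_{u_i}\delta$; the hard part is bounding the cascading effect, i.e., the change in gains at other offline vertices caused by differing decisions on subsequent online vertices. I would condition on $\xi_{u_i} \le T := \Theta(\log(1/\delta))$; outside this event, which has probability $O(\delta)$ by the exponential tail, I would bound the gain difference crudely by $\sum_u w_u$, contributing $O(\delta \sum_u w_u)$; inside the event, I would integrate the per-$\xi_{u_i}$ cascade cost over $\xi_{u_i} \in [0,T]$, which naturally yields the $\log(1/\delta)$ factor through the truncation threshold.

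The main obstacle is controlling the cascade inside the good event. Because this is the \emph{fractional} matching-with-budgets problem, a clean monotonicity statement holds: shrinking $u_i$'s budget can only (weakly) redirect fractional mass to other offline vertices and can only (weakly) advance their success times; no new mass is created. Using this monotonicity together with the fact that the total redirected mass at each step is bounded by the instantaneous matching rate of $u_i$, the per-hybrid cascade cost can be bounded by $O(\delta T \sum_u w_u) = O(\delta \log(1/\delta) \sum_u w_u)$, and summing over $n$ hybrids plus the initial pointwise $\delta \sum_u w_u$ term yields the claimed $O\!\big(\delta n \log(1/\delta) \sum_{u \in L} w_u\big)$ bound.
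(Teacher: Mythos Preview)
Your approach has a genuine gap. The monotonicity you invoke---``shrinking $u_i$'s budget can only (weakly) redirect fractional mass to other offline vertices and can only (weakly) advance their success times''---is \emph{not} a property of the fractional problem; it is a property of the specific greedy algorithm analyzed later in the paper (Lemma~\ref{lem:load-invariant}), and its proof uses the particular water-filling structure of that algorithm. The lemma here is stated for \emph{any} algorithm $A$. An arbitrary deterministic $A$ observes the exact realized budget $\theta_{u_i}$ the moment $u_i$ succeeds and may thereafter make completely different matching decisions depending on that value. So when you move from hybrid $A_{i-1}$ (budget $\xi_{u_i}$) to $A_i$ (budget $\xi_{u_i}+\delta$), the post-success trajectories can diverge arbitrarily, and nothing bounds the cascade by $O(\delta)$ per hybrid. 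There are secondary issues as well: the ``crude bound'' of $\sum_u w_u$ on the gain difference on the tail event $\{\xi_{u_i}>T\}$ is not valid, since the gain is capped by $\sum_u w_u \theta_u$, not $\sum_u w_u$, and the other $\theta_{u_j}$'s may be large; and the path from ``integrate over $\xi_{u_i}\in[0,T]$'' to a $\log(1/\delta)$ factor is not spelled out.

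The paper avoids all of this by a purely distributional argument that uses no structure of $A$: the product distributions of $(\theta_u)_u$ with and without the $\delta$-shift have total variation distance $O(\delta n)$ (each coordinate contributes $O(\delta)$), so one couples the two runs to be identical except on an event of mass $O(\delta n)$. On that event the discrepancy is bounded by the enhanced gain, which in turn is at most $\sum_u w_u \theta_u$; truncating at $\max_u \theta_u \le O(\log(1/\delta))$ gives the $O(\delta n \log(1/\delta)\sum_u w_u)$ term, and the exponential tail beyond the truncation is handled separately. If you want to rescue your coupling idea, you would need to replace the monotonicity step with something algorithm-agnostic---and the natural such replacement is exactly the total-variation argument.
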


\begin{proof}
    The total variation distance between the distribution of budgets without enhancements (i.e., drawn independently from the exponential distribution with mean $1$), and that with $\delta$-enhanced budget is at most $O(\delta n)$, where each coordinate contributes at most $O(\delta)$.
    It suffices to bound the difference due to the mismatched probability mass.
    
    Suppose for all mismatched probability mass, the realized budgets are at most $O \big( \log(1/\delta) \big)$.
    Then, the total difference is upper bounded by the gain of the algorithm in the problem with $\delta$-enhanced budgets, which in turn is upper bounded by the sum of the budgets of offline vertices.
    Hence, it is upper bounded by:
    \[
        O \big( \delta n \big) \cdot \big( \delta + O \big( \log(1/\delta) \big) \big) \cdot \sum_{u \in L} w_u 
        \le 
        O \bigg( \delta n \log (1/\delta) \sum_{u \in L} w_u \bigg)
        ~.
    \]
    
    It remains to bound the losses in the tail, i.e., when the maximum budget is more than $O \big( \log(1/\delta) \big)$.
    This part is similar to its counter part in the proof of Lemma~\ref{lem:enhanded-fractional-to-integral}.
    We include it for completeness.
    
    Recall that the budgets $\theta_u$'s in the problem with $\delta$-enhanced budgets are equal to $\delta$ plus some random variable drawn independently from the exponential distribution with mean $1$. 
    Let $\tilde{\theta}_u$'s denote these random variables.
    Let $\theta_{\max} = \max_{u \in U} \theta_u$ denote the realized maximum budget in algorithm $A$ with $\delta$-enhanced budgets.
    Let $\tilde{\theta}_{\max} = \max_{u \in U} \tilde{\theta}_u$.
    By definition, we have $\tilde{\theta}_{\max} = \theta_{\max} - \delta$.
    Let $\tilde{F}_{\max}$ denote the cdf of this random variable.
    
    Let $\bar{\theta} = c \cdot \log(1/\delta) \gg 1$ for some sufficiently large $c > 0$.
    The losses from the cases when the maximum budget $\theta_{\max}$ is at least $\delta + \bar{\theta}$, i.e., when $\tilde{\theta}_{\max}$ is at least $\bar{\theta}$, is upper bounded by the total gain of algorithm $A$ with enhanced budgets, which is upper bounded by the weighted sum of budgets of offline vertices in algorithm $A$, which in turn is upper bounded by $\sum_{u \in L} w_u$ times the maximum budget $\theta_{\max}$ plus $\delta$.
    The latter is:
    \[
        \int_{\bar{\theta}}^{+\infty} \big( \delta + \tilde{\theta}_{\max} \big)
        d\tilde{F}_{\max}(\tilde{\theta}_{\max})
        =
        \big( 1 - \tilde{F}_{\max}(\bar{\theta}) \big) \big( \delta + \bar{\theta} \big) + \int_{\bar{\theta}}^{+\infty} \big( 1 - \tilde{F}_{\max}(\tilde{\theta}_{\max}) \big) d \tilde{\theta}_{\max}
        ~.
    \]
    
    Since each $\tilde{\theta}_u$ is independently drawn from the exponential distribution with mean $1$, for any $\tilde{\theta}_{\max} \ge 0$, the probability that it is larger than $\tilde{\theta}_{\max}$ is $\exp \big( - \theta_{\max} \big)$.
    By union bound, we have:
    \[
        1 - \tilde{F}_{\max}(\tilde{\theta}_{\max}) \le n \exp \big( - \tilde{\theta}_{\max} \big)
        ~.
    \]
    
    Hence, the above losses in the case when $\theta_{\max}$ is large and, thus, $\tilde{\theta}_{\max}$ is large, is upper bounded by $\sum_{u \in L} w_u$ times at most: 
    \begin{align*}
        n \big( \delta + \bar{\theta} \big) \exp \big( - \bar{\theta} \big) + \int_{\bar{\theta}}^{+\infty} \exp \big( - \tilde{\theta}_{\max} \big) n d \tilde{\theta}_{\max}
        & 
        =
        n \big( 1 + \delta + \bar{\theta} \big) \exp \big( - \bar{\theta} \big) \\
        &
        < 2 n \bar{\theta} \exp \big( - \bar{\theta} \big) \\[1ex]
        &
        < O \big( \delta n \log (1/\delta) \big)
        ~.
    \end{align*}
    
    Putting together proves the lemma.
\end{proof}

\subsection{Online (Fractional) Matching with Stochastic Budgets}
\label{sec:unequal-probabilities-fractional-algorithm}

This section completes the proof of Theorem~\ref{thm:unequal-probabilities} by giving a deterministic online algorithm for the online (fractional) vertex-weighted matching with stochastic budgets, and the corresponding competitive analysis for the desired ratio. 

\subsubsection{Algorithm}

The algorithm is driven by the randomized online primal dual framework.
In particular, recall that gain sharing rule in the case of equal probabilities.
When the algorithm picks an edge $(u, v)$, the gain of $p$ from the edge is split among $u$ and $v$ according to the current load $\ell_u$ and the weight $w_u$ of the offline vertex $u$ and some non-decreasing function $f$:
the dual variable of $u$ increases by $p w_u f(\ell_u)$;
the dual variable of $v$ is set to be $p w_u \big( 1 - f(\ell_u) \big)$.
Stochastic Balance and Algorithm~\ref{alg:equal-probabilities-weighted} can both be viewed as matching the online vertex $v$ to the offline neighbor that gives $v$ the largest share of the gain.
Following the same reason, our algorithm continuously matches infinitesimal fraction of $v$ to the offline neighbor with the largest $p_{uv} w_u \big( 1 - f(\ell_u) \big)$, which equals the $v$'s sharing of the gain (per unit of match).
Algorithm~\ref{alg:unequal-probabilities} presents the formal definition of the algorithm.


\begin{algorithm}
    \caption{A deterministic online algorithm for online (fractional) matching with stochastic budgets, parameterized by a non-decreasing function $f : \R^+ \mapsto \R^+$}
    \label{alg:unequal-probabilities}
    \For{each online vertex $v \in V$}{
        initialize $x_{uv}=0$ for all offline neighbors  $u \in N_v$\\
        \For{$t$ increasing from $0$ to $1$ continuously}{
            let $u^*$ be the unsuccessful neighbor with the largest $p_{uv} w_u \big( 1 - f(\ell_u) \big)$\\
            increase $x_{u^*v}$ with rate $1$, i.e., for any $v$'s neighbor $u$:
            \[
                \frac{dx_{uv}}{dt} = \begin{cases}
                    1 & u = u^* \\
                    0 & u \ne u^*
                \end{cases}
            \]
        }
    }
\end{algorithm}

Alternatively, the algorithm can be interpreted as follows.
On the arrival of each online vertex $v \in V$, let $\widetilde{N}_v$ denote the set of unsuccessful neighbors of $v$.
If $\widetilde{N}_v$ is not empty, for any threshold $0 \le \tau \le p \cdot \text{max}_{u \in U} w_u$, define $x_{uv}(\tau)$, $u \in \widetilde{N}_v^*$ as follows:  
\[
    x_{uv}(\tau) = \sup \big\{ 0 \le x \le 1 : p_{uv} w_u \big( 1 - f(\ell_u + p_{uv} x) \big) \ge \tau \text{ or } \ell_u + p_{uv} x \le \theta_u \big\}
    ~,
\]

That is, we let $x_{uv}(\tau)$ be the largest possible fraction of edge $(u, v)$ to be matched, so long as the marginal gain per unit of match is at least the threshold $\tau$, and the load of $u$ does not exceed its threshold.
Let $\tau$ be such that either $\tau = 0$ and $\sum_{u \in \widetilde{N}_v} x_{uv}(\tau) < 1$, or $\tau > 0$ and $\sum_{u \in \widetilde{N}_v} x_{uv}(\tau) = 1$.
Let $x_{uv} = x_{uv}(\tau)$.
Such a $\tau$ exists because $x_{uv}(\tau)$ is non-increasing in $\tau$, and $x_{uv}(+\infty) = 0$.


\subsubsection{Dual Assignments}
\label{sec:unequal-probabilities-dual}

We now explain how to maintain a dual assignment.
It is similar to that in the case of equal probabilities, except that the gains from different edges may be different, and that the matching process is fractional.
Recall that $\ell_u$ denotes the current load of an offline vertex $u$.
For some non-decreasing function $f : \R^+ \mapsto \R^+$ to be determined in the analysis, we maintain the dual assignment as follows.

\begin{enumerate}
    \item Initially, let $\alpha_u = 0$ for all $u \in U$.
    \item On the arrival of an online vertex $v \in V$ and, thus, the corresponding dual variable $\beta_v$:
    \begin{enumerate}
        \item Initialize $\beta_v = 0$.
        \item As $t$ increases continuously from $0$ to $1$ in the algorithm, increase $\alpha_{u^*}$ and $\beta_v$ such that:
            \begin{align*}
                \frac{d \alpha_{u^*}}{dt} & = p_{u^*v} w_{u^*} f(\ell_{u^*}) \\
                \frac{d \beta_v}{dt} & = p_{u^*v} w_{u^*} \big( 1 - f(\ell_{u^*}) \big)
            \end{align*}
    \end{enumerate}
\end{enumerate}

Let $F(\ell) = \int_0^\ell f(z) dz$ denote the integration of $f$.
Note that for any offline vertex $u$, the dual variable $\alpha_u$ always increases at a rate proportional to the rate that the load increases, multiply by $ w_{u} f(\ell_u)$.
We have $\alpha_u = w_u F(\ell_u)$ by definition.


\subsubsection{Invariants}

Next, fix any offline vertex $u \in U$, and any subset $S$ of its online neighbors, and fix any thresholds $\vec{\theta}_{-u}$ of the offline vertices other than $u$.
Further, the set of online vertices naturally partition the matching process into $|V|+1$ moments, i.e., before the arrival of any vertex, after the arrival of the first vertex, after the arrival of the second vertex, and so on.
This subsection establishes some basic invariants of the matching given by the above algorithm when the threshold $\theta_u$ of vertex $u$ changes by comparing the matchings at any given moment before and after the change.

\begin{lemma}
    \label{lem:load-invariant}
    For any moment of the matching process, and any offline vertex $u' \ne u \in U$, as $\theta_u$ decreases, the load $\ell_{u'}$ of vertex $u'$ at the given moment weakly increases.
\end{lemma}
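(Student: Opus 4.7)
The plan is a discrete induction over the $|V|+1$ moments of the matching process, coupling two runs of the algorithm that share $\vec{\theta}_{-u}$ but use thresholds $\theta_u > \tilde{\theta}_u$; I will call these the old and new runs, writing $\ell_{u'}[j]$ and $\tilde{\ell}_{u'}[j]$ for the respective loads at moment $j$. The inductive hypothesis at moment $j$ is $\tilde{\ell}_{u'}[j] \ge \ell_{u'}[j]$ for every $u' \ne u$, and the base case is trivial since all loads start at zero. A key preliminary observation is that the two runs evolve identically up to the first time $T^*$ at which $\ell_u$ reaches $\tilde{\theta}_u$, because the algorithm never queries $\theta_u$ until then; thereafter $u$ is permanently removed from the new candidate set with $\tilde{\ell}_u$ frozen at $\tilde{\theta}_u$, whereas $u$ may remain available in the old run until $\ell_u$ reaches $\theta_u$. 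In particular, writing $A_{j+1}, \tilde{A}_{j+1}$ for the sets of unsuccessful neighbors of the incoming vertex $v_{j+1}$, the offline vertex $u$ never belongs to $\tilde{A}_{j+1} \setminus A_{j+1}$, and the IH directly gives $\tilde{A}_{j+1} \cap (U \setminus \{u\}) \subseteq A_{j+1} \cap (U \setminus \{u\})$.

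For the inductive step, I would recast the continuous processing of $v_{j+1}$ as a capacitated water-filling over a water level $\tau$. Setting $\phi_{u'}(\tau) = f^{-1}\bigl(1 - \tau/(p_{u'v_{j+1}} w_{u'})\bigr)$ and $L_{u'}(\tau) = \max\bigl(\ell_{u'}, \min(\phi_{u'}(\tau), \theta_{u'})\bigr)$, the final load of each available $u'$ equals $L_{u'}(\tau^*)$ at the unique $\tau^*$ for which $M(\tau) = \sum_{u' \in A}(L_{u'}(\tau) - \ell_{u'})/p_{u'v_{j+1}}$ equals $1$ (or $\tau^* = 0$ if $M(0) < 1$). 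The core step is the pointwise domination $M(\tau) \ge \tilde{M}(\tau)$ for every $\tau$, which by monotonicity forces $\tilde{\tau}^* \le \tau^*$ and hence $\phi_{u'}(\tilde{\tau}^*) \ge \phi_{u'}(\tau^*)$ for every $u' \ne u$. The domination decomposes into three non-negative contributions: (i) for $u' \in \tilde{A}_{j+1} \cap (U \setminus \{u\})$, the term $(L_{u'}(\tau) - \ell_{u'})/p_{u'v_{j+1}}$ is non-increasing in $\ell_{u'}$, so the IH makes the new term no larger than the old; (ii) vertices in $A_{j+1} \setminus \tilde{A}_{j+1}$ contribute only on the old side; and (iii) $u$ itself, whenever it sits in $A_{j+1} \setminus \tilde{A}_{j+1}$, again contributes only on the old side. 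Then for $u' \in \tilde{A}_{j+1} \cap (U \setminus \{u\})$, taking termwise maxima in the definition of $L_{u'}$, together with the IH and $\phi_{u'}(\tilde{\tau}^*) \ge \phi_{u'}(\tau^*)$, yields $\tilde{\ell}_{u'}[j+1] \ge \ell_{u'}[j+1]$; for the side case $u' \in A_{j+1} \setminus \tilde{A}_{j+1}$, the new load is frozen above $\theta_{u'}$ and so dominates the capped old load, and neighbors outside $A_{j+1}$ are unaffected in either run.

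The main obstacle I anticipate is handling the capacity clamp $\min(\phi_{u'}(\tau), \theta_{u'})$ cleanly: a neighbor $u' \ne u$ may become successful partway through a single water-filling step and drop out, so one must verify termwise monotonicity of $L_{u'}(\tau) - \ell_{u'}$ in $\ell_{u'}$ across all regimes, especially when $\min(\phi_{u'}(\tau), \theta_{u'})$ lies between $\ell_{u'}[j]$ and $\tilde{\ell}_{u'}[j]$. A secondary concern is tie-breaking when several neighbors share the maximum score; I would handle this by taking limits of lexicographic perturbations so that the monotonicity of $\tau^*$ in the perturbed process survives the limit.
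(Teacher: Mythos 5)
Your proof is correct and follows essentially the same strategy as the paper's: discrete induction over the $|V|+1$ moments together with a water-filling reinterpretation of a single arrival showing that the stopping water level $\tau^*$ is weakly lower in the run with the smaller $\theta_u$, whence every load $\ell_{u'}$ with $u' \ne u$ is weakly larger after the step. Your explicit formula $L_{u'}(\tau) = \max\bigl(\ell_{u'}, \min(\phi_{u'}(\tau), \theta_{u'})\bigr)$ makes the concluding case analysis slightly crisper than the paper's, but the key comparison (the new stopping level is at most the old one, via the inductive hypothesis, the shrinking candidate set, and the special treatment of $u$) is the same.
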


\begin{proof}
    We will refer to the matching before $\theta_u$ decreases as the \emph{old} instance, and the instance after $\theta_u$ decreases as the \emph{new} instance.
    The proof proceeds by induction on the moments from the beginning of the matching process to the end. 

    At the beginning, the lemma holds trivially. 
    In fact, it holds trivially up to the moment that $u$ succeeds in the new instance after $\theta_u$ decreases.

    Next, suppose the lemma holds at the moment before the arrival of a vertex $v$.
    We next argue that it continues to hold at the moment after the arrival of $v$.
    It is most convenient to prove it using the alternative interpretation of the algorithm using the threshold $\tau$.

    We use superscripts to distinguish the value of the same variable in the old and new instances.
    For example, $\tau^{\text{old}}$ and $\tau^{\text{new}}$ denote the thresholds in the old and new instances respectively.
    We claim that $\tau^{\text{old}} \ge \tau^{\text{new}}$.
    First, note that for any given $\tau$, and any given offline vertex $u$, the value of $x_{uv}^{\text{new}}(\tau)$ is weakly smaller than $x_{uv}^{\text{old}}$ because $\ell_u^{\text{new}}$ is weakly larger than $\ell_u^{\text{old}}$ at this moment.
    Further, the weak monotonicity of the loads of offline vertices also imply that the set of unsuccessful neighbors $\widetilde{N}_v^{\text{new}}$ in the new instance is a subset of $\widetilde{N}_v^{\text{old}}$, i.e., that in the old instance.
    Therefore:
    \[
        \textstyle
        \sum_{u \in \widetilde{N}_v^{\text{new}}} x_{uv}^{\text{new}}(\tau^{\text{old}}) \le \sum_{u \in \widetilde{N}_v^{\text{old}}} x_{uv}^{\text{old}}(\tau^{\text{old}}) = 1
        ~.
    \]
    So the claim of $\tau^{\text{old}} \ge \tau^{\text{new}}$ follows by that $x_{uv}(\tau)$ is non-increasing in $\tau$.

    As a corollary, the loads of the offline vertices are still weakly larger in the new instance after the arrival of $v$.
    For any vertex $u$, if it is not a neighbor of $v$, it holds trivially by the induction hypothesis.
    If it succeeds after the arrival of $v$, it also holds trivially since the load is now equal to the threshold.
    Finally, if it remains unsuccessful after the arrival of $v$, its load is determined by the threshold. 
    Since the threshold is weakly smaller in the new instance, the load is weakly larger.
\end{proof}

As a direct corollary, we have the following invariant about the online vertices.

\begin{lemma}
    \label{lem:unequal-probabilities-online-invariant}
    For any online vertex $v \in V$, as $\theta_u$ decreases, the value of the dual variable $\beta_v$ weakly decreases.
\end{lemma}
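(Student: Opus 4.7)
The plan is to exploit the threshold parametrization of Algorithm~\ref{alg:unequal-probabilities} described just after its pseudocode, together with Lemma~\ref{lem:load-invariant}. Since the algorithm allocates the unit mass of $v$ in non-increasing order of the score $p_{uv} w_u (1 - f(\ell_u))$, the tail-integral identity applied to this non-increasing score along $t \in [0,1]$ yields
\[
\beta_v \;=\; \int_0^{\infty} \min\!\Bigl\{ 1,\ \sum_{u \in \widetilde{N}_v} x_{uv}(\tau) \Bigr\}\, d\tau.
\]
Hence it suffices to prove that, at the moment $v$ arrives, every summand $x_{uv}(\tau)$ is weakly smaller in the new instance (with the smaller $\theta_u$) than in the old one, for every threshold $\tau \ge 0$; summing, applying $\min\{1,\cdot\}$ which is monotone, and integrating then give $\beta_v^{\text{new}} \le \beta_v^{\text{old}}$.

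For any neighbor $u' \ne u$, Lemma~\ref{lem:load-invariant} gives $\ell_{u'}^{\text{new}} \ge \ell_{u'}^{\text{old}}$ at $v$'s arrival. Since $f$ is non-decreasing, the score condition $p_{u'v} w_{u'}(1 - f(\ell_{u'} + p_{u'v}x)) \ge \tau$ in the definition of $x_{u'v}(\tau)$ is harder to satisfy in the new instance, and the budget constraint $\ell_{u'} + p_{u'v}x \le \theta_{u'}$ is likewise tighter, so $x_{u'v}^{\text{new}}(\tau) \le x_{u'v}^{\text{old}}(\tau)$. For $u$ itself I would split into cases: if $u$ has already succeeded in the new instance before $v$ arrives, then $u \notin \widetilde{N}_v^{\text{new}}$ and $x_{uv}^{\text{new}}(\tau) = 0$; otherwise the two instances produce identical matchings up to $v$'s arrival (the algorithm never consults $\theta_u$ until $u$ first fails), so $\ell_u$ agrees in both, while $\theta_u^{\text{new}} \le \theta_u^{\text{old}}$ only tightens the budget constraint in the definition of $x_{uv}(\tau)$. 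The only mildly delicate point is precisely this case split for $u$, forced because Lemma~\ref{lem:load-invariant} explicitly excludes $u$; once it is in place the rest of the argument is essentially mechanical.
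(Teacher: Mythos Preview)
Your proof is correct. The paper's own proof is a two-liner: it observes that loads are weakly higher in the new instance (Lemma~\ref{lem:load-invariant}), hence the rate $d\beta_v/dt = p_{u^*v}\, w_{u^*}\bigl(1 - f(\ell_{u^*})\bigr)$ is weakly smaller at every $t$, so $\beta_v = \int_0^1 (d\beta_v/dt)\,dt$ is weakly smaller. Your route differs in presentation: you recast $\beta_v$ via the layer-cake identity in the threshold variable $\tau$ and then compare each $x_{u'v}(\tau)$ pointwise between the two instances. The underlying monotonicity is the same---indeed, the inequality $x_{u'v}^{\text{new}}(\tau) \le x_{u'v}^{\text{old}}(\tau)$ already appears verbatim inside the proof of Lemma~\ref{lem:load-invariant}---but your argument is more explicit on two points the paper glosses over: the case analysis for the distinguished vertex $u$ itself (Lemma~\ref{lem:load-invariant} speaks only to $u' \ne u$), and the comparison of the rate \emph{during} the processing of $v$ rather than just at the discrete moments between arrivals. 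The paper buys brevity; you buy a self-contained argument that works directly from the arrival-time data without needing to track loads through the continuous allocation of $v$.
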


\begin{proof}
    Since the loads are weakly higher in the new instance by Lemma~\ref{lem:load-invariant}, the dual variable $\beta_v$ increases at a lower rate by definition.
    So the lemma follows.
\end{proof}

\subsubsection{Structural Lemma for Unequal Probabilities}
\label{sec:unequal-probabilities-structural}

Building on the invariants established in the previous subsection, we now explain the main structural lemma that the online primal dual analysis utilizes in the case of unequal probabilities.
To formally describe the structural lemma, let us first introduce the following notation.
Given any thresholds $\vec{\theta}$, let $\beta_v(\vec{\theta})$ denote the value of $\beta_v$ when the thresholds are $\vec{\theta}$.
We abuse notation and let $(\infty, \vec{\theta}_{-u})$ denote the case when the threshold $\theta_u$ is sufficiently large such that vertex $u$ never succeeds.

\begin{lemma}
    \label{lem:unequal-probabilities-structural}
    For any thresholds $\vec{\theta}_{-u}$ of the offline vertices other than $u$ and the corresponding load $\ell_u^\infty$ when $\theta_u = \infty$, for any $\theta_u \le \ell_u^\infty$, and for any subset $S$ of the offline vertices, we have:
    \[
        \textstyle
        \sum_{v \in S} \beta_v \big( \theta_u, \vec{\theta}_{-u} \big) \geq \sum_{v \in S} \beta_v \big( \infty, \vec{\theta}_{-u} \big) - \int_{\theta_u}^{\ell_u^{\infty}} w_u \big(1 - f(z) \big) dz
        ~.
    \]
\end{lemma}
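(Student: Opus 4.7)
The plan is to reduce the bound to a global conservation identity relating the primal gain and the dual variables. First, by Lemma~\ref{lem:unequal-probabilities-online-invariant}, each individual $\beta_v$ weakly decreases as $\theta_u$ drops from $\infty$ (equivalently, from $\ell_u^\infty$, since these two thresholds induce the same matching) to $\theta_u$, i.e.\ $\beta_v(\infty, \vec{\theta}_{-u}) - \beta_v(\theta_u, \vec{\theta}_{-u}) \ge 0$ for every $v \in V$. Restricting the sum to $v \in S$ can therefore only shrink the total decrease:
\[
\sum_{v \in S}\bigl[\beta_v(\infty, \vec{\theta}_{-u}) - \beta_v(\theta_u, \vec{\theta}_{-u})\bigr] \le \sum_{v \in V}\bigl[\beta_v(\infty, \vec{\theta}_{-u}) - \beta_v(\theta_u, \vec{\theta}_{-u})\bigr].
\]
So it suffices to upper bound the total decrease over all online vertices by $\int_{\theta_u}^{\ell_u^\infty} w_u(1-f(z))\,dz$.

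Second, I plan to invoke the primal-equals-dual identity built into the dual-assignment rules of Section~\ref{sec:unequal-probabilities-dual}: each infinitesimal match $dt$ to an edge $(u^*, v)$ increments $\alpha_{u^*} + \beta_v$ by $p_{u^*v}\, w_{u^*}\, dt$, which equals the primal gain produced. Summing over the entire run yields $\sum_{u' \in U} \alpha_{u'} + \sum_{v \in V} \beta_v = \sum_{u' \in U} w_{u'}\, \ell_{u'}$, and since $\alpha_{u'} = w_{u'} F(\ell_{u'})$ by the dual-assignment rule, this rearranges to
\[
\sum_{v \in V} \beta_v \;=\; \sum_{u' \in U} w_{u'}\, g(\ell_{u'}), \qquad \text{where } g(\ell) := \ell - F(\ell).
\]
Under the standing assumption $f \le 1$ (required to keep $\beta_v \ge 0$ in the algorithm), $g'(\ell) = 1 - f(\ell) \ge 0$, so $g$ is non-decreasing. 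Applying the identity to both instances and subtracting gives
\[
\sum_{v \in V}\bigl[\beta_v(\infty, \vec{\theta}_{-u}) - \beta_v(\theta_u, \vec{\theta}_{-u})\bigr] \;=\; \sum_{u' \in U} w_{u'} \bigl[g(\ell_{u'}^{\mathrm{old}}) - g(\ell_{u'}^{\mathrm{new}})\bigr],
\]
where I write $\mathrm{old}$ and $\mathrm{new}$ for the instances with $\theta_u = \infty$ and $\theta_u$, respectively.

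Third, I split this sum according to the identity of $u'$. For each $u' \ne u$, Lemma~\ref{lem:load-invariant} gives $\ell_{u'}^{\mathrm{new}} \ge \ell_{u'}^{\mathrm{old}}$, and the monotonicity of $g$ then makes the corresponding summand non-positive. For $u' = u$, we have $\ell_u^{\mathrm{old}} = \ell_u^\infty$ and, since $\theta_u \le \ell_u^\infty$ the budget of $u$ binds in the new instance, $\ell_u^{\mathrm{new}} = \theta_u$; hence the $u'=u$ summand equals $w_u\bigl[g(\ell_u^\infty) - g(\theta_u)\bigr] = \int_{\theta_u}^{\ell_u^\infty} w_u(1-f(z))\,dz$, which is exactly the claimed upper bound. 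The main subtlety I foresee is the clean bookkeeping for the primal-equals-dual identity in the continuous-time fractional algorithm, in particular confirming that the load of $u$ binds at exactly $\theta_u$ in the new instance (so that the $u'=u$ term evaluates to a clean integral rather than something strictly smaller); everything else then falls out from the two invariants plus the monotonicity of $g$.
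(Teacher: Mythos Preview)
Your proposal is correct and follows essentially the same argument as the paper: reduce from $S$ to $S=V$ via Lemma~\ref{lem:unequal-probabilities-online-invariant}, use the primal-equals-dual identity to write $\sum_{v\in V}\beta_v = \sum_{u'\in U}\int_0^{\ell_{u'}} w_{u'}(1-f(z))\,dz$, and then compare the two instances using Lemma~\ref{lem:load-invariant} for $u'\ne u$ together with $\ell_u^{\mathrm{old}}=\ell_u^\infty$, $\ell_u^{\mathrm{new}}=\theta_u$. The subtlety you flag about $\ell_u^{\mathrm{new}}=\theta_u$ is easy to dispatch: the two instances evolve identically until $u$'s load first reaches $\theta_u$, which must happen since in the old instance the load eventually reaches $\ell_u^\infty\ge\theta_u$.
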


\begin{proof}
    It suffices to show the inequality for $S = V$.
    Then, the general case follows by subtracting the contributions of $\beta_v$ for $v \notin S$, due to Lemma~\ref{lem:unequal-probabilities-online-invariant}.

    To show the lemma for $S = V$, first note that by the gain sharing rule, for any realization of the thresholds, the primal objective, i.e., $\sum_{u' \in U} w_{u'} \ell_{u'}$, is equal to the dual objective, i.e., $\sum_{u' \in U} \alpha_{u'} + \sum_{v \in V} \beta_v$.
    Further, by that $\alpha_{u'} = w_{u'} F(\ell_{u'}) = \int_0^{\ell_{u'}} w_{u'} f(z) dz$, we have:
    \[
        \textstyle
        \sum_{v \in V} \beta_v = \sum_{u' \in U} \int_0^{\ell_{u'}} w_{u'} \big( 1 - f(z) \big) dz
        ~.
    \]
    
    Comparing the RHS in the instances that correspond to thresholds $\big( \infty, \vec{\theta}_{-u} \big)$ and $\big( \theta_u, \vec{\theta}_{-u} \big)$ proves the lemma, because the loads $\ell_{u'}$ for any offline vertex $u' \ne u$ weakly increase by Lemma~\ref{lem:load-invariant}, and the load $\ell_u$ of vertex $u$ decreases from $\ell_u^\infty$ to $\theta_u$.
\end{proof}

\subsubsection{Randomized Online Primal Dual Analysis}

Note that the dual assignment ensures that the primal and dual objectives are equal at all time.
It suffices to show approximate dual feasibility in expectation, i.e., for any offline vertex $u$ and any subset $S$ of its online neighbors:
\[
    \textstyle
    \E \big[ \alpha_u + \sum_{v\in S} \beta_v \big] \geq \Gamma \cdot p_u(S) \cdot w_u
    ~.
\]

Following the same strategy as in the case of equal probabilities, we fix the realization of the thresholds $\vec{\theta}_{-u}$ of all offline vertices other than $u$, and prove the above inequality over the randomness of the threshold $\theta_u$ of vertex $u$ alone.
Recall that $\ell_u^\infty$ denote the load of $u$ when $\theta_u = \infty$, i.e., when $u$ never succeeds.

Next, we characterize the matching related to $u$ and online vertices in $S$ for different realization of the threshold $\theta_u$ of $u$.
It is similar to its counterpart in the case of equal probabilities, except that we replace the structural lemma with the one for unequal probabilities.
\begin{itemize}
    \item \textbf{Case 1: $\theta_u \ge \ell_u^\infty$.~}
        In this case, the matching is the same as the $\theta_u = \infty$ case since $u$ has enough budget to accommodate all the load matched to it.  
    \begin{itemize}
        \item The load of $u$ equals $\ell_u^\infty$.
        \item The total gain of all online vertices in $S$ is at least $w_u p_u(S) \big( 1-f(\ell_u^{\infty}) \big)$, since at all time $u$ is available.
    \end{itemize}
    \item \textbf{Case 2: $0 \le \theta_u < \ell_u^\infty$.~}
        In this case, $u$ can only accommodate the load up to its threshold. 
        Some online vertices that were matched to $u$ when $\theta_u = \infty$ need to be matched elsewhere.
    \begin{itemize}
        \item The load of $u$ equals $\theta_u$.
        \item The total gain of the online vertices in $S$ is lower bounded by Lemma~\ref{lem:unequal-probabilities-structural}.
    \end{itemize}
\end{itemize}

\paragraph{Contribution from $\alpha_u$.}
This part is a verbatim copy of the counterpart in the case of equal probabilities.
We reiterate the lemma below and omit the proof.

\begin{lemma}
    \label{lem:unequal-probabilities-alpha}
    For any thresholds $\vec{\theta}_{-u}$ of the offline vertices other than $u$ and the corresponding $\ell_u^\infty$:
    \[
        \textstyle
        \E _{\theta_u} \big[ \alpha_u | \vec{\theta}_{-u} \big] \geq \int_0^{\ell_u^\infty} e^{-\theta_u} w_u f(\theta_u) d\theta_u 
        ~.
    \]
\end{lemma}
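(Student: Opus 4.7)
The plan is to mirror the proof of Lemma \ref{lem:alpha-equal-prob-weighted}, since the dual bookkeeping rule for $\alpha_u$ is the same in the fractional setting: we still have $\alpha_u = w_u F(\ell_u)$ at the end of the process by construction of the dual updates in Section~\ref{sec:unequal-probabilities-dual}. So the only thing that needs to be transplanted to the fractional setting is the characterization of $\ell_u$ as a function of $\theta_u$ (with $\vec{\theta}_{-u}$ fixed).

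First, I would invoke the Case 1 / Case 2 characterization stated just before the lemma, which in turn rests on Lemma~\ref{lem:load-invariant}: when $\theta_u \ge \ell_u^\infty$ the matching is identical to the $\theta_u = \infty$ instance and hence $\ell_u = \ell_u^\infty$; when $0 \le \theta_u < \ell_u^\infty$, vertex $u$ succeeds exactly at the moment its load reaches $\theta_u$, so $\ell_u = \theta_u$. Combined with $\theta_u \sim \operatorname{Exp}(1)$, this gives
\[
    \E_{\theta_u}\!\left[\alpha_u \mid \vec{\theta}_{-u}\right]
    \;\ge\; \int_0^{\ell_u^\infty} e^{-\theta_u}\, w_u F(\theta_u)\, d\theta_u
    \;+\; e^{-\ell_u^\infty}\, w_u F(\ell_u^\infty).
\]

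Next, I would apply integration by parts to the first term with $dv = e^{-\theta_u} d\theta_u$ and $u_1 = F(\theta_u)$, using $F'(\theta_u) = f(\theta_u)$ and $F(0) = 0$. This yields
\[
    \int_0^{\ell_u^\infty} e^{-\theta_u} F(\theta_u)\, d\theta_u
    \;=\; -e^{-\ell_u^\infty} F(\ell_u^\infty) + \int_0^{\ell_u^\infty} e^{-\theta_u} f(\theta_u)\, d\theta_u,
\]
so the $e^{-\ell_u^\infty} w_u F(\ell_u^\infty)$ boundary term cancels and the claimed lower bound $\int_0^{\ell_u^\infty} e^{-\theta_u} w_u f(\theta_u)\, d\theta_u$ pops out.

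There is no real obstacle here; the only subtlety worth double-checking is that the Case 1/Case 2 characterization is indeed valid in the fractional regime, and this is exactly what Lemma~\ref{lem:load-invariant} was stated to give. Everything else is a verbatim re-run of the unweighted/weighted equal-probabilities argument, which is why the authors omit it.
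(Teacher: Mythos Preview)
Your proposal is correct and is essentially the same approach the paper takes: the paper explicitly says this lemma ``is a verbatim copy of the counterpart in the case of equal probabilities'' and omits the proof, i.e., it is exactly the argument of Lemma~\ref{lem:alpha-equal-prob-weighted} that you reproduce (use $\alpha_u = w_u F(\ell_u)$, the two-case load characterization, $\theta_u\sim\mathrm{Exp}(1)$, then integrate by parts). One small quibble: the Case~1/Case~2 characterization of $\ell_u$ does not actually rely on Lemma~\ref{lem:load-invariant} (which concerns $u'\ne u$); it is a direct observation that the algorithm's run is identical to the $\theta_u=\infty$ run until $u$'s load hits $\theta_u$, which never happens if $\theta_u\ge\ell_u^\infty$ and happens at load exactly $\theta_u$ otherwise.
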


\paragraph{Contribution from $\beta_v$'s.}
%
We now turn to the contribution from the online vertices. 
Recall that $z^+$ denotes $\max \big\{ z, 0 \big\}$.

\begin{lemma}
    \label{lem:unequal-probabilities-beta}
    For any thresholds $\vec{\theta}_{-u}$ of the offline vertices other than $u$ and the corresponding $\ell_u^\infty$:
    \[
        \textstyle
        \E _{\theta_u} \big[ \sum_{v\in S} \beta_v | \vec{\theta}_{-u} \big]
        \geq w_u \left( e^{-\ell_u^{\infty}} p_u(S) \big( 1-f(\ell_u^{\infty}) \big)
    + \int_0^{\ell_u^{\infty}} \left( p_u(S) \big( 1-f(\ell_u^{\infty}) \big) - \int_{\theta_u}^{\ell_u^{\infty}} \big( 1-f(y) \big) dy \right) ^+ e^{-\theta_u} d\theta_u \right)
        ~.
    \]
\end{lemma}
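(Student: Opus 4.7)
The plan is to split the analysis of $\sum_{v \in S} \beta_v$ according to the realization of $\theta_u$ into the two cases that appear in the characterization immediately preceding the lemma, establish a pointwise lower bound in each case, and then integrate against the exponential density of $\theta_u$.

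For Case 1 ($\theta_u \ge \ell_u^\infty$), the matching coincides with the $\theta_u = \infty$ instance, so $u$ is unsuccessful throughout with current load $\ell_u \le \ell_u^\infty$ at every moment. Fix any $v \in S$ that is a neighbor of $u$. During the continuous processing of $v$, the algorithm picks $u^*$ to maximize $p_{u^*v} w_{u^*} (1 - f(\ell_{u^*}))$ among unsuccessful neighbors, so
\[
    \frac{d\beta_v}{dt} \ge p_{uv} w_u \big(1 - f(\ell_u)\big) \ge p_{uv} w_u \big(1 - f(\ell_u^\infty)\big),
\]
where the second inequality uses monotonicity of $f$. Integrating over $t \in [0,1]$ and summing over $v \in S$ (contributions from $v \notin N_u$ are trivially nonnegative) gives
\[
    \sum_{v \in S} \beta_v \ge w_u \big(1 - f(\ell_u^\infty)\big) \sum_{v \in S} p_{uv} \ge w_u \big(1 - f(\ell_u^\infty)\big) \cdot p_u(S),
\]
where the last step uses $p_u(S) = \min\{\sum_{v \in S} p_{uv}, 1\} \le \sum_{v \in S} p_{uv}$.

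For Case 2 ($0 \le \theta_u < \ell_u^\infty$), observe that the Case 1 bound applies verbatim to the instance with thresholds $(\infty, \vec{\theta}_{-u})$, so $\sum_{v \in S} \beta_v(\infty, \vec{\theta}_{-u}) \ge w_u p_u(S) (1 - f(\ell_u^\infty))$. Plugging this into Lemma~\ref{lem:unequal-probabilities-structural} yields
\[
    \sum_{v \in S} \beta_v\big(\theta_u, \vec{\theta}_{-u}\big) \ge w_u \bigg( p_u(S) \big(1 - f(\ell_u^\infty)\big) - \int_{\theta_u}^{\ell_u^\infty} \big(1 - f(z)\big) dz \bigg).
\]
Since each $\beta_v$ is nonnegative by construction, the left-hand side is also $\ge 0$, so the right-hand side can be replaced by its positive part $(\cdot)^+$ at no cost.

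Finally, I integrate against the density $e^{-\theta_u}$ of the exponential distribution on $[0,\infty)$. The Case 1 event contributes $\int_{\ell_u^\infty}^\infty e^{-\theta_u} d\theta_u \cdot w_u p_u(S)(1 - f(\ell_u^\infty)) = e^{-\ell_u^\infty} w_u p_u(S)(1-f(\ell_u^\infty))$, while the Case 2 range $[0, \ell_u^\infty]$ contributes exactly the remaining integral in the lemma statement; summing these gives the claimed bound. No step is technically delicate since the heavy structural lifting has already been done in Lemma~\ref{lem:unequal-probabilities-structural}; the only place that requires a bit of care is Case 1, where one has to verify that the greedy rule really produces the pointwise rate bound $d\beta_v/dt \ge p_{uv} w_u (1 - f(\ell_u^\infty))$, which in turn follows directly from the algorithm's definition and monotonicity of $f$.
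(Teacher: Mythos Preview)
Your proposal is correct and follows essentially the same approach as the paper: split into the two cases $\theta_u \ge \ell_u^\infty$ and $\theta_u < \ell_u^\infty$, use the availability of $u$ for Case~1 and Lemma~\ref{lem:unequal-probabilities-structural} together with nonnegativity of $\beta_v$ for Case~2, then integrate against the exponential density. The only difference is that you spell out the rate argument $d\beta_v/dt \ge p_{uv} w_u(1-f(\ell_u^\infty))$ in Case~1 explicitly, whereas the paper simply cites the preceding characterization; your parenthetical about $v \notin N_u$ is unnecessary since $S \subseteq N_u$ throughout, but harmless.
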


\begin{proof}
    If $\theta_u \ge \ell_u^{\infty}$, which happens with probability $e^{-\ell_u^\infty}$, the total gain of all online vertices in $S$, i.e., $\sum_{v \in S} \beta_v$, is at least:
    \[
        w_u p_u(S) \big( 1-f(\ell_u^{\infty}) \big)
        ~.
    \]
    
    If $ \theta_u < \ell_u^{\infty}$, by Lemma~\ref{lem:unequal-probabilities-structural} we get that $\sum_{v \in S} \beta_v$ is at least:
    \[
        \textstyle
        w_u \left( p_u(S) \big( 1-f(\ell_u^{\infty}) \big) - \int_{\theta_u}^{\ell_u^{\infty}} \big( 1-f(y) \big) dy \right)
        ~.
    \]
    
    We also have that $\sum_{v \in S} \beta_v$ is non-negative.
    Putting together proves the lemma.
%
\end{proof}

\subsubsection{Proof of Theorem~\ref{thm:unequal-probabilities}}
Combining Lemma~\ref{lem:unequal-probabilities-alpha} and Lemma~\ref{lem:unequal-probabilities-beta}, we get that:
\begin{align*}
    \textstyle
    \E \big[ \alpha_u + \sum_{v \in S} \beta_v \big] & 
    \textstyle
    \ge w_u \bigg( \int_0^{\ell_u^\infty} e^{-\theta_u} f(\theta_u) d\theta_u + e^{-\ell_u^{\infty}} p_u(S) \big( 1-f(\ell_u^{\infty}) \big) \\
    & \qquad \textstyle
    + \int_0^{\ell_u^{\infty}} \left( p_u(S) \big( 1-f(\ell_u^{\infty}) \big) - \int_{\theta_u}^{\ell_u^{\infty}} \big( 1-f(y) \big) dy \right) ^+ e^{-\theta_u} d\theta_u \bigg)
    ~.
\end{align*}

Hence, to show a competitive ratio $\Gamma$, it suffices to find non-decreasing function $f$ that satisfy the following set of differential inequalities.
For ease of notations, we write $\ell_u^\infty$ as $\ell$, $p_u(S)$ as $p$, and $\theta_u$ as $\theta$.
For any $\ell \ge 0$ and any $0 \le p \le 1$, we need:
\begin{equation}
    \label{eqn:de-unequal}
    \textstyle
    \int_0^{\ell} e^{-\theta} f(\theta) d\theta + e^{-\ell} p \big( 1-f(\ell) \big) 
    + \int_0^{\ell} \left( p \big( 1-f(\ell) \big) - \int_{\theta}^{\ell} \big( 1-f(y) \big) dy \right) ^+ e^{-\theta} d\theta 
    \geq \Gamma \cdot p
    ~.
\end{equation}

This is not a standard differential equation that can be solved by standard method, to the best of our knowledge.
In fact, we do not know how to optimize the competitive ratio exactly.
Below, we explain how to simplify it so that we can find a good enough solution.

\paragraph{Left Boundary Condition.}
We start with the left boundary condition of $f$ at zero.

\begin{lemma}
    \label{lem:unequal-probabilities-de-left-boundary}
    For solution $f$ to the differential inequalities, we have:
    \[
        f(0) \le 1 - \Gamma
        ~.
    \]
\end{lemma}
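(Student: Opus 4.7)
The plan is to mimic the analogous argument for the equal-probability case (Lemma~\ref{lem:equal-probabilities-de-left-boundary}): evaluate the differential inequality in Eqn.~\eqref{eqn:de-unequal} at the extreme point $\ell = 0$ and observe that it collapses to a pointwise inequality relating $f(0)$ and $\Gamma$.

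Concretely, I would set $\ell_u^{\infty} = \ell = 0$ in Eqn.~\eqref{eqn:de-unequal}. The first term $\int_0^{\ell} e^{-\theta} f(\theta)\, d\theta$ vanishes because its domain of integration is empty, and the third term $\int_0^{\ell} \bigl(p(1-f(\ell)) - \int_\theta^\ell (1-f(y))\,dy\bigr)^+ e^{-\theta}\, d\theta$ likewise vanishes for the same reason. Only the middle term $e^{-\ell} p\bigl(1 - f(\ell)\bigr)$ survives, which at $\ell = 0$ equals $p\bigl(1 - f(0)\bigr)$. Hence Eqn.~\eqref{eqn:de-unequal} becomes
\[
    p\bigl(1 - f(0)\bigr) \;\ge\; \Gamma \cdot p \qquad \text{for all } 0 \le p \le 1.
\]
Choosing any $p > 0$ (say $p = 1$) and dividing both sides by $p$ yields $1 - f(0) \ge \Gamma$, i.e.\ $f(0) \le 1 - \Gamma$, which is the claimed boundary condition.

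There is essentially no obstacle here; the step is a direct specialization of the family of inequalities to its degenerate endpoint, analogous to how Lemma~\ref{lem:equal-probabilities-de-left-boundary} was obtained by plugging $\ell = 0$, $p = 1$ into Eqn.~\eqref{eqn:de-equal-p>l}. The only thing to verify carefully is that the $(\cdot)^+$ truncation does not accidentally contribute (it cannot, since the integration range has measure zero) and that $p$ can be freely chosen in $(0,1]$ from the universal quantification in the statement of the differential inequality. No further structural properties of $f$ beyond its appearance at $\ell=0$ are used.
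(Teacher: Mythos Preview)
Your proposal is correct and matches the paper's own proof essentially verbatim: the paper simply says the lemma follows by letting $\ell = 0$ and, e.g., $p = 1$ in Eqn.~\eqref{eqn:de-unequal}, which is exactly the specialization you carry out.
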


\begin{proof}
    It follows by letting $\ell = 0$ and, e.g., $p = 1$, in Eqn.~\eqref{eqn:de-unequal}.
\end{proof}

In the following discussion, we focus on functions such that the left boundary condition holds with equality, i.e., $f(0) = 1 - \Gamma$.

\paragraph{Unit Mass of Online Neighbors.}
Then, we show that the worst-case scenarios are when there is a unit mass of online neighbors, i.e., when $p_u(S) = 1$.

\begin{lemma}
    If $f$ is a non-decreasing function such that $f(0) = 1 - \Gamma$ and Eqn.~\eqref{eqn:de-unequal} holds for any $\ell \ge 0$ and $p = 1$.
    Then, $f$ satisfies Eqn.~\eqref{eqn:de-unequal} for any $\ell \ge 0$ and any $0 \le p \le 1$.
\end{lemma}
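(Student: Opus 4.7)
The plan is to mirror the proof of the analogous unit-mass-reduction lemma from the equal-probabilities case in Section~\ref{sec:solve-de-equal}. Viewing both sides of Eqn.~\eqref{eqn:de-unequal} as functions of $p$ with $\ell \geq 0$ held fixed, the right-hand side equals $\Gamma p$ and so has constant slope $\Gamma$ in $p$. The task therefore reduces to showing that the derivative of the left-hand side with respect to $p$ is at most $\Gamma$ on $[0,1]$. Combined with the hypothesis that the inequality holds at $p=1$, the fundamental theorem of calculus then yields, for every $p \in [0,1]$,
\[
    \mathrm{LHS}(p) \;=\; \mathrm{LHS}(1) - \int_p^1 \tfrac{d\,\mathrm{LHS}}{dq}\, dq \;\geq\; \Gamma - \Gamma(1-p) \;=\; \Gamma p \,.
\]

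To compute the slope, note that only two pieces of the LHS depend on $p$. The term $e^{-\ell} p \bigl(1 - f(\ell)\bigr)$ contributes $e^{-\ell}\bigl(1 - f(\ell)\bigr)$ to the derivative. The integrand $\bigl(p(1-f(\ell)) - \int_\theta^\ell (1-f(y))\,dy\bigr)^+$ is piecewise linear in $p$, with slope either $0$ (when the argument is negative) or $1 - f(\ell)$ (when it is positive). Integrating against $e^{-\theta}\,d\theta$ over $\theta \in [0,\ell]$, the contribution to the derivative is therefore at most $\bigl(1 - f(\ell)\bigr) \int_0^\ell e^{-\theta}\,d\theta$. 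Summing the two pieces and using $e^{-\ell} + \int_0^\ell e^{-\theta}\,d\theta = 1$ yields the clean uniform bound $\tfrac{d\,\mathrm{LHS}}{dp} \leq 1 - f(\ell)$. Since $f$ is non-decreasing with $f(0) = 1 - \Gamma$, we have $1 - f(\ell) \leq 1 - f(0) = \Gamma$ for every $\ell \geq 0$, closing the argument.

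I expect no substantive obstacle; the argument is structurally identical to its equal-probabilities counterpart, with the extra complexity of unequal probabilities absorbed inside the $(\cdot)^+$ term, whose slope in $p$ is still controlled by $1 - f(\ell)$ regardless of which thresholds $\theta \in [0,\ell]$ are currently active. The only delicate point is justifying differentiation through the $(\cdot)^+$ and the outer integral, but this is routine: $(\cdot)^+$ is $1$-Lipschitz, the integrand is dominated by the integrable $(1-f(\ell))e^{-\theta}$, and the set of $\theta$ at which the argument of $(\cdot)^+$ vanishes has measure zero for almost every $p$, so dominated convergence applies.
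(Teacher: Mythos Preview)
Your proof is correct and follows the same high-level strategy as the paper: fix $\ell$, bound the $p$-derivative of the left-hand side of Eqn.~\eqref{eqn:de-unequal} by $1-f(\ell) \le 1-f(0) = \Gamma$, and combine with the $p=1$ hypothesis. The only difference is in how the $(\cdot)^+$ term is handled. The paper introduces an auxiliary threshold $g(\ell)$ (the point where the argument of $(\cdot)^+$ vanishes at $p=1$), lower-bounds the left-hand side by dropping the positive part and restricting the $\theta$-integral to $[g(\ell),\ell]$, and then differentiates the resulting expression, which is now \emph{linear} in $p$ with slope exactly $e^{-g(\ell)}(1-f(\ell)) \le 1-f(\ell)$. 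You instead bound the slope of the original expression directly, using that $p \mapsto (p(1-f(\ell)) - c)^+$ has pointwise slope in $\{0,\,1-f(\ell)\}$ and then integrating. Your route is a bit more direct and avoids defining $g$; the paper's route sidesteps the (routine, as you note) measure-theoretic justification for differentiating through $(\cdot)^+$ and the outer integral. Either way the final bound is the same $1-f(\ell) \le \Gamma$.
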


\begin{proof}
    Let function $g : \R^+ \mapsto \R^+$ be such that for any $\ell \ge 0$, we have either that:
    \[
        \textstyle
        1-f(\ell) = \int_{g(\ell)}^{\ell} \big( 1-f(y) \big) dy
        ~,
    \]
    or that $g(\ell) = 0$ and:
    \[
        \textstyle
        1 - f(\ell) > \int_0^{\ell} \big( 1-f(y) \big) dy
        ~.
    \]

    Then, for any $\ell \ge 0$, we have:
    \[
        \textstyle
        \int_0^{\ell} \left( \big( 1-f(\ell) \big) - \int_{\theta}^{\ell} \big( 1-f(y) \big) dy \right) ^+ e^{-\theta} d\theta 
        =  
        \int_{g(\ell)}^{\ell} \left( \big( 1-f(\ell) \big) - \int_{\theta}^{\ell} \big( 1-f(y) \big) dy \right) e^{-\theta} d\theta
        ~.
    \]

    By our assumption that function $f$ satisfies Eqn.~\eqref{eqn:de-unequal} when $p = 1$, we get that:
    \begin{equation}
        \label{eqn:de-unequal-unit-mass}
        \textstyle
        \int_0^{\ell} e^{-\theta} f(\theta) d\theta + e^{-\ell} \big( 1-f(\ell) \big) 
        + \int_{g(\ell)}^{\ell} \left( \big( 1-f(\ell) \big) - \int_{\theta}^{\ell} \big( 1-f(y) \big) dy \right) e^{-\theta} d\theta 
        \geq \Gamma
        ~.
    \end{equation}

    To show that Eqn.~\eqref{eqn:de-unequal} holds more generally for any $0 \le p \le 1$, note that:
    \begin{align*}
        & \textstyle
        \int_0^{\ell} e^{-\theta} f(\theta) d\theta + e^{-\ell} p \big( 1-f(\ell) \big) + \int_0^{\ell} \left( p \big( 1-f(\ell) \big) - \int_{\theta}^{\ell} \big( 1-f(y) \big) dy \right)^+ e^{-\theta} d\theta \\
        & \textstyle \qquad
        \ge \int_0^{\ell} e^{-\theta} f(\theta) d\theta + e^{-\ell} p \big( 1-f(\ell) \big) 
        + \int_{g(\ell)}^{\ell} \left( p \big( 1-f(\ell) \big) - \int_{\theta}^{\ell} \big( 1-f(y) \big) dy \right) e^{-\theta} d\theta
        ~.
    \end{align*}

    Hence, it suffices to show that:
    \[
        \textstyle
        \int_0^{\ell} e^{-\theta} f(\theta) d\theta + e^{-\ell} p \big( 1-f(\ell) \big) 
        + \int_{g(\ell)}^{\ell} \left( p \big( 1-f(\ell) \big) - \int_{\theta}^{\ell} \big( 1-f(y) \big) dy \right) e^{-\theta} d\theta \ge \Gamma p
        ~.
    \]

    Eqn.~\eqref{eqn:de-unequal-unit-mass} shows that it holds when $p = 1$.
    Consider both sides as a function of $p$.
    It suffices to show that the derivative of the RHS is weakly larger than that of the LHS for any $0 \le p \le 1$.
    The derivative of the RHS is $\Gamma$, while that of the LHS at most: 
    \begin{align*}
        \textstyle
        e^{-\ell} \big( 1 - f(\ell) \big) + \int_{g(\ell)}^\ell \big( 1 - f(\ell) \big) e^{-\theta} d\theta 
        & 
        = e^{-g(\ell)} \big( 1 - f(\ell) \big) \\
        & 
        \le 1 - f(\ell) 
        && \text{($g(\ell) \ge 0$)} \\[1ex]
        &
        \le 1 - f(0) 
        && \text{($f$ is non-decreasing)} \\[1ex]
        & 
        = \Gamma
        && \text{(left boundary condition of $f$)}
        ~.
    \end{align*}

    So the lemma follows.
\end{proof}

Therefore, it suffices to show the following simplified differential inequalities subject to the left boundary condition that $f(0) = 1 - \Gamma$.
For any $\ell \ge 0$, we need:
\begin{equation}
    \label{eqn:de-unequal-simplified}
    \textstyle
    \int_0^{\ell} e^{-\theta} f(\theta) d\theta + e^{-\ell} \big( 1-f(\ell) \big) 
    + \int_0^{\ell} \left( \big( 1-f(\ell) \big) - \int_{\theta}^{\ell} \big( 1-f(y) \big) dy \right) ^+ e^{-\theta} d\theta 
    \geq \Gamma 
    ~.
\end{equation}

\paragraph{Right Boundary Condition.~}
To further simplification, we impose a right boundary condition that it suffices to ensure that $f(\ell) = 1 - \frac{1}{e}$ for some sufficiently large $\ell$.

\begin{lemma}
    \label{lem:unequal-probabilities-de-right-boundary}
    For any $\ell_\max > 0$, suppose $f : [0, \ell_\max] \mapsto \R^+$ is a non-decreasing function such that $f(0) = 1 - \Gamma$, $f(\ell_\max) = 1 - \frac{1}{e}$, and Eqn.~\eqref{eqn:de-unequal-simplified} holds for any $0 \le \ell \le \ell_\max$.
    Then, the extension of $f$ such that $f(\ell) = f(\ell_\max) = 1 - \frac{1}{e}$ for all $\ell \ge \ell_\max$ satisfies Eqn.~\eqref{eqn:de-unequal} for all $\ell \ge 0$.
\end{lemma}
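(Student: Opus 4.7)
By the preceding ``unit mass of online neighbors'' lemma, it suffices to show that the extended $f$ satisfies Eqn.~\eqref{eqn:de-unequal-simplified} for all $\ell \ge 0$; on $[0, \ell_{\max}]$ this holds by hypothesis, so the content of the lemma is to check it on $(\ell_{\max}, \infty)$. Let $G(\ell)$ denote the LHS of Eqn.~\eqref{eqn:de-unequal-simplified}. Since $G(\ell_{\max}) \ge \Gamma$ by hypothesis, it is enough to prove $dG/d\ell \ge 0$ on $(\ell_{\max}, \infty)$, from which monotonicity gives $G(\ell) \ge G(\ell_{\max}) \ge \Gamma$ for all $\ell \ge \ell_{\max}$.

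On $(\ell_{\max}, \infty)$ the extension satisfies $f(\ell) = 1 - 1/e$ and $f'(\ell) = 0$, which trivializes the first two terms of $G$. For the third term, set $\psi(\theta, \ell) = (1-f(\ell)) - \int_\theta^\ell (1-f(y))\,dy$. Since $\partial_\theta \psi = 1 - f(\theta) \ge 0$, $\psi(\cdot, \ell)$ is non-decreasing in $\theta$, and $\psi(\ell, \ell) = 1/e > 0$; hence the region $\{\theta : \psi(\theta,\ell) > 0\}$ is an interval $(\theta^*(\ell), \ell]$, where $\theta^*(\ell)$ is either the unique root of $\psi(\theta^*,\ell) = 0$ in $[0, \ell]$, or equals $0$ if $\psi$ stays positive throughout $[0,\ell]$. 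Applying Leibniz's rule (the boundary contribution at $\theta^*$ vanishes in the former case because $\psi(\theta^*, \ell) = 0$), using $\partial_\ell \psi = -1/e$, and combining with the derivatives of the first two terms, I expect the calculation to collapse to
\[
    \frac{dG}{d\ell} \;=\; e^{-\ell} - \frac{e^{-\theta^*(\ell)}}{e},
\]
with the convention $\theta^*(\ell) = 0$ in the degenerate subcase.

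The key observation is that because $f$ is non-decreasing and the extension caps out at $1 - 1/e$, we have $1 - f(y) \ge 1/e$ for every $y \ge 0$. In the non-degenerate case, combining this with the defining identity $\int_{\theta^*}^\ell (1-f(y))\,dy = 1/e$ yields $(\ell - \theta^*)/e \le 1/e$, i.e., $\theta^*(\ell) \ge \ell - 1$, so $dG/d\ell \ge 0$. In the degenerate case $\theta^*(\ell) = 0$, the condition $\int_0^\ell (1-f(y))\,dy < 1/e$ together with the same lower bound $1-f \ge 1/e$ forces $\ell < 1$, and then $e^{-\ell} > 1/e$ directly, so $dG/d\ell > 0$ as well. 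I expect the main obstacle to be the bookkeeping around the moving cutoff $\theta^*(\ell)$ when differentiating the $(\cdot)^+$ integral and ensuring that the degenerate subcase is dispatched cleanly; once these technicalities are handled, the conclusion reduces to the clean bound $1 - f \ge 1/e$ that the right boundary condition hands us for free.
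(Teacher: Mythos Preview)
Your argument is correct and lands on the same key inequality as the paper, but you implement the monotonicity step differently. The paper does not differentiate the exact quantity $G(\ell)$ with its moving cutoff $\theta^*(\ell)$; instead it fixes once and for all the constant $d$ determined by $1-f(\ell_{\max})=\int_{\ell_{\max}-d}^{\ell_{\max}}(1-f(y))\,dy$, lower-bounds the $(\cdot)^+$ integral by integrating over the fixed-width window $[\ell-d,\ell]$, and then shows this lower bound is non-decreasing in $\ell$ via an explicit derivative computation that reduces to $1-(2-d)e^{d-1}\ge 0$ for $0\le d\le 1$. Your route---tracking $\theta^*(\ell)$, applying Leibniz with the vanishing boundary term, and using $1-f\ge 1/e$ to get $\theta^*(\ell)\ge\ell-1$---is more direct and yields the clean identity $dG/d\ell=e^{-\ell}-e^{-\theta^*(\ell)-1}$, at the cost of having to justify differentiating through the $(\cdot)^+$ and the implicitly defined $\theta^*$. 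The paper's fixed-window trick sidesteps that regularity bookkeeping entirely, which is what it buys; your version buys a shorter final computation once the Leibniz step is in hand. The underlying reason both work is the same: the right boundary value $f(\ell_{\max})=1-1/e$ forces the cutoff to stay within distance $1$ of $\ell$.
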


\begin{proof}
    Let $d \ge 0$ be such that:
    \[
        \textstyle
        1 - f(\ell_\max) = \int_{\ell_\max - d}^{\ell_\max} \big( 1-f(y) \big) dy 
        ~.
    \]

    Note that $f$ is non-decreasing implies that $d \le 1$

    For any $\ell \ge \ell_\max$, we have:
    \begin{align*}
        & \textstyle
        \int_0^{\ell} e^{-\theta} f(\theta) d\theta + e^{-\ell} \big( 1-f(\ell) \big) + \int_0^{\ell} \left( \big( 1-f(\ell) \big) - \int_{\theta}^{\ell} \big( 1-f(y) \big) dy \right) ^+ e^{-\theta} d\theta \\
        & \textstyle \qquad
        \ge \int_0^{\ell} e^{-\theta} f(\theta) d\theta + e^{-\ell} \big( 1-f(\ell) \big) + \int_{\ell-d}^{\ell} \left( \big( 1-f(\ell) \big) - \int_{\theta}^{\ell} \big( 1-f(y) \big) dy \right) e^{-\theta} d\theta \\
        & \textstyle \qquad
        = \int_0^{\ell} e^{-\theta} f(\theta) d\theta + e^{-\ell+d} \big( 1-f(\ell) \big) - \int_{\ell-d}^{\ell} \int_{\theta}^{\ell} \big( 1-f(y) \big) dy ~ e^{-\theta} d\theta \\[1ex]
        & \textstyle \qquad
        = \int_0^{\ell} e^{-\theta} f(\theta) d\theta + e^{-\ell+d} \big( 1-f(\ell) \big) - \int_{\ell-d}^{\ell} \int_{\ell-d}^y e^{-\theta} d\theta ~ \big( 1-f(y) \big) dy \\[1ex]
        & \textstyle \qquad 
        = \int_0^{\ell} e^{-\theta} f(\theta) d\theta + e^{-\ell+d} \big( 1-f(\ell) \big) - \int_{\ell-d}^{\ell} \big( e^{-\ell+d} - e^{-y} \big) \big( 1-f(y) \big) dy \\[1ex]
        & \textstyle \qquad 
        = \int_0^{\ell} e^{-\theta} f(\theta) d\theta + e^{-\ell+d} \big( 1-f(\ell) \big) - e^{-\ell+d} \int_{\ell-d}^{\ell} \big( 1-f(y) \big) dy + \int_{\ell-d}^{\ell} e^{-y} \big( 1-f(y) \big) dy 
        ~.
    \end{align*}

    Note that this is at least $\Gamma$ when $\ell = \ell_\max$ by the assumption on $f$ and the definition of $d$.
    It suffices to show that the RHS, as a function of $\ell$, is non-decreasing.
    The derivative of each term on the RHS w.r.t.\ $\ell$ equals (recalling that $f(\ell) = 1 - \frac{1}{e}$ for all $\ell \ge \ell_\max$):
    \begin{eqnarray*}
        \textstyle
        \frac{d}{d\ell} \int_0^{\ell} e^{-\theta} f(\theta) d\theta 
        & = &
        \textstyle
        e^{-\ell} \big( 1 - \frac{1}{e} \big) \\[1ex]
        \textstyle
        \frac{d}{d\ell} e^{-\ell+d} \big( 1-f(\ell) \big) 
        & = & 
        \textstyle
        - e^{-\ell+d-1} \\[1ex]
        \textstyle
        \frac{d}{d\ell} e^{-\ell+d} \int_{\ell-d}^{\ell} \big( 1-f(y) \big) dy
        & = &
        \textstyle
        - e^{-\ell+d} \int_{\ell-d}^{\ell} \big( 1-f(y) \big) dy + e^{-\ell+d} \big( \frac{1}{e} - \big(1 - f(\ell-d) \big) \big) \\[1ex]
        \textstyle
        \frac{d}{d\ell} \int_{\ell-d}^{\ell} e^{-y} \big( 1-f(y) \big) dy 
        & = &
        \textstyle
        e^{-\ell-1} - e^{-\ell+d} \big( 1 - f(\ell-d) \big)
    \end{eqnarray*}

    Putting together shows that the derivative of the RHS w.r.t.\ $\ell$ equals:
    \begin{align*}
        \textstyle
        e^{-\ell} - 2 e^{-\ell+d-1} + e^{-\ell+d} \int_{\ell-d}^{\ell} \big( 1-f(y) \big) dy
        &
        \textstyle
        \ge e^{-\ell} - 2 e^{-\ell+d-1} + e^{-\ell+d} \int_{\ell-d}^{\ell} \frac{1}{e} dy \\
        & 
        \textstyle
        = e^{-\ell} \big( 1 - (2-d) e^{d-1} \big) \\
        & 
        \textstyle
        \ge 0 
    \end{align*}

    The first inequality follows by that $f$ is non-decreasing.
    The second inequality follows by that $1 - (2-d) e^{d-1}$ is decreasing in $0 \le d \le 1$ and that it holds with equality for $d = 1$.
\end{proof}

As a result, we can focus on the following further simplified version of the differential inequalities.
For some appropriately chosen $\ell_\max > 0$, it suffices to find non-decreasing $f$ subject to the boundary conditions $f(0) = 1 - \Gamma$ and $f(\ell_\max) = 1 - \frac{1}{e}$, such that Eqn.~\eqref{eqn:de-unequal-simplified} holds for any $0 \le \ell \le \ell_\max$.

\paragraph{Weaker Bound via an Explicit $f$.}
We first show a weaker competitive ratio of $\frac{1 + e^{-2}}{2} \approx 0.567$ via an explicit construction of $f$.
This nevertheless is better than the previous best bound of $0.534$ by \citet{MehtaWZ/SODA/2015} and is equal to the best previous bound by \citet{MehtaP/FOCS/2012} in the more restricted case of equal probabilities.

To do so, we let $\ell_\max = 1$ and further relax the differential inequalities by dropping the $z^+$ operations.
That is, we solve for a non-decreasing $f$ subject to the boundary conditions that $f(0) = 1 - \Gamma$ and $f(1) = 1 - \frac{1}{e}$ such that for any $0 \le \ell \le 1$, we have:
\[
    \textstyle
    \int_0^{\ell} e^{-\theta} f(\theta) d\theta + e^{-\ell} \big( 1-f(\ell) \big) 
    + \int_0^{\ell} \left( \big( 1-f(\ell) \big) - \int_{\theta}^{\ell} \big( 1-f(y) \big) dy \right) e^{-\theta} d\theta 
    \geq \Gamma 
    ~.
\]

Simplifying the LHS through a sequence of equalities, we get:
\begin{align*}
    & \textstyle
    \int_0^{\ell} e^{-\theta} f(\theta) d\theta + e^{-\ell} \big( 1-f(\ell) \big) + \int_0^{\ell} \left( \big( 1-f(\ell) \big) - \int_{\theta}^{\ell} \big( 1-f(y) \big) dy \right) e^{-\theta} d\theta \\
    & \textstyle \qquad
    = \int_0^{\ell} e^{-\theta} f(\theta) d\theta + e^{-\ell} \big( 1-f(\ell) \big) + \int_0^{\ell} \big( 1-f(\ell) \big) e^{-\theta} d\theta - \int_0^\ell \int_\theta^\ell \big( 1-f(y) \big) dy ~ e^{-\theta} d\theta \\
    & \textstyle \qquad
    = \int_0^{\ell} e^{-\theta} f(\theta) d\theta + \big( 1-f(\ell) \big) - \int_0^\ell \int_\theta^\ell \big( 1-f(y) \big) dy ~ e^{-\theta} d\theta \\
    & \textstyle \qquad
    = \int_0^{\ell} e^{-\theta} f(\theta) d\theta + \big( 1-f(\ell) \big) - \int_0^\ell \int_0^y e^{-\theta} d\theta ~ \big( 1-f(y) \big) dy \\
    & \textstyle \qquad
    = \int_0^{\ell} e^{-\theta} f(\theta) d\theta + \big( 1-f(\ell) \big) - \int_0^\ell \big( 1 - e^{-y} \big) \big( 1-f(y) \big) dy \\
    & \textstyle \qquad
    = \int_0^{\ell} e^{-\theta} f(\theta) d\theta + \big( 1-f(\ell) \big) - \int_0^\ell \big( 1-f(y) \big) dy 
 + \int_0^\ell e^{-y} \big( 1-f(y) \big) dy \\
    & \textstyle \qquad
    = \int_0^\ell e^{-\theta} d\theta + \big( 1-f(\ell) \big) - \int_0^\ell \big( 1-f(y) \big) dy \\
    & \textstyle \qquad
    = 1 - e^{-\ell} + \big( 1-f(\ell) \big) - \int_0^\ell \big( 1-f(y) \big) dy 
    ~.
\end{align*}

Hence, it suffices to find a non-decreasing $f$ subject to the boundary conditions that $f(0) = 1 - \Gamma$ and $f(1) = 1 - \frac{1}{e}$ such that for any $0 \le \ell \le 1$, we have:
\begin{equation}
    \label{eqn:de-unequal-simplest}
    \textstyle
    1 - e^{-\ell} + \big( 1-f(\ell) \big) - \int_0^\ell \big( 1-f(y) \big) dy 
    \geq \Gamma 
    ~.
\end{equation}

Letting the above hold with equality for $0 \le \ell \le 1$, it is a standard differential equation that admits an explicit solution as follows:
%
\begin{equation}
    \label{eqn:f-unequal}
    f(x) = \begin{cases}
            1-\frac{1}{e} & \quad x > 1 \\
            1-\frac{1}{2} e^{-x} -\frac{1}{2e^2} e^x& \quad 0 \leq x \leq 1
        \end{cases}
\end{equation}

It satisfies Eqn.~\eqref{eqn:de-unequal-simplest} with equality for $\Gamma = 1-f(0) = \frac{1}{2}(1+e^{-2}) \approx 0.567$. 

\paragraph{Better Bound via a Computer-assisted Analysis.}
To further optimize the competitive ratio, we rely on a computer-assisted analysis as follows.
Let $\ell_\infty = 2$.
Further, for some appropriate function $g : \R^+ \mapsto \R^+$, relax the differential inequalities so that for any $0 \le \ell \le \ell_\max$, we have:
\begin{equation}
    \label{eqn:de-unequal-relaxed}
    \textstyle
    \int_0^{\ell} e^{-\theta} f(\theta) d\theta + e^{-\ell} \big( 1-f(\ell) \big) 
    + \int_{g(\ell)}^{\ell} \left( \big( 1-f(\ell) \big) - \int_{\theta}^{\ell} \big( 1-f(y) \big) dy \right) e^{-\theta} d\theta 
    \geq \Gamma 
    ~,
\end{equation}
subject to the boundary conditions that $f(0) = 1 - \Gamma$ and $f(\ell_\max) = 1 - \frac{1}{e}$.

How to find such a function $g$?
We take a iterative approach.
Let us start with some $f^0$, say, $f^0(\ell) = \frac{1}{2}$ for all $\ell \ge 0$.
Then, we compute $g^1$ to be the best response to $f^0$, i.e., for any $0 \le \ell \le \ell_\max$:
\[
    \textstyle
    1-f^0(\ell) = \int_{g^1(\ell)}^\ell \big( 1-f^0(y) \big) dy 
    ~.
\]
For the aforementioned choice of $f^0$, we simply have $g^1(\ell) = \max \big\{ 0, \ell - 1 \big\}$.

\begin{figure}
    \begin{subfigure}[b]{.5\textwidth}
        \includegraphics[width=\textwidth]{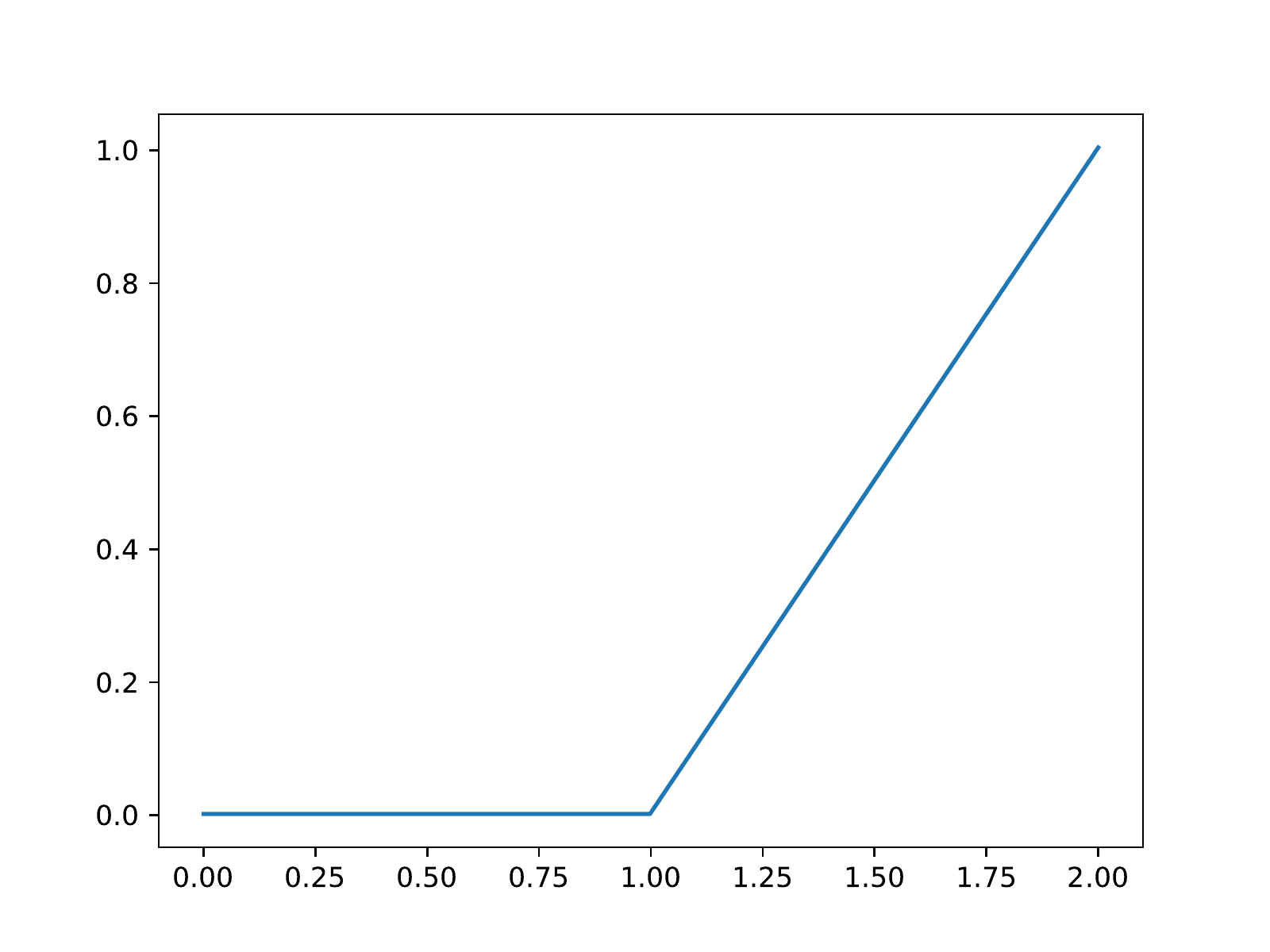}
        \caption{$g^1$}
    \end{subfigure}
    \begin{subfigure}[b]{.5\textwidth}
        \includegraphics[width=\textwidth]{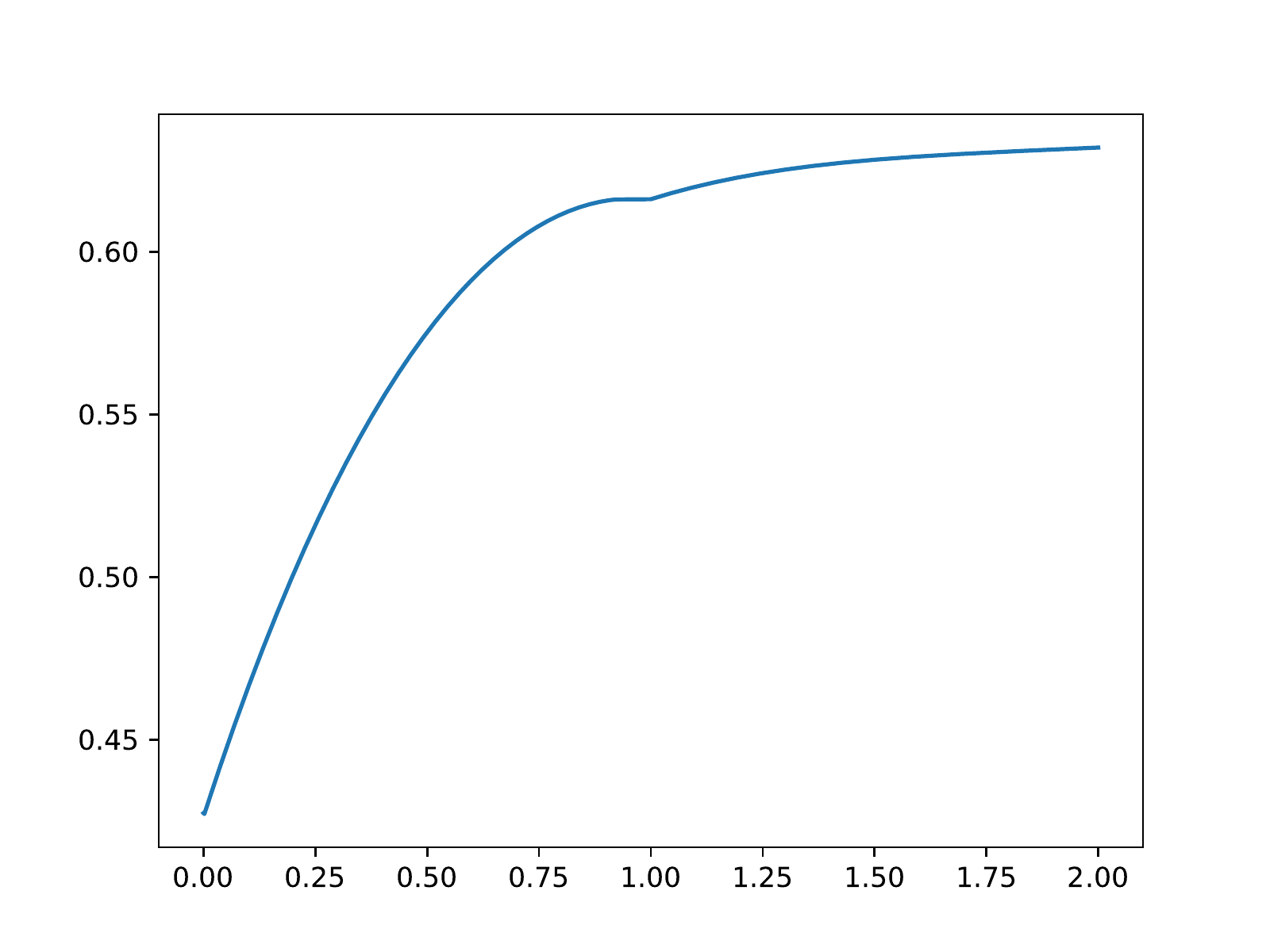}
        \caption{$f^1$}
    \end{subfigure}
    \begin{subfigure}[b]{.5\textwidth}
        \includegraphics[width=\textwidth]{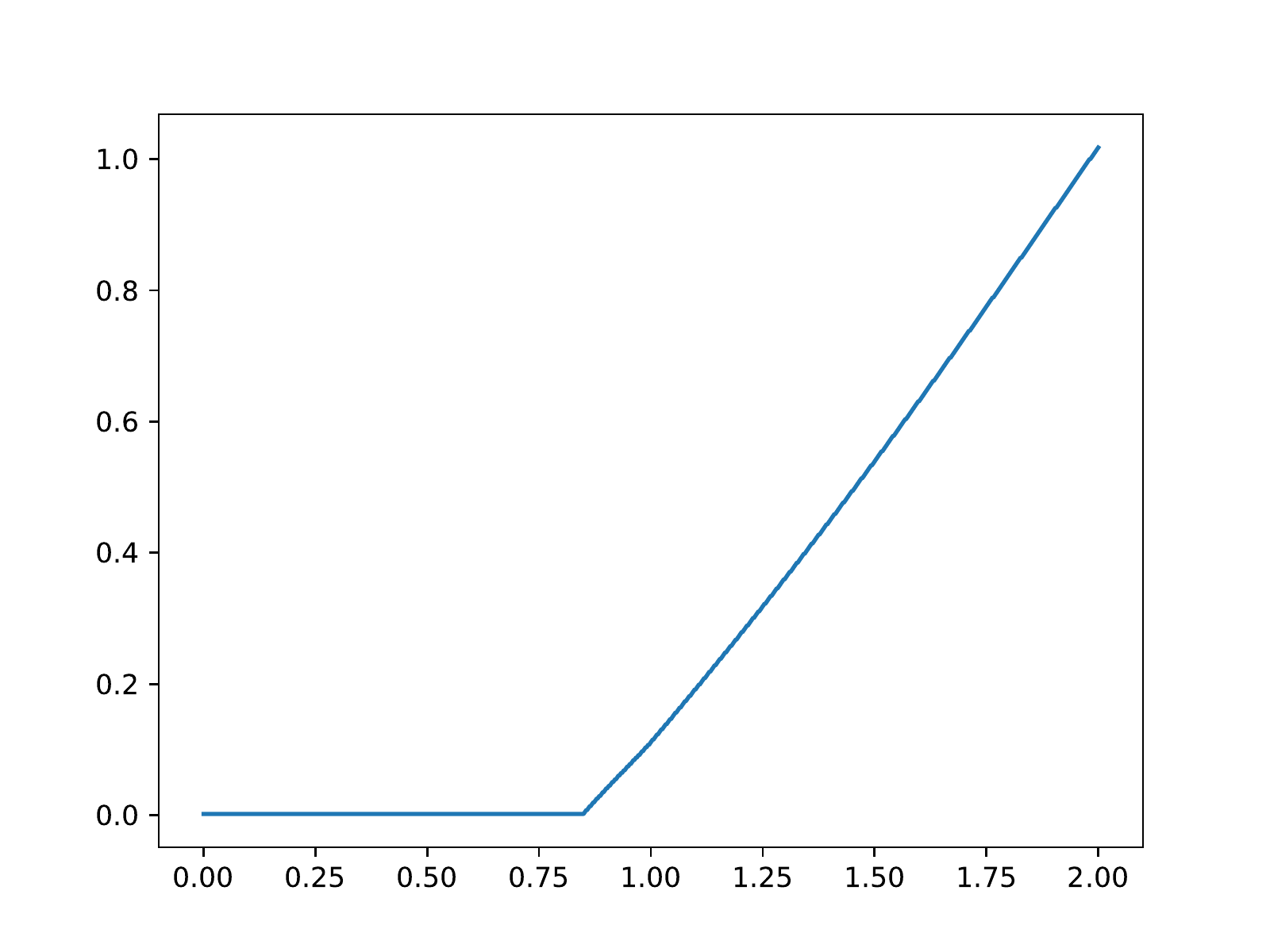}
        \caption{$g^2$}
    \end{subfigure}
    \begin{subfigure}[b]{.5\textwidth}
        \includegraphics[width=\textwidth]{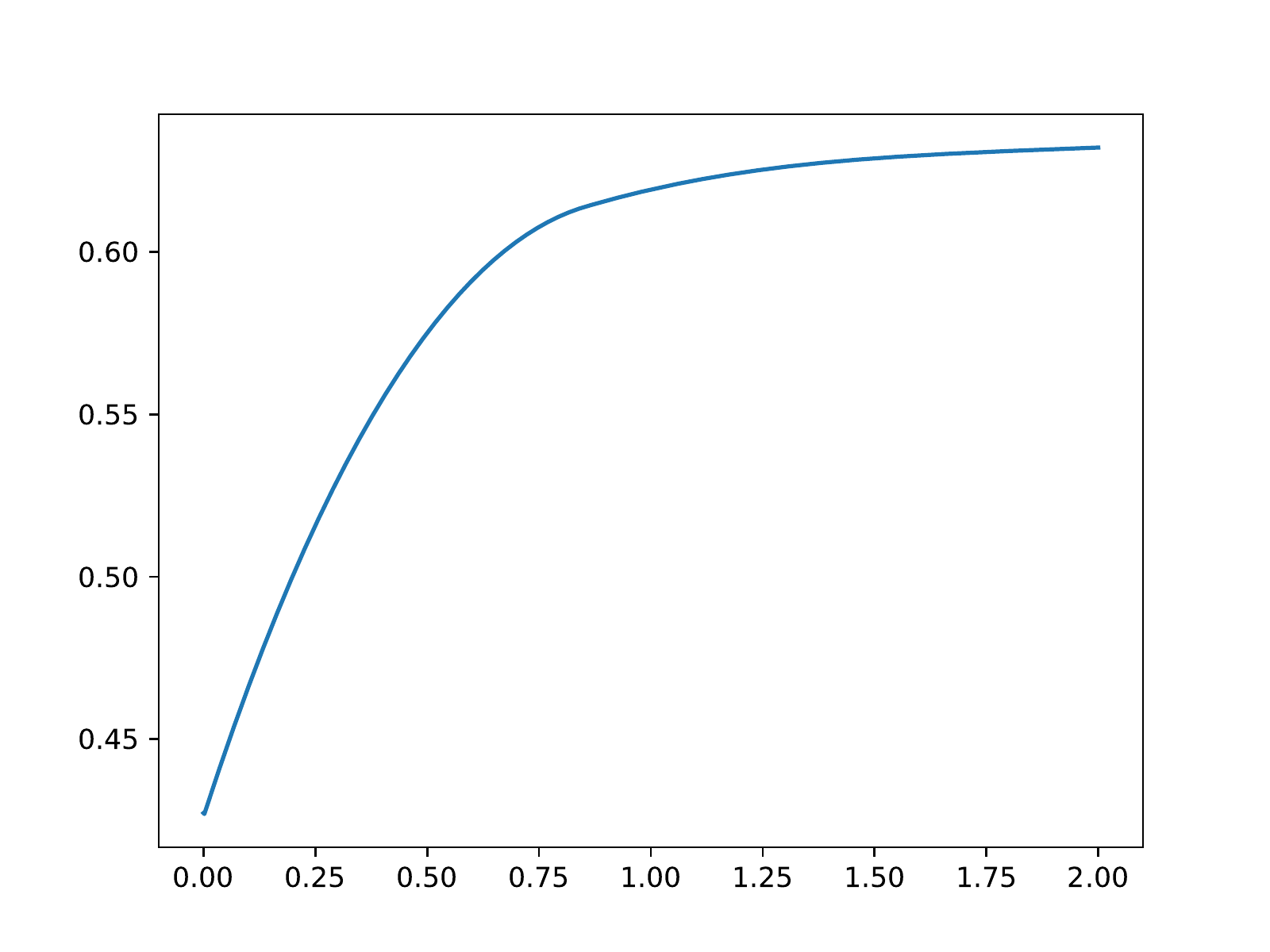}
        \caption{$f^2$}
    \end{subfigure}
    \begin{subfigure}[b]{.5\textwidth}
        \includegraphics[width=\textwidth]{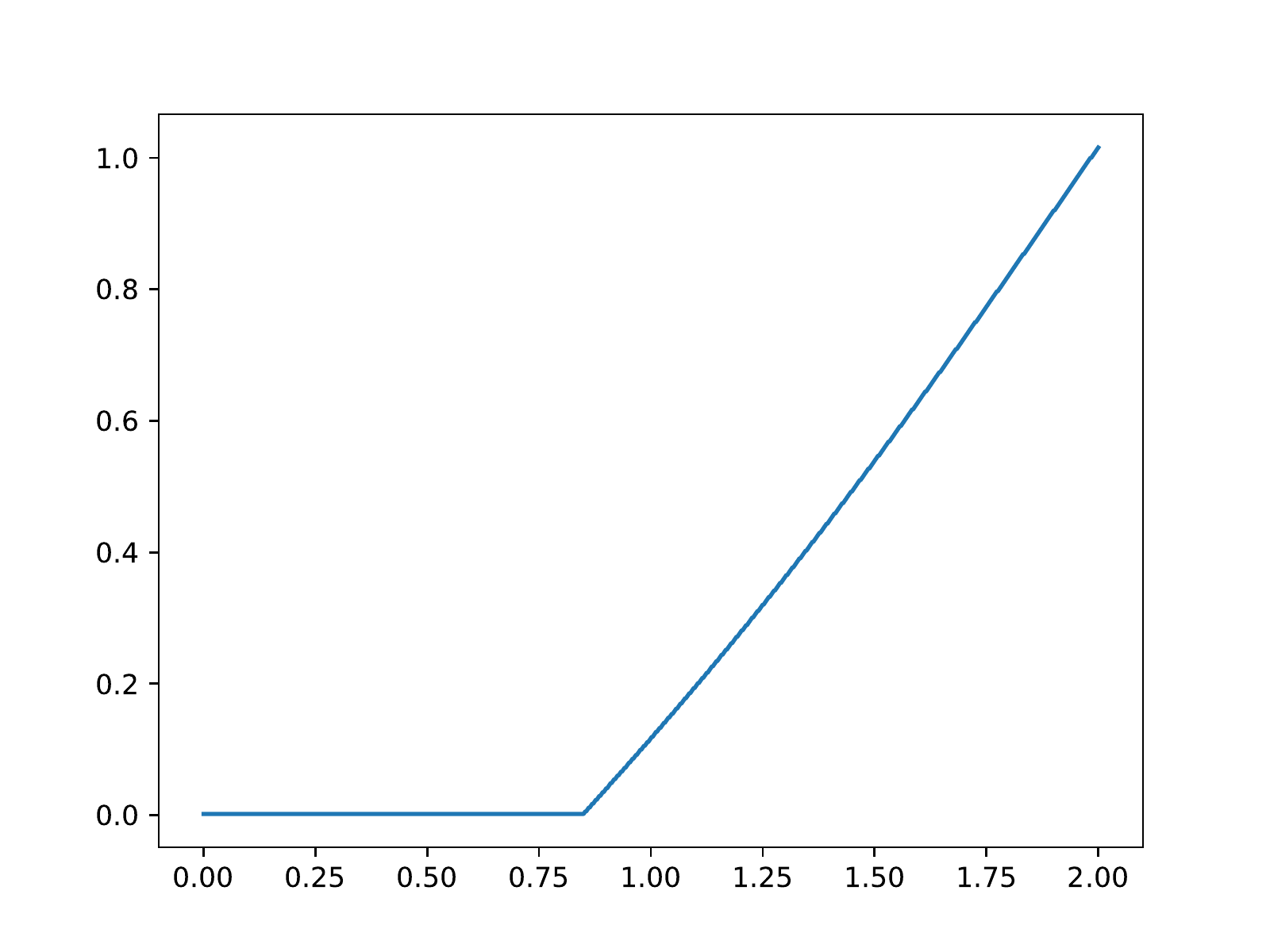}
        \caption{$g^3$}
    \end{subfigure}
    \begin{subfigure}[b]{.5\textwidth}
        \includegraphics[width=\textwidth]{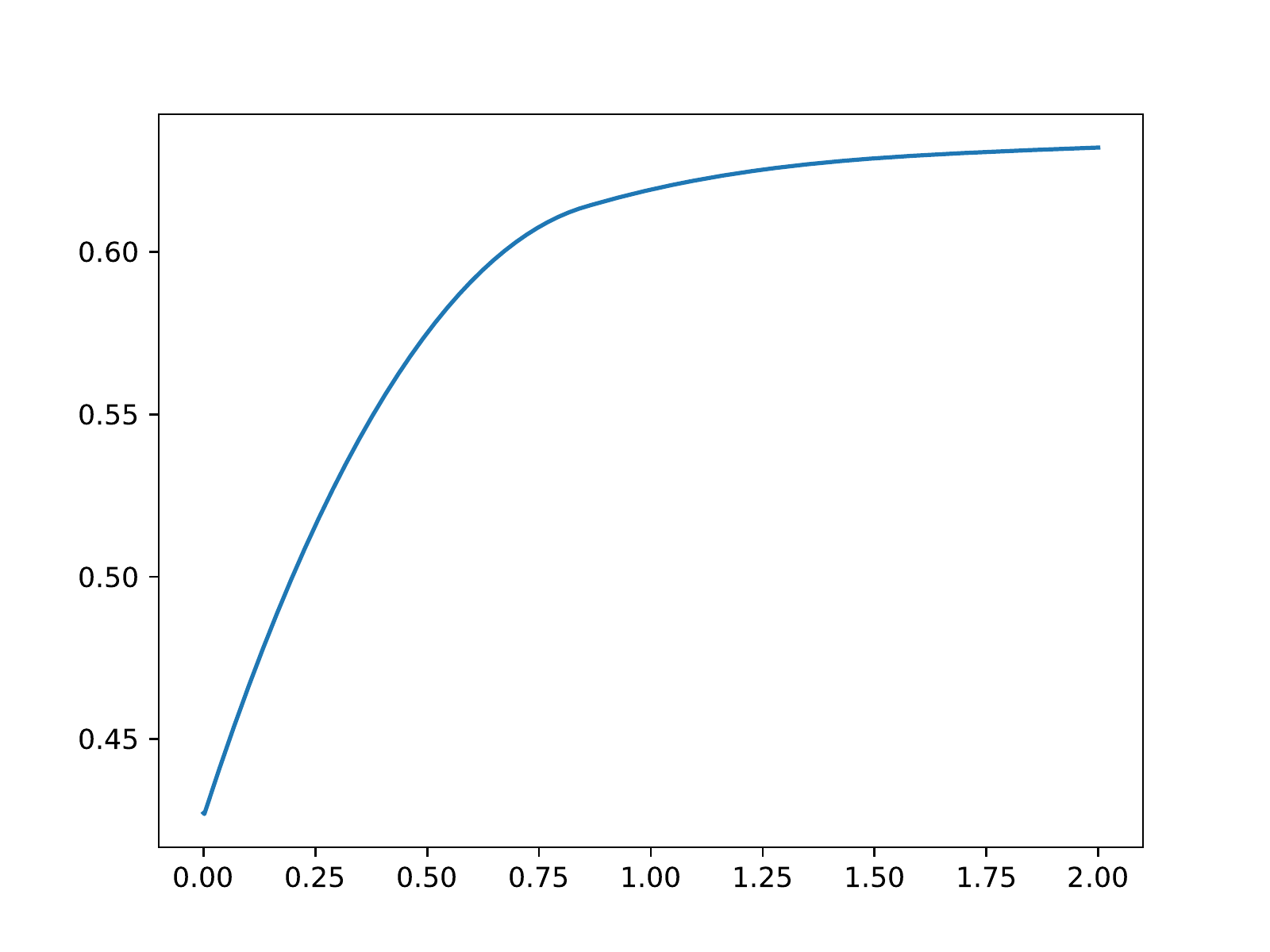}
        \caption{$f^3$}
    \end{subfigure}
    \caption{Results of the relaxed differential inequalities in the first three iterations of solving, with competitive ratios $\Gamma^1 \approx 0.5725971$, $\Gamma^2 \approx 0.5727709$, and $\Gamma^3 \approx 0.5727711$. Further optimizations do not improve the ratio in the first six digits.}
    \label{fig:iterative-de-solutions}
\end{figure}

Next, we find the optimal $f^1$ and the corresponding $\Gamma^1$a for the differential inequalities in Eqn.~\eqref{eqn:de-unequal-relaxed} w.r.t.\ $g^1$.

Further, we repeat this process and compute $g^2$ to be the best response to $f^1$, and then optimize $f^2$ and $\Gamma^2$ w.r.t.\ $g^2$, and so forth.
Throughout the process, we discretize both $f$ and $g$ appropriately so that the resulting competitive ratios $\Gamma^1$, $\Gamma^2$, etc.\ are truly lower bounds of the optimal competitive ratio.
This process converges after a few iterations.
See Figure~\ref{fig:iterative-de-solutions} for details.

\subsection{Alternating Path Argument: Failure Mode}
\label{sec:unequal-probabilities-alternating-path}

\begin{figure}
    \centering
    \includegraphics[width=.5\textwidth]{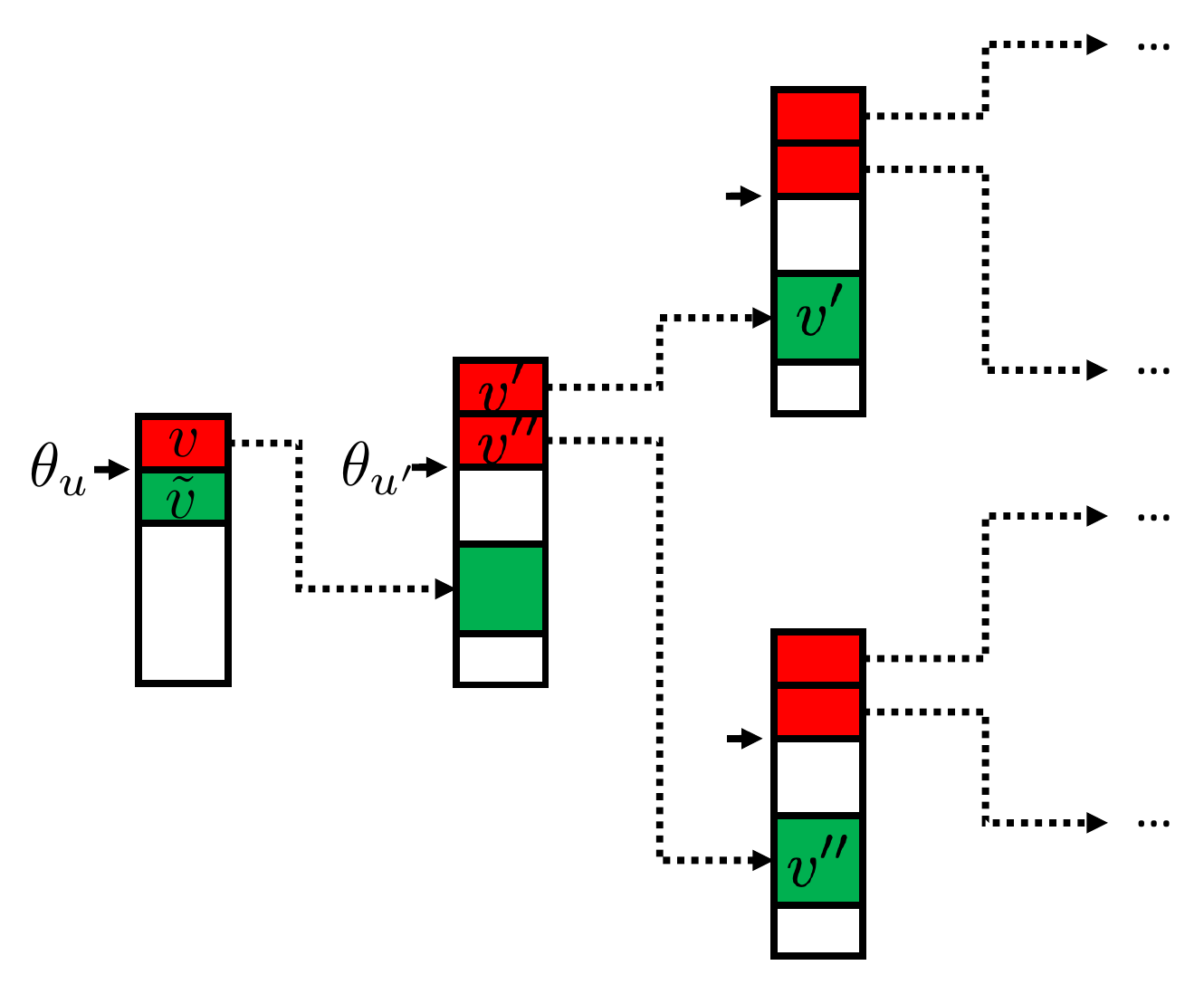}
    \caption{An illustrative picture of how the change of the match of a single online vertex can trigger a sequence of cascading subsequent changes.}
    \label{fig:alternating-path-failure}
\end{figure}

We start by recalling the bad instance for the alternating path argument.
Fix any offline vertex $u$ and the randomness related to the offline vertices other than $u$, i.e., their thresholds. 
First, consider the set of online vertices that are matched to $u$ if it never succeeds;
let $v$ and $\tilde{v}$ be the last and second last online vertices matched to $u$.
Then, consider what if $\tilde{v}$ succeeded and as a result $v$ could not be matched to $u$.
For concreteness, suppose $p_{uv} = \epsilon$ for some sufficiently small $\epsilon$.
It triggers a sequence of events:
\begin{enumerate}
    \item 
    First, $v$ may switch to another offline vertex, say $u'$, with whom its success probability is twice as large as that with $u$, i.e., $p_{u'v} = 2 \epsilon$.
    \item 
    Having $v$ matched to $u'$ increases the load of $u'$ by $2 \epsilon$.
    Further suppose that vertex $u'$ succeeds even before the change.
    Then, the last $2\epsilon$ amount of load that is matched to $u'$ before the change is must now be matched elsewhere.
    For concreteness, suppose this involves two vertices, say, $v'$ and $v''$, whose edges with $u'$ have success probabilities equal $\epsilon$.
    \item
    Each of $v'$ and $v''$ may not switch to another neighbor with whom they have an edge with success probabilities $2 \epsilon$.
    As a result, each of them forces a total load of $2\epsilon$, originally matched to the offline vertices they switch to, to be matched elsewhere. 
    \item This process may keep triggering more and more vertices to switch.
\end{enumerate}

Clearly, the changes triggered by the initial event that single online vertex, $v$, changes to another offline neighbor is no longer an alternating path.
Instead, the number of vertices as well as the sum of the success probabilities of their original matches grow exponentially as we go deeper into the sequence of events triggered by it.
Hence, the case of unequal probabilities cannot rely the invariants based on the alternating path argument.
See Figure~\ref{fig:alternating-path-failure} for an illustrative picture.

\section{Standard Matching LP}
\label{sec:std-lp}

This section explains in more details why a randomized online primal dual analysis using the standard matching LP fails. 
Let us consider the simplest version of the problem, i.e., the Stochastic Balance algorithm on unweighted instances with vanishing and equal probabilities.
We argue below that the randomized online primal dual, instantiated with the standard matching LP presented in Section~\ref{sec:model}, fails to give any competitive ratio that is better than the trivial one of $0.5$. 

Recall the standard matching LP and its dual:
\begin{align*}
    \textbf{StdLP:} \qquad \textrm{maximize} \quad & 
    \textstyle \sum_{(u,v) \in E} p_{uv} \cdot x_{uv} \\
	\textrm{subject to} \quad & 
    \textstyle \sum_{v : (u,v) \in E} p_{uv} \cdot x_{uv} \le 1 && \forall u \in U \\
	& 
    \textstyle \sum_{u : (u,v) \in E} x_{uv} \le 1 && \forall v \in V \\
    & x_{uv} \ge 0 && \forall (u, v) \in E
\end{align*}

The corresponding dual LP is as follows:
\begin{align*}
    \textbf{StdDual:} \qquad \textrm{minimize} \quad & 
    \textstyle \sum_{u \in U} \alpha_u + \sum_{v \in V} \beta_v \\
	\textrm{subject to} \quad & 
    \textstyle p_{uv} \cdot \alpha_u +  \beta_v \ge p_{uv} & & \forall (u,v) \in E \\
	& \alpha_u, \beta_v \ge 0 & & \forall u \in U, \forall v \in V
\end{align*}

Next, we explain the counterpart of Lemma~\ref{lem:randomized-primal-dual}, i.e., sufficient conditions that prove a competitive ratio of $\Gamma$ using the randomized online primal dual framework of \citet{DevanurJK/SODA/2013} introduced in Section~\ref{sec:prelimaries}. 
The proof is almost verbatim and therefore omitted.

\begin{lemma}
\label{lem:randomized-primal-dual-std-lp}
    An algorithm is $\Gamma$-competitive if:
    \begin{enumerate}
        \item \emph{Equal primal and dual objectives:} 
        \[
            \sum_{(u,v) \in E}  p_{uv} x_{uv} = \sum_{u \in U} \alpha_u + \sum_{v \in V} \beta_v 
            ~;
        \]
        \item \emph{Approximate dual feasibility:} For any $(u,v) \in E$, 
        \[
            \E \big[ p_{uv} \alpha_u +  \beta_v \big] \geq \Gamma \cdot p_{uv}
            ~.
        \]
    \end{enumerate}
\end{lemma}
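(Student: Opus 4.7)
The plan is to mirror the proof of Lemma~\ref{lem:randomized-primal-dual} almost line by line, adapting only the form of the dual constraint. First I would define the rescaled dual variables $\tilde{\alpha}_u = \Gamma^{-1} \cdot \E[\alpha_u]$ for every $u \in U$ and $\tilde{\beta}_v = \Gamma^{-1} \cdot \E[\beta_v]$ for every $v \in V$. Non-negativity of $\tilde{\alpha}_u$ and $\tilde{\beta}_v$ is immediate because the originals are non-negative random variables, so their expectations are non-negative.

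Next I would verify feasibility of $(\tilde{\alpha}, \tilde{\beta})$ in \textbf{StdDual}. For any edge $(u,v) \in E$, by linearity of expectation and the approximate dual feasibility condition,
\[
    p_{uv} \tilde{\alpha}_u + \tilde{\beta}_v
    = \Gamma^{-1} \cdot \E\big[p_{uv} \alpha_u + \beta_v\big]
    \ge \Gamma^{-1} \cdot \Gamma \cdot p_{uv}
    = p_{uv}
    ~,
\]
so the edge constraint of \textbf{StdDual} is satisfied. Thus $(\tilde{\alpha}, \tilde{\beta})$ is a feasible dual solution.

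Finally, I would relate the dual objective to the algorithm's expected performance. The equal primal and dual objectives condition, together with taking expectations and applying linearity, yields
\[
    \textstyle
    \sum_{u \in U} \tilde{\alpha}_u + \sum_{v \in V} \tilde{\beta}_v
    = \Gamma^{-1} \cdot \E\big[\sum_{u \in U} \alpha_u + \sum_{v \in V} \beta_v\big]
    = \Gamma^{-1} \cdot \E\big[\sum_{(u,v) \in E} p_{uv} x_{uv}\big]
    = \Gamma^{-1} \cdot \alg
    ~.
\]
By weak LP duality, the objective of any feasible dual solution is at least the optimal primal, which is the LP benchmark $\opt$. Rearranging yields $\alg \ge \Gamma \cdot \opt$, establishing the claimed competitive ratio.

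There is essentially no obstacle here: the argument is a straightforward transcription of the proof of Lemma~\ref{lem:randomized-primal-dual}, with the configuration-LP dual constraint $\alpha_u + \sum_{v \in S} \beta_v \ge w_u \cdot p_u(S)$ replaced by the edge-based constraint $p_{uv} \alpha_u + \beta_v \ge p_{uv}$, which is why the paper omits the proof as ``almost verbatim.''
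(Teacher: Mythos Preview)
Your proposal is correct and is exactly the verbatim transcription the paper alludes to: it reproduces the proof of Lemma~\ref{lem:randomized-primal-dual} with the configuration-LP dual constraint swapped for the edge constraint of \textbf{StdDual}, which is precisely why the paper omits the argument.
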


In the equal probabilities case, $p_{uv}=p$ for all edges $(u,v)$. Recall the Stochastic Balance algorithm matches the online vertex $v$ to the unsuccessful neighbor $u$ with the least current load $\ell_u$. 
The dual assignment scheme via the standard LP is exactly the same as that in Section~\ref{sec:equal-probabilities-dual-unweighed}, which we restate below ($f : \R^+ \mapsto \R^+$ is a non-decreasing function to be determined by the analysis):
\begin{enumerate}
    \item Initially, let $\alpha_u = 0$ for all $u \in U$.
    \item On the arrival of an online vertex $v \in V$ and, thus, the corresponding dual variable $\beta_v$:
    \begin{enumerate}
        \item If $v$ is matched to $u \in U$, increase $\alpha_u$ by $p f(\ell_u)$ and let $\beta_v = p \big( 1 - f(\ell_u) \big)$. 
        \item If $v$ has no unsuccessful neighbor, let $\beta_v = 0$.
    \end{enumerate}
\end{enumerate}

When $p$ tends to zero, we have:
\[
    \alpha_u = F(\ell_u) = \int_0^{\ell_u} f(z) dz
    ~.
\]

Further, since the gain sharing procedure is splitting the gain of $p$ from each matched edge to the dual variables of its two endpoints, the first condition in Lemma~\ref{lem:randomized-primal-dual-std-lp} holds by definition. 
Hence, it suffices to show the second condition holds with the desired competitive ratio $\Gamma$, i.e., for any edge $(u,v) \in E$, over the randomness of the thresholds,
\[
    \E \big[ p \alpha_u + \beta_v \big] \geq \Gamma \cdot p
    ~.
\]

Next, fix the thresholds of all offline vertices except $u$, i.e., $\vec{\theta}_{-u}$ and consider the randomness of $\theta_u$. 
Recall that $\ell_u^{\infty}$ denotes the total load of $u$ when $\theta_u = \infty$, i.e., when all matchings to $u$ fail. 
We have the following characterization of the matching related to the endpoints $u$ and $v$ of any given edge $(u, v) \in E$ for different realization of $\theta_u$.
\begin{itemize}
    \item \textbf{Case 1: $\theta_u \ge \ell_u^\infty$.~}
        In this case, the matching is the same as the $\theta_u = \infty$ case since $u$ has enough budget to accommodate all the load matched to it.  
    \begin{itemize}
        \item The load of $u$ equals $\ell_u^\infty$.
        \item All the neighbors of $u$ including $v$ are matched to offline neighbors with load at most $\ell_u^\infty$ at the time of the matches, since $u$ is available with load at most $\ell_u^\infty$.
    \end{itemize}
    \item \textbf{Case 2: $0 \le \theta_u < \ell_u^\infty$.~}
        In this case, $u$ can only accommodate the load up to its threshold. 
        Some online vertices that were matched to $u$ when $\theta_u = \infty$ need to be matched elsewhere.
    \begin{itemize}
        \item The load of $u$ equals $\theta_u$.
        %
        \item The online vertex $v$ may become unmatched due to the alternating path argument in Lemma~\ref{lem:equal-probabilities-structural}. 
    \end{itemize}
\end{itemize}

\paragraph{Failure of the Standard Matching LP.}
The characterization of the online side in the second case above is the main bottleneck of the analysis using the standard matching LP. 
Let us compare it with the one using the configuration LP in Section~\ref{sec:equal-probabilities}. 
The contribution from the online side above drops to zero as soon as $u$'s threshold $\theta_u$ falls below $\ell_u^\infty$ in the above characterization, while the counterpart in Section~\ref{sec:equal-probabilities} decreases gracefully and does not drop to zero until $\theta_u$ is much lower than $\ell_u^\infty$.
What happens in the analysis using the configuration LP can be viewed as an amortization among the subset of online vertices under consideration.

\bigskip

We next proceed with the argument using the standard matching LP and show that the above bottleneck leads to a competitive ratio of only $0.5$.

\paragraph{Contribution from $\alpha_u$.}
The lower bound of the contribution from the offline side is a verbatim copy of the counterpart in Section~\ref{sec:equal-probabilities}.
We reiterate the lemma below and omit the proof.

\begin{lemma}
    \label{lem:equal-probabilities-alpha-std-lp}
    For any thresholds $\vec{\theta}_{-u}$ of offline vertices other than $u$ and the corresponding $\ell_u^\infty$:
    \[
        \textstyle
        \E _{\theta_u} \big[ \alpha_u | \vec{\theta}_{-u} \big] \geq \int_0^{\ell_u^\infty} e^{-\theta_u} f(\theta_u) d\theta_u 
        ~.
    \]
\end{lemma}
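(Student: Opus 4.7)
The plan is to observe that this lemma is a verbatim restatement of Lemma~\ref{lem:alpha-equal-prob}, since the dual assignment scheme for $\alpha_u$ under the standard matching LP (Section~\ref{sec:std-lp}) is identical to the one used with the configuration LP (Section~\ref{sec:equal-probabilities-dual-unweighed}): in both, whenever Stochastic Balance matches an online vertex to $u$, $\alpha_u$ is incremented by $p f(\ell_u)$. Hence $\alpha_u = F(\ell_u)$ in the limit $p \to 0$, and the entire argument depends only on the load trajectory of $u$, not on the LP relaxation being used.

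With that in mind, I would first fix $\vec{\theta}_{-u}$ and recall the offline-side characterization: the load $\ell_u$ equals $\ell_u^\infty$ when $\theta_u \ge \ell_u^\infty$ (because $u$ has enough budget to absorb every match it receives in the $\theta_u = \infty$ execution), and equals $\theta_u$ when $0 \le \theta_u < \ell_u^\infty$ (because $u$ succeeds exactly at the moment its load reaches the budget). This characterization is independent of which online-side analysis (edge-by-edge for StdLP, or subset $S$ for ConfigLP) is being performed.

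Next I would take expectation over $\theta_u \sim \text{Exp}(1)$ using $\alpha_u = F(\ell_u)$, splitting into the two cases above:
\[
    \E_{\theta_u}\big[\alpha_u \,\big|\, \vec{\theta}_{-u}\big] \ge \int_0^{\ell_u^\infty} e^{-\theta_u} F(\theta_u)\, d\theta_u + e^{-\ell_u^\infty} F(\ell_u^\infty).
\]
Finally, I would apply integration by parts with $u_1 = F(\theta_u)$ and $dv_1 = e^{-\theta_u} d\theta_u$, using $F'(\theta_u) = f(\theta_u)$. The boundary term at $\theta_u = \ell_u^\infty$ cancels the $e^{-\ell_u^\infty} F(\ell_u^\infty)$ contribution, the boundary term at $\theta_u = 0$ vanishes since $F(0) = 0$, and we are left with $\int_0^{\ell_u^\infty} e^{-\theta_u} f(\theta_u)\, d\theta_u$, as desired.

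There is no real obstacle here, which is precisely the point of the section: the offline-side bound is unchanged between the StdLP and ConfigLP analyses, so the failure to reach a competitive ratio better than $0.5$ must be attributed entirely to the online-side contribution. For this reason the proof is safely omitted and the lemma is reiterated only to set up the subsequent contrast with the online-side bound.
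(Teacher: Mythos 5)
Your proof is correct and matches the paper's own approach exactly: the paper explicitly states that this lemma's proof is a verbatim copy of the proof of Lemma~\ref{lem:alpha-equal-prob}, which proceeds by the same case split on $\theta_u$ versus $\ell_u^\infty$, substitution of $\alpha_u = F(\ell_u)$, and integration by parts. You also correctly identify the reason the paper omits the proof, namely that the offline-side contribution is LP-agnostic and only the online-side bound changes between the StdLP and ConfigLP analyses.
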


\paragraph{Contribution from $\beta_v$.}
For the online side, we have the following lower bound. 
Interested readers may compare it with the one in Section~\ref{sec:equal-probabilities} to see how much weaker the bound becomes.
%

\begin{lemma}
    \label{lem:equal-probabilities-beta-std-lp}
    For any thresholds $\vec{\theta}_{-u}$ of offline vertices other than $u$ and the corresponding $\ell_u^\infty$:
    \[
        \textstyle
        \E _{\theta_u} \big[ \beta_v | \vec{\theta}_{-u} \big]
        \geq e^{-\ell_u^{\infty}} p \big( 1-f(\ell_u^{\infty}) \big)
        ~.
    \]
\end{lemma}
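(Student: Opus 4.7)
The plan is to simply integrate the characterization of Cases 1 and 2 just stated in the text against the density of $\theta_u$, which is the exponential distribution with mean $1$. The payoff of $\beta_v$ depends on where $v$ gets matched (if at all), and the bound will drop all contributions coming from Case 2, which is precisely the source of the weakness advertised in the paragraph titled ``Failure of the Standard Matching LP.''

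Concretely, first I would condition on Case 1, i.e., $\theta_u \geq \ell_u^\infty$, which has probability $e^{-\ell_u^\infty}$ under the exponential distribution. In this case, the execution coincides with the $\theta_u = \infty$ execution, so at the moment $v$ arrives, the offline vertex $u$ is still unsuccessful with load at most $\ell_u^\infty$. Hence Stochastic Balance matches $v$ to some unsuccessful neighbor $u'$ whose load $\ell_{u'}$ at that moment is at most $\ell_u^\infty$ (otherwise the algorithm would have preferred $u$ over $u'$). By the dual assignment rule, $\beta_v = p\bigl(1 - f(\ell_{u'})\bigr)$, which is at least $p\bigl(1 - f(\ell_u^\infty)\bigr)$ since $f$ is non-decreasing. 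In particular, $v$ is guaranteed to be matched in this case, so $\beta_v > 0$.

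Then I would handle Case 2, i.e., $0 \leq \theta_u < \ell_u^\infty$, by simply using $\beta_v \geq 0$, which holds by construction of the dual assignment. This is the step where the standard matching LP loses: unlike in the configuration LP analysis, we cannot amortize over a subset $S$ of online neighbors, and a single vertex $v$ might very well be driven to the unmatched state by the alternating path triggered when $\theta_u$ falls below $\ell_u^\infty$, yielding $\beta_v = 0$. Combining the two cases gives
\[
    \E_{\theta_u}\bigl[\beta_v \mid \vec{\theta}_{-u}\bigr] \geq e^{-\ell_u^\infty} \cdot p\bigl(1 - f(\ell_u^\infty)\bigr) + 0,
\]
which is exactly the claim.

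There is no real obstacle in the proof itself; the content is almost immediate from the case analysis already stated. The only subtle point worth double-checking is the monotonicity application: in Case 1, $v$'s matching partner $u'$ might be distinct from $u$, and one has to note that because $u$ with load at most $\ell_u^\infty$ was an available candidate for $v$, Stochastic Balance's least-load rule forces $\ell_{u'} \leq \ell_u^\infty$, after which non-decreasingness of $f$ delivers the bound on $\beta_v$.
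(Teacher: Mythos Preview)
Your proposal is correct and follows essentially the same argument as the paper's own proof: bound $\beta_v$ from below by $p(1-f(\ell_u^\infty))$ on the event $\theta_u \ge \ell_u^\infty$ (probability $e^{-\ell_u^\infty}$) using that $u$ is still available with load at most $\ell_u^\infty$, and by $0$ otherwise. Your write-up is in fact slightly more explicit than the paper's in spelling out why the least-load rule and the monotonicity of $f$ force $\beta_v \ge p(1-f(\ell_u^\infty))$ in Case~1.
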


\begin{proof}
    If $\theta_u \ge \ell_u^{\infty}$, which happens with probability $e^{-\ell_u^\infty}$, all the neighbors of $u$ including $v$ are matched to offline vertices with load at most $\ell_u^{\infty}$ at the time of the matches, which means that $\beta_v$ is at least:
    \[
        p \big( 1-f(\ell_u^{\infty}) \big)
        ~.
    \]
    
    If $ \theta_u < \ell_u^{\infty}$, $v$ might be left unmatched by Lemma~\ref{lem:equal-probabilities-structural}, thus leading to $\beta_v =0$.
    
    Putting together proves the lemma.
\end{proof}

Combining Lemma~\ref{lem:equal-probabilities-alpha-std-lp} and Lemma~\ref{lem:equal-probabilities-beta-std-lp}, we get,
\[
        \textstyle
        \E \big[p \alpha_u + \beta_v \big] \ge
        p \left( \int_0^{\ell_u^\infty} e^{-\theta_u} f(\theta_u) d\theta_u +  e^{-\ell_u^{\infty}} \big( 1-f(\ell_u^\infty) \big) \right)
        ~.
    \]

Hence, to show a competitive ratio $\Gamma$, it suffices to find a non-decreasing function $f$ that satisfies the following differential inequality.
For ease of notations, we write $\ell_u^\infty$ as $\ell$ and $\theta_u$ as $\theta$.
For any $\ell \ge 0$, we need:
\begin{equation}
    \label{eqn:de-unequal-std-lp}
    \textstyle
    \int_0^{\ell} e^{-\theta} f(\theta) d\theta + e^{-\ell} \big( 1-f(\ell) \big) 
    \geq \Gamma
    ~.
\end{equation}

Solving the above differential inequality to maximize $\Gamma$ gives $f(x) = 0.5$ for any $x \geq 0$ and $\Gamma = 0.5$. 
Thus, we only get a competitive ratio of $\Gamma = 0.5$.

\bibliographystyle{plainnat}
\bibliography{matching}

\begin{thebibliography}{27}
\providecommand{\natexlab}[1]{#1}
\providecommand{\url}[1]{\texttt{#1}}
\expandafter\ifx\csname urlstyle\endcsname\relax
  \providecommand{\doi}[1]{doi: #1}\else
  \providecommand{\doi}{doi: \begingroup \urlstyle{rm}\Url}\fi

\bibitem[Aggarwal et~al.(2011)Aggarwal, Goel, Karande, and
  Mehta]{AggarwalGKM/SODA/2011}
Gagan Aggarwal, Gagan Goel, Chinmay Karande, and Aranyak Mehta.
\newblock Online vertex-weighted bipartite matching and single-bid budgeted
  allocations.
\newblock In \emph{Proceedings of the 22nd Annual ACM-SIAM Symposium on
  Discrete Algorithms}, pages 1253--1264. SIAM, 2011.

\bibitem[Ashlagi et~al.(2018)Ashlagi, Burq, Dutta, Jaillet, Saberi, and
  Sholley]{AshlagiBDJSS/arXiv/2018}
Itai Ashlagi, Maximilien Burq, Chinmoy Dutta, Patrick Jaillet, Amin Saberi, and
  Chris Sholley.
\newblock Maximum weight online matching with deadlines.
\newblock \emph{arXiv preprint arXiv:1808.03526}, 2018.

\bibitem[Bahmani and Kapralov(2010)]{BahmaniK/ESA/2010}
Bahman Bahmani and Michael Kapralov.
\newblock Improved bounds for online stochastic matching.
\newblock In \emph{European Symposium on Algorithms}, pages 170--181. Springer,
  2010.

\bibitem[Birnbaum and Mathieu(2008)]{BirnbaumM/SIGACTNews/2008}
Benjamin~E Birnbaum and Claire Mathieu.
\newblock On-line bipartite matching made simple.
\newblock \emph{SIGACT News}, 39\penalty0 (1):\penalty0 80--87, 2008.

\bibitem[Brubach et~al.(2019)Brubach, Grammel, and
  Srinivasan]{brubach2019vertex}
Brian Brubach, Nathaniel Grammel, and Aravind Srinivasan.
\newblock Vertex-weighted online stochastic matching with patience constraints.
\newblock \emph{arXiv preprint arXiv:1907.03963}, 2019.

\bibitem[Buchbinder et~al.(2007)Buchbinder, Jain, and
  Naor]{BuchbinderJN/ESA/2007}
Niv Buchbinder, Kamal Jain, and Joseph~Seffi Naor.
\newblock Online primal-dual algorithms for maximizing ad-auctions revenue.
\newblock In \emph{European Symposium on Algorithms}, pages 253--264. Springer,
  2007.

\bibitem[Devanur and Hayes(2009)]{DevanurH/EC/2009}
Nikhil~R Devanur and Thomas~P Hayes.
\newblock The adwords problem: online keyword matching with budgeted bidders
  under random permutations.
\newblock In \emph{Proceedings of the 10th ACM Conference on Electronic
  Commerce}, pages 71--78. ACM, 2009.

\bibitem[Devanur and Jain(2012)]{DevanurJ/STOC/2012}
Nikhil~R Devanur and Kamal Jain.
\newblock Online matching with concave returns.
\newblock In \emph{Proceedings of the 44th Annual ACM Symposium on Theory of
  Computing}, pages 137--144. ACM, 2012.

\bibitem[Devanur et~al.(2013)Devanur, Jain, and Kleinberg]{DevanurJK/SODA/2013}
Nikhil~R Devanur, Kamal Jain, and Robert~D Kleinberg.
\newblock Randomized primal-dual analysis of ranking for online bipartite
  matching.
\newblock In \emph{Proceedings of the 24th Annual ACM-SIAM Symposium on
  Discrete Algorithms}, pages 101--107. SIAM, 2013.

\bibitem[Feldman et~al.(2009)Feldman, Mehta, Mirrokni, and
  Muthukrishnan]{FeldmanMMM/FOCS/2009}
Jon Feldman, Aranyak Mehta, Vahab Mirrokni, and S~Muthukrishnan.
\newblock Online stochastic matching: Beating $1-\frac{1}{e}$.
\newblock In \emph{Proceedings of the 50th Annual IEEE Symposium on Foundations
  of Computer Science}, pages 117--126. IEEE, 2009.

\bibitem[Goel and Mehta(2008)]{GoelM/SODA/2008}
Gagan Goel and Aranyak Mehta.
\newblock Online budgeted matching in random input models with applications to
  adwords.
\newblock In \emph{Proceedings of the 19th Annual ACM-SIAM Symposium on
  Discrete Algorithms}, pages 982--991. SIAM, 2008.

\bibitem[Goyal and Udwani(2019)]{goyal2019online}
Vineet Goyal and Rajan Udwani.
\newblock Online matching with stochastic rewards: Optimal competitive ratio
  via path based formulation.
\newblock \emph{arXiv preprint arXiv:1905.12778}, 2019.

\bibitem[Haeupler et~al.(2011)Haeupler, Mirrokni, and
  Zadimoghaddam]{HaeuplerMZ/WINE/2011}
Bernhard Haeupler, Vahab~S Mirrokni, and Morteza Zadimoghaddam.
\newblock Online stochastic weighted matching: Improved approximation
  algorithms.
\newblock In \emph{International Workshop on Internet and Network Economics},
  pages 170--181. Springer, 2011.

\bibitem[Huang et~al.(2018{\natexlab{a}})Huang, Kang, Tang, Wu, Zhang, and
  Zhu]{HuangKTWZZ/STOC/2018}
Zhiyi Huang, Ning Kang, Zhihao~Gavin Tang, Xiaowei Wu, Yuhao Zhang, and Xue
  Zhu.
\newblock How to match when all vertices arrive online.
\newblock In \emph{Proceedings of the 50th Annual ACM Symposium on Theory of
  Computing}, pages 17--29. ACM, 2018{\natexlab{a}}.

\bibitem[Huang et~al.(2018{\natexlab{b}})Huang, Tang, Wu, and
  Zhang]{HuangTWZ/ICALP/2018}
Zhiyi Huang, Zhihao~Gavin Tang, Xiaowei Wu, and Yuhao Zhang.
\newblock Online vertex-weighted bipartite matching: beating $1-\frac{1}{e}$
  with random arrivals.
\newblock In \emph{Proceedings of the 45th International Colloquium on
  Automata, Languages, and Programming}. Schloss Dagstuhl-Leibniz-Zentrum fuer
  Informatik, 2018{\natexlab{b}}.

\bibitem[Huang et~al.(2019)Huang, Peng, Tang, Tao, Wu, and
  Zhang]{HuangPTTWZ/SODA/2019}
Zhiyi Huang, Binghui Peng, Zhihao~Gavin Tang, Runzhou Tao, Xiaowei Wu, and
  Yuhao Zhang.
\newblock Tight competitive ratios of classic matching algorithms in the fully
  online model.
\newblock In \emph{Proceedings of the 30th Annual ACM-SIAM Symposium on
  Discrete Algorithms}, pages 2875--2886. SIAM, 2019.

\bibitem[Jaillet and Lu(2013)]{JailletL/MOR/2013}
Patrick Jaillet and Xin Lu.
\newblock Online stochastic matching: New algorithms with better bounds.
\newblock \emph{Mathematics of Operations Research}, 39\penalty0 (3):\penalty0
  624--646, 2013.

\bibitem[Kalyanasundaram and Pruhs(2000)]{KalyanasundaramP/TCS/2000}
Bala Kalyanasundaram and Kirk~R Pruhs.
\newblock An optimal deterministic algorithm for online $b$-matching.
\newblock \emph{Theoretical Computer Science}, 233\penalty0 (1-2):\penalty0
  319--325, 2000.

\bibitem[Karande et~al.(2011)Karande, Mehta, and Tripathi]{KarandeMT/STOC/2011}
Chinmay Karande, Aranyak Mehta, and Pushkar Tripathi.
\newblock Online bipartite matching with unknown distributions.
\newblock In \emph{Proceedings of the 43rd Annual ACM Symposium on Theory of
  Computing}, pages 587--596. ACM, 2011.

\bibitem[Karp et~al.(1990)Karp, Vazirani, and Vazirani]{KarpVV/STOC/1990}
Richard~M Karp, Umesh~V Vazirani, and Vijay~V Vazirani.
\newblock An optimal algorithm for on-line bipartite matching.
\newblock In \emph{Proceedings of the 22nd ACM Symposium on Theory of
  Computing}, pages 352--358. ACM, 1990.

\bibitem[Mahdian and Yan(2011)]{MahdianY/STOC/2011}
Mohammad Mahdian and Qiqi Yan.
\newblock Online bipartite matching with random arrivals: an approach based on
  strongly factor-revealing lps.
\newblock In \emph{Proceedings of the 43rd Annual ACM Symposium on Theory of
  Computing}, pages 597--606. ACM, 2011.

\bibitem[Manshadi et~al.(2012)Manshadi, Gharan, and
  Saberi]{ManshadiGS/MOR/2012}
Vahideh~H Manshadi, Shayan~Oveis Gharan, and Amin Saberi.
\newblock Online stochastic matching: Online actions based on offline
  statistics.
\newblock \emph{Mathematics of Operations Research}, 37\penalty0 (4):\penalty0
  559--573, 2012.

\bibitem[Mehta(2013)]{Mehta/FTTCS/2013}
Aranyak Mehta.
\newblock Online matching and ad allocation.
\newblock \emph{Foundations and Trends{\textregistered} in Theoretical Computer
  Science}, 8\penalty0 (4):\penalty0 265--368, 2013.

\bibitem[Mehta and Panigrahi(2012)]{MehtaP/FOCS/2012}
Aranyak Mehta and Debmalya Panigrahi.
\newblock Online matching with stochastic rewards.
\newblock In \emph{Proceedings of the 53rd IEEE Annual Symposium on Foundations
  of Computer Science (FOCS)}, pages 728--737. IEEE, 2012.

\bibitem[Mehta et~al.(2005)Mehta, Saberi, Vazirani, and
  Vazirani]{MehtaSVV/FOCS/2005}
Aranyak Mehta, Amin Saberi, Umesh Vazirani, and Vijay Vazirani.
\newblock Adwords and generalized on-line matching.
\newblock In \emph{Proceedings of the 46th Annual IEEE Symposium on Foundations
  of Computer Science}, pages 264--273. IEEE, 2005.

\bibitem[Mehta et~al.(2015)Mehta, Waggoner, and
  Zadimoghaddam]{MehtaWZ/SODA/2015}
Aranyak Mehta, Bo~Waggoner, and Morteza Zadimoghaddam.
\newblock Online stochastic matching with unequal probabilities.
\newblock In \emph{Proceedings of the 26th Annual ACM-SIAM Symposium on
  Discrete Algorithms}, pages 1388--1404. SIAM, 2015.

\bibitem[Shao(2000)]{Shao/JTP/2000}
Qi-Man Shao.
\newblock A comparison theorem on moment inequalities between negatively
  associated and independent random variables.
\newblock \emph{Journal of Theoretical Probability}, 13\penalty0 (2):\penalty0
  343--356, 2000.

\end{thebibliography}

\end{document}